\theoremstyle{plain}
\newtheorem{theorem}{Theorem}
\newtheorem{lemma}{Lemma}
\newtheorem{remark}{Remark}
\newtheorem{proposition}{Proposition}
\newcommand{\RR}{\mathbb{R}}
\newcommand{\EE}{\mathbb{E}}
\newcommand{\PP}{\mathbb{P}}
\newcommand{\ind}{\mathds{1}}
\newcommand{\stepa}[1]{\overset{\rm (a)}{#1}}
\newcommand{\stepb}[1]{\overset{\rm (b)}{#1}}
\newcommand{\stepc}[1]{\overset{\rm (c)}{#1}}
\newcommand{\step}[2]{\overset{\rm (#1)}{#2}}
\newcommand{\given}{{\,|\,}}
\newcommand{\biggiven}{\,\big{|}\,}
\newcommand{\Biggiven}{\,\Big{|}\,}
\def\@#1\@{\begin{align}#1\end{align}}
\def\$#1\${\begin{align*}#1\end{align*}}
\definecolor{myblue}{rgb}{.8, .8, 1}
\definecolor{mathblue}{rgb}{0.2472, 0.24, 0.6} 
\definecolor{mathred}{rgb}{0.6, 0.24, 0.442893}
\definecolor{mathyellow}{rgb}{0.6, 0.547014, 0.24}
\newcommand{\tR}{{\tilde{R}}}
\newcommand{\tT}{{\tilde{T}}}
\newcommand{\cD}{{\mathcal{D}}}
\newcommand{\cI}{{\mathcal{I}}}
\newcommand{\cK}{{\mathcal{K}}}
\newcommand{\cL}{{\mathcal{L}}}
\newcommand{\cN}{{\mathcal{N}}}
\newcommand{\cR}{{\mathcal{R}}}
\newcommand{\cS}{{\mathcal{S}}}
\newcommand{\cT}{{\mathcal{T}}}
\newcommand{\cX}{{\mathcal{X}}}
\newcommand{\cY}{{\mathcal{Y}}}
\def\##1\#{\begin{align}#1\end{align}}
\def\$#1\${\begin{align*}#1\end{align*}}
\renewcommand{\hat}{\widehat}
\renewcommand{\tilde}{\widetilde}
\newcommand{\hcS}{\hat{\cS}}
\newcommand{\hC}{\hat{C}}
\newcommand{\hcR}{\hat{\cR}}
\newcommand{\hS}{\hat{S}}
\newcommand{\cdc}{\cD_{\textnormal{calib}}}
\newcommand{\cdt}{\cD_{\textnormal{train}}}
\newcommand{\cdte}{\cD_{\textnormal{test}}}
\newcommand{\cic}{[n]}
\newcommand{\citest}{[m]}
\newcommand{\quant}{{\textnormal{Quantile}}}
\newcommand{\test}{{\textnormal{test}}}
\newcommand{\calib}{{\textnormal{calib}}}
\newcommand{\train}{{\textnormal{train}}}
\newcommand{\rand}{{\textnormal{rand}}}
\newcommand{\swap}[2]{{\textnormal{swap}(#1,#2)}}
\newcommand{\mfS}{\mathfrak{S}}
\newcommand{\topk}{\textnormal{topk}}
\newcommand{\cq}{\textnormal{cq}}
\newcommand{\jq}{\textnormal{jq}}
\newcommand{\mname}{{\textnormal{JOMI}}}
\newcommand{\fixed}{\textnormal{fixed}}
\newcommand{\bh}{\textnormal{BH}}
\newcommand{\ps}{\textnormal{ps}}
\newcommand{\etas}{\eta^{\swap{i}{j}}(y)}
\newcommand{\cp}{\textnormal{cp}}
\newcommand{\tS}{{\tilde{\mathcal{S}}}}
\long\def\comment#1{}
\newcommand{\printfnsymbol}[1]{%
  \textsuperscript{\@fnsymbol{#1}}%
}
\title{Confidence on the Focal: Conformal Prediction \\ with 
Selection-Conditional Coverage}
\author[1]{Ying Jin\thanks{Author names listed alphabetically.}} 
\author[2]{Zhimei Ren\printfnsymbol{1}}
\affil[1]{Data Science Initiative and Department of Health Care Policy, Harvard University}
\affil[2]{Department of Statistics and Data Science, University of Pennsylvania}
\date{}
\begin{document}
\maketitle
\begin{abstract}
  Conformal prediction builds marginally valid prediction intervals that 
  cover the unknown outcome of a randomly drawn test point  
  with a prescribed probability. 
  However, in practice, data-driven methods are often used to identify specific 
  test unit(s) of interest, requiring uncertainty quantification tailored to 
  these focal units.
  In such cases,  marginally valid conformal prediction intervals 
  may fail to provide valid coverage
  for the  focal unit(s)  
  due to selection bias.
  This paper presents a general framework for constructing a prediction set 
  with finite-sample exact coverage,  
  conditional on the unit being selected by a given procedure. 
  The general form of our method accommodates arbitrary selection rules that are invariant to the 
  permutation of the calibration units, 
  and generalizes Mondrian Conformal Prediction to multiple test units and 
  non-equivariant classifiers.
  We also work out computationally efficient 
  implementation of our framework for a number of realistic selection rules, 
  including top-K selection, optimization-based selection, selection based on conformal p-values,
  and selection based on properties of preliminary conformal prediction sets.
  The performance of our methods is demonstrated via applications in drug discovery 
  and health risk prediction. 
\end{abstract}
\section{Introduction}

Conformal prediction is a versatile framework for quantifying the uncertainty of any black-box prediction model, by issuing a prediction set that covers the unknown outcome with a prescribed probability. 
Formally, suppose the task is to predict an outcome $Y\in \cY$ 
based on features $X\in \cX$. 
Given a set of calibration data $\{(X_i,Y_i)\}_{i=1}^{n}$ 
and the features of a new test point $X_{n+1}$, 
conformal prediction builds upon a given prediction model and delivers a 
prediction set $\hat{C}_{\alpha, n+1}\subseteq \cY$ at level $\alpha\in (0,1)$, which obeys 
\#\label{eq:def_mgn_coverage}
\PP\big( Y_{n+1}\in \hat{C}_{\alpha,n+1} \big) \geq 1-\alpha,
\#
as long as $\{(X_i,Y_i)\}_{i=1}^{n+1}$ are exchangeable 
(e.g., when they are i.i.d.~samples).
The probability in~\eqref{eq:def_mgn_coverage} is over both the 
calibration data and the test point~\citep{vovk2005algorithmic,lei2018distribution}.  

With this finite-sample, distribution-free guarantee, 
the conformal prediction set $\hat{C}_{\alpha,n+1}$ describes a range of plausible values the unknown outcome $Y_{n+1}$ may take, thereby expected to inform downstream decision-making based on the black-box prediction model. 
With such a promise, methods for constructing \emph{marginally valid} -- in the sense of~\eqref{eq:def_mgn_coverage} --
prediction sets have been developed 
for various problems; see e.g.,~\cite{angelopoulos2021gentle} for a review. 


In many downstream applications, however, 
people are often only interested in a \emph{selective} subset of units. 
For example, practitioners may only act upon a unit if it 
exhibits an interesting property~\citep{olsson2022estimating,sokol2024conformalized,business_conformal},  
or focus only on a subset of test units picked by a
complicated data-dependent process  such as 
resource optimization~\citep{svensson2018maximizing,castro2012combined,kemper2014optimized,gocgun2014dynamic}.
It would be misleading for the practitioners if the prediction sets fail to deliver
the promised coverage guarantee for the selected unit(s).
Let us discuss a few applications where such cases may arise.  

\begin{itemize}
  \item In \emph{drug discovery}, an important task is to predict the binding affinity 
  of a drug candidate to a disease target, which informs
  subsequent  drug prioritization~\citep{laghuvarapu2024codrug}.
  Among many drug candidates, 
  scientists may only focus on those with highest predicted affinities, 
  or those selected by a false discovery rate (FDR)-controlling procedure~\citep{jin2022selection}, 
  or whose prediction sets only cover high values~\citep{svensson2017improving}, 
  or those optimizing resource usage~\citep{svensson2018maximizing}. 
  It may lead to a waste of resources if the 
  prediction sets for the selected drugs fail to cover the actual binding affinities with an exceedingly high chance.
  \item In \emph{business decision-making}, companies may take
  different inventory decisions based on whether a conformal prediction set
  suggests a strong demand or a weak demand~\citep{business_conformal}. 
  Similarly, it will be problematic if a strong-demand prediction cannot cover with at least  $1-\alpha$ of the time. 
  \item In \emph{disease diagnosis},~\cite{olsson2022estimating} suggest 
  human intervention if the prediction set for a disease status 
  is too large 
  (this implicitly declares confidence in small prediction sets; 
  see similar ideas in~\cite{ren2023robots,sokol2024conformalized}). 
  However, it would be concerning if, with more than a chance $\alpha$, the 
  small-sized prediction sets -- ``approved'' as confident -- miss the true disease status. 
  \item In \emph{healthcare management}, patients may be sent to different healthcare categories
  based a program that optimizes some performance measure 
  (such as waiting time) subject to certain constraints, such as budget, 
  capacity, or fairness~\citep{castro2012combined,kemper2014optimized,gocgun2014dynamic}. 
  The subset of patients in each category is therefore data-dependent.
\end{itemize}

In all these examples,  
it is highly desirable that a prediction set should cover 
the unknown outcome for a unit \emph{of interest} with a prescribed probability. 
This motivates 
a stronger, selection-conditional guarantee. 
Supposing there are $m\geq 1$ test units $\cD_\test = \{X_{n+j}\}_{j=1}^m$, 
we  aim for
\#\label{eq:def_sel_coverage}
\PP \big(Y_{n+j}\in \hat{C}_{\alpha,n+j}\given  j\in \cS(\cD_\calib,\cD_\test)\big) \geq 1-\alpha,
\#
where $\cD_\calib = \{(X_i,Y_i)\}_{i=1}^n$ is the calibration data, and 
 $\cS(\cdot,\cdot)$ is 
 a data-driven process 
 to decide the units of interest, which maps 
 the calibration and test data to a subset of $[m]:=\{1,\dots,m\}$. 
 Our target is similar in spirit to post-selection 
 inference~\citep{lee2016exact,markovic2017unifying,tibshirani2018uniform}, 
 but we consider predictive inference settings and develop distinct techniques; 
 see Supplementary Section~\ref{app:overview} for more discussion. 
 Throughout, we focus on settings where the prediction sets are 
 constructed after $\cS(\cD_\calib,\cD_\test)$ is determined.  
 By separating  the selection process from the (post hoc) uncertainty quantification step, 
 we leave the freedom of defining selection rules to the practitioners.

A prediction set with marginal validity~\eqref{eq:def_mgn_coverage} 
does not necessarily cover an unknown outcome \emph{conditional on} 
being of interest as in~\eqref{eq:def_sel_coverage}. 
Such a selection issue
has recently been raised in the literature of predictive inference: 
through analysis of a real drug 
discovery dataset, 
\citet[Section 1]{jin2022selection} demonstrate 
that more than $30\%$ 
of seemingly promising prediction sets (in the sense that 
$\hat{C}_{\alpha,n+j}=\{\text{active}\}$) miss the actual outcomes 
when the nominal marginal miscoverage rate 
is $\alpha=0.01$.
We shall see more examples of this issue in our numerical experiments 
with various selection processes.  
 New techniques are therefore needed for constructing prediction sets achieving~\eqref{eq:def_sel_coverage}. 

\subsection{Exchangeability via reference sets}

Accounting for  selection in conformal prediction 
is a delicate task since it breaks exchangeability. 
The core of conformal prediction is to leverage the exchangeability 
among the calibration and test data, 
such that their ``prediction residuals'', referred to as \emph{nonconformity scores}, 
are comparable in distribution (see Section~\ref{subsec:SCP} for more details).  
The  calibration scores therefore
inform the magnitude of uncertainty in a new test point~\citep{vovk2005algorithmic}. 
However, given a selection event, the calibration data are no longer exchangeable 
with the test point, leading to violation of the coverage guarantee mentioned above. 
In general, the selection-conditional distributions of these scores are complex 
since the selection event 
can be highly data-dependent.

Such a challenge motivates   
our new framework, 
named \underline{JO}int \underline{M}ondrian Conformal \underline{I}nference (JOMI), 
which builds prediction sets that achieve selection-conditional coverage, 
with~\eqref{eq:def_sel_coverage} as a special case. 
As visualized in Figure~\ref{fig:exch}, our key idea is to find a ``reference set'' ---  
a data-dependent subset of calibration data that 
remain exchangeable with respect to the new test point conditional on the selection event.   
This reference set thus provides calibrated quantification of 
uncertainty for a selected unit. 
The mechanism we devise accommodates arbitrary selection rules $\cS(\cD_\calib,\cD_\test)$ 
that are invariant to permutations of data in $\cD_\calib$, and  
can be computed efficiently for a wide range of commonly used selection rules.

\begin{figure}
  \centering 
  \includegraphics[width=0.85\linewidth]{./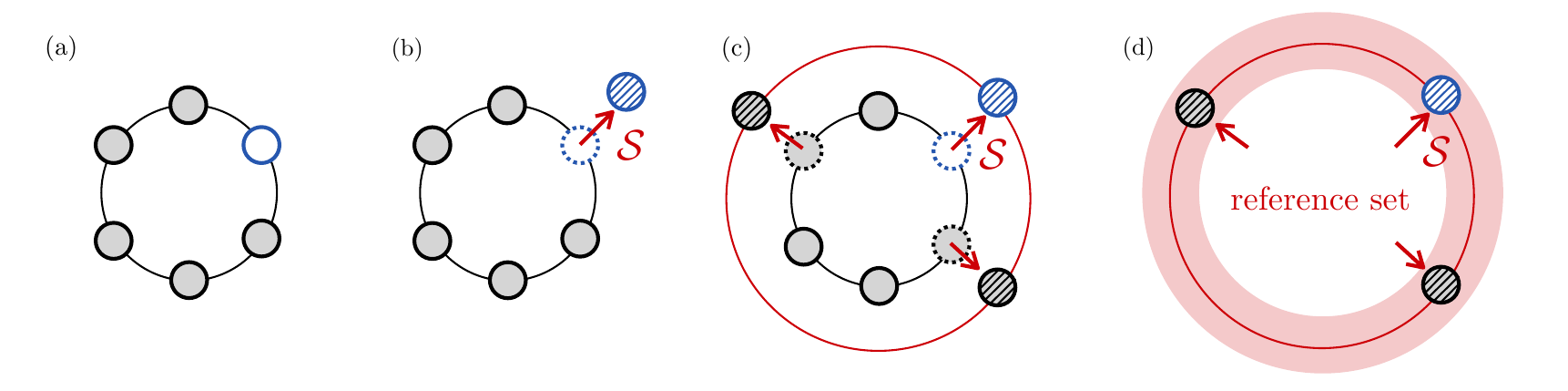}
  \caption{Visualization of the intuition behind the reference set. 
  (a) Marginally, the calibration data (black) are exchangeable with respect to the test point (blue). 
  (b) The calibration data are not exchangeable with respect to the test point (shaded  blue) given a selection event.
  (c) We find calibration data which, when posited as a ``test point'', would lead to the 
  same selection event. 
  (d) The reference set consists of calibration data that are exchangeable with respect to 
  the test point given selection, and we use them to construct JOMI prediction sets.}
  \label{fig:exch}
\end{figure}


We also note that the selection-conditional coverage~\eqref{eq:def_sel_coverage} may be implied by other 
stronger notions  such as 
 conditional coverage: $\PP(Y_{n+1}\in \hat{C}_{\alpha,n+1}\given X_{n+1}=x) \geq 1-\alpha$ for $\PP$-almost all $x\in \cX$. 
 Conditional coverage implies~\eqref{eq:def_sel_coverage} 
 if $\{(X_i,Y_i)\}_{i=1}^{n+1}$ are mutually independent 
 and the selection rule only depends on test features. 
 However, it is not achievable 
 by finite-length prediction intervals without distributional 
 assumptions~\citep{vovk2005algorithmic,foygel2021limits}, and  
practical selection rules may also depend on other information. 
Given these considerations, 
we may view~\eqref{eq:def_sel_coverage} 
 as a ``relaxed'' version of conditional coverage that is both achievable 
 and relevant for practical use.

\subsection{Preview of contributions}

In Section~\ref{sec:unified}, 
we introduce the general formulation of JOMI, 
including the construction of  reference set  and 
its use in deriving a selection-conditionally valid prediction set.  
We prove that under exchangeability, 
the prediction set $\hat{C}_{\alpha,n+j}$ produced by JOMI covers $Y_{n+j}$ 
with probability at least $1-\alpha$ conditional on a selection event.
The selection event can be ``test unit $j$ is selected'', 
which leads to~\eqref{eq:def_sel_coverage}; 
it can also be more granular, such as ``test unit $j$ is selected, 
and there are $k$ selected test units''. 
This framework is valid for any selection rule $\cS(\cD_\calib,\cD_\test)$ that 
is permutation-invariant to $\cD_\calib$, 
without any modeling assumptions on the data generating process.  

In Section~\ref{sec:compute}, we study the computational aspect of 
JOMI. We show that when $|\cY| < \infty$, our method can be computed with 
a worst-case complexity of $O(|\cY|mn)$ 
times that of the selection rule. Moreover, for general continuous outcome space $\cY$, we work out 
efficient implementation of JOMI for a number of 
selection processes that may be of practical interest, including:
\begin{itemize}
  \item \emph{Covariate-dependent selection.} When the selection rule 
  does not involve $\{Y_i\}_{i=1}^n$, the computation complexity of our generic method 
  is (at most) $O(mn)$ times that of the selection rule. 
  This implies efficient implementation for a wide range of problems, 
  including various forms of top-K selection (for which the computation can be further reduced to $O(m+n)$) and selection based on 
  complicated constrained optimization programs. 
  As a special case, we recover the method of \cite{bao2024selective} 
  for top-K selection among test data, and provide valid solutions 
  to other ranking-based selection methods attempted in their work. 
  \item \emph{Conformal p-value-based selection.} 
  We derive efficient 
  implementation for a class of selection rules based on thresholding 
  conformal p-values with ``stopping time-type'' cutoffs. 
  For instance, our prediction set is finite-sample exactly valid for 
  units selected by the Conformal Selection method~\citep{jin2022selection}, 
  which is studied in~\cite{bao2024selective} 
  with approximate FCR guarantee. 
  \item \emph{Selection based on preliminary conformal prediction sets.} 
  In addition, we derive a general, efficient 
  instantiation when selecting units whose marginal 
  prediction sets demonstrate certain interesting properties, such as 
  being of a short length, or having a lower bound above some threshold. 
  Our method can be useful for re-calibrating the uncertainty quantification 
  of such seemingly promising units. 
\end{itemize}

Finally, we demonstrate the application of our methods 
via several realistic selection rules that may occur 
in drug discovery (Section~\ref{sec:drug}) 
and health risk prediction (Section~\ref{sec:icu}). 
Our results show that marginal prediction sets may undercover or overcover the selected units, 
while our methods always achieve the promised coverage guarantees. 
 
Due to space constraints, we delegate a comprehensive literature review to Supplementary 
Section~\ref{app:overview}. We end this section with some useful notations.

\paragraph{Notations.} 
For a positive integer $n\in \mathbb{N}^+$, write $[n]:=\{1,\dots,n\}$.    
We write the data pair as $Z=(X,Y)$, so that  
the calibration data is 
$\cD_\calib = \{Z_i\}_{i=1}^n$.
For any $j\in[m]$, 
we define the augmented calibration set as $\cD_j = \cdc \cup \{Z_{n+j}\}$ and 
the remaining test set as $\cD^c_j = \cdte \backslash 
\{X_{n+j}\}$. 
The unordered set of $\cD_j$ is denoted  $[\cD_j] = [Z_1,\dots,Z_n,Z_{n+j}]$, 
which provides the order statistics. 
 
\section{Problem setup}


Following the split conformal prediction 
framework~\citep{vovk2005algorithmic,lei2018distribution}, 
we build our prediction sets based on a prediction model 
fitted on a training set $\cD_\train$,  
assuming $\cD_\train$ is independent of the calibration and test data. 
In what follows, we always 
condition on $\cD_\train$, thereby treating the fitted models as fixed. 
In this section, we formally introduce the selection-conditional coverage 
guarantees  and compare them to other related notions in the literature.

\subsection{Selection-conditional coverage}

Recall that $j\in[m]$  is 
a focal unit if $j\in \hat\cS$, where 
$\hat\cS = \cS(\cD_\calib,\cD_\test) \subseteq [m]$ is obtained from a 
selection rule $\cS$ that depends on the observed data. 
Its potential dependence on $\cD_\train$ is clear 
since we treat $\cD_\train$ as fixed. 
Without loss of generality, we posit that $\cS$ is a deterministic function; 
when the selection rule is randomized, one can condition on its randomness 
and follow our framework. 

For each test unit $j$, we wish to 
construct a prediction set $\hC_{\alpha,n+j}\subseteq \cY$ 
such that  
\@\label{eq:def_cond_cov_set}
\PP\big(Y_{n+j} \in \hC_{\alpha,n+j} \biggiven 
j\in \hat\cS,~\hat\cS \in \mathfrak{S}\big) \geq 1-\alpha,
\@
where $\alpha\in(0,1)$ is the confidence level,
and  $\mathfrak{S} \subseteq 2^{[m]}$ is some pre-specified collection of subsets 
of $[m]$. We call $\mfS$ the {\em selection taxonomy} in what follows. 
Different choices of $\mfS$ lead to a spectrum of granularity in 
the conditioning event.  
For instance, taking $\mfS = 2^{[m]}$ puts no restrictions on the selection set, 
giving rise to the guarantee~\eqref{eq:def_sel_coverage} 
introduced in the beginning (coverage conditional on a unit being selected). 
Taking $\mathfrak{S} = \{S\subseteq [m] \colon |S|=r\}$ for some $r\leq m$ 
achieves coverage conditional on selecting a specific number of units. 
Finally, 
taking $\mathfrak{S} = \{S_0\}$ for some $S_0\subseteq [m]$ 
achieves coverage conditional on selecting a specific set; 
this is similar to the coverage guarantee conditional on a selected model in~\citet{lee2016exact}
for high-dimensional parameter inference.  

In words, the selection-conditional coverage guarantee~\eqref{eq:def_cond_cov_set} 
can be interpreted as follows: 
imagine there are infinitely many independent realizations of $\cD_\calib$ and $\cD_\test$; 
among those realizations where the selection event of interest happens, the prediction set 
$\hC_{\alpha,n+j}$ will cover the true outcome for at least $1-\alpha$ fraction of times.
Here, $\mathfrak{S}$ is introduced to provide practitioners with the 
language to specify the granularity of the conditioning event, allowing 
them to tailor the guarantee to their specific needs. Moreover, it 
allows us to properly describe the relationship between selection-conditional coverage
and FCR control, as we will see below.

\subsection{Relations among notions of selective coverage}
\label{subsec:compare}

Before introducing our methods, 
we take a moment to compare different notions of selective coverage. 
Readers who are more interested in our methodology 
may skip the remaining of this section. 

Our first observation is that~\eqref{eq:def_cond_cov_set} implies~\eqref{eq:def_sel_coverage} under appropriate conditions. 
The proof of the next proposition is in Supplementary Section~\ref{app:strong_weak}. 


\begin{proposition}\label{prop:strong_weak}
Suppose a family of prediction sets $\{\hC_{\alpha,n+j}^{(\ell)}\}_{\ell\in \cL}$  
satisfy $\PP (Y_{n+j} \in \hC_{\alpha,n+j}^{(\ell)} \given 
j\in \hat\cS,~\hat\cS \in \mathfrak{S}_\ell ) \geq 1-\alpha$
for a set of  disjoint 
taxonomies $\{\mfS_\ell\}_{\ell\in\cL}$ such that $\cup_{\ell\in \cL}\mfS_\ell = 2^{[m]}$.
Define the prediction set $\hat{C}_{\alpha,n+j} = \hat{C}_{\alpha,n+j}^{(\ell)}$ when 
$j\in \hat\cS$ and $\hat\cS\in \mfS_\ell$. Then $\PP(Y_{n+j} \in \hC_{\alpha,n+j}  \given 
j\in \hat\cS ) \geq 1-\alpha$.
\end{proposition}

To distinguish the selection-conditional coverage in~\eqref{eq:def_cond_cov_set}
and that in~\eqref{eq:def_sel_coverage}, we refer to~\eqref{eq:def_cond_cov_set} as 
{\em strong} selection-conditional coverage and~\eqref{eq:def_sel_coverage} as {\em weak} selection-conditional coverage.

Another widely used post-selection guarantee is the 
false coverage rate (FCR)~\citep{benjamini2005false}, 
defined as the expected proportion of selected units missed by the
prediction set:
\@ \label{eq:def_fcr}
\text{FCR} \,:= \, \EE\bigg[\frac{\sum_{j\in [m]}\ind\{j\in \hcS, Y_{n+j} 
\notin \hC_{\alpha,n+j}\}}{|\hcS| \vee 1}\bigg],  
\@
where $a \vee b = \max\{a,b\}$ for any $a,b\in \RR$. 
Previous works~\citep{weinstein2020online,bao2024selective,gazin2024selecting}
mainly focus on constructing prediction sets with FCR control.
However, with $m=1$, it holds that 
$\text{FCR} = \PP(j\in \hat{\cS},Y_{n+j}\notin \hat{C}_\alpha(X_{n+j}))\leq \alpha$ 
for any marginally valid prediction set. 
Thus, FCR does not always address the selection issue.

The following proposition shows that strong selection-conditional coverage 
implies FCR control 
with a proper choice of selection taxonomy, whose
proof is in Supplementary Section~\ref{appd:proof_prop_notions}.

\begin{proposition}
\label{prop:notions}
Suppose a family of prediction sets $\{\hC_{\alpha,n+j}^{(\ell)}\}_{\ell\in \cL}$ satisfy $\PP (Y_{n+j} \in \hC_{\alpha,n+j}^{(\ell)} \given 
j\in \hat\cS,~\hat\cS \in \mathfrak{S}_\ell ) \geq 1-\alpha$
for a set of disjoint selection 
taxonomies $\{\mfS_\ell\}_{\ell\in\cL}$ such that $\cup_{\ell\in \cL}\mfS_\ell = 2^{[m]}$, 
and for each $\ell\in \cL$,  
$\mfS_\ell \subseteq \{S\subseteq[m]\colon |S|=r(\ell)\}$ for some $0\leq r(\ell) \leq m$. 
Define the prediction set $\hat{C}_{\alpha,n+j} = \hat{C}_{\alpha,n+j}^{(\ell)}$ when $j\in \hat{\cS}$ and $\hat\cS\in \mfS_\ell$. 
Then its FCR~\eqref{eq:def_fcr} is upper bounded by 
$\alpha \cdot \PP(\hat{\cS}\neq \varnothing)\leq \alpha$.
\end{proposition} 


The selection taxonomies in Proposition~\ref{prop:notions} require the selection set to be of a specific size, without other conditions on its form. This can be automatically satisfied by some selection rules such as top-K selection. 
Extending Proposition~\ref{prop:notions} to non-exact scenarios yields useful insights in practice. For instance, consider the family of taxonomies with $\mfS_\ell\subseteq \{S\subseteq[m]\colon r(\ell)-\delta \le |S| \leq \delta + r(\ell)\}$ for some small $\delta\in \mathbb{N}^+$. When the selection size is stable, weak selection-conditional coverage implies  $\PP (Y_{n+j} \in \hC_{\alpha,n+j}  \given 
j\in \hat\cS,~\hat\cS \in \mathfrak{S}_\ell ) \approx 1-\alpha$, which leads to  $\textrm{FCR} \approx \alpha\cdot \PP(\hat{\cS}\neq \varnothing)$. This approximate result helps explain several phenomena in our numerical experiments: first, the FCR is usually controlled empirically by our method; second, we sometimes observe significantly lower FCR than the selection-conditional coverage when $\PP(\hat{\cS} = \varnothing)$ is moderately large. 
As we shall see later, a version of our proposed method achieves coverage guarantees conditional on being selected and the selection size,
which requires checking a slightly more complex condition in the reference set construction when the selection size varies.
The computation of such prediction sets can be done efficiently for {\em all} the instances provided 
in the paper, with the corresponding worst-case computational complexity explicitly stated in each section. 
One potential concern, though, is that the additional condition on the selection size may reduce 
the number of calibration data points in the reference set, especially when the selection step is highly variable. 
This could potentially lead to wider and/or unstable prediction intervals.

As a side note,  
the weak selection-conditional coverage does not necessarily imply FCR control, 
although in some special cases both can be true (see, e.g., \cite{bao2024selective} 
for such examples).  
We put this as a proposition below, with a counterexample given 
in Supplementary Section~\ref{app:proof_weak_FCR}.

\begin{proposition}\label{prop:weak_FCR}
There exists an instance 
and prediction sets $\{\hC_{\alpha,n+j}:j \in \hcS\}$ that satisfy the weak selection-conditional coverage 
at level $\alpha$ but violate the FCR control at level $\alpha$.
\end{proposition}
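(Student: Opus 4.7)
The plan is to exhibit an explicit counterexample; since Proposition~\ref{prop:weak_FCR} is a pure existence statement, it suffices to construct a single instance. The guiding intuition is that the weak selection-conditional coverage averages miscoverage events uniformly over selection outcomes containing unit $j$, whereas $\fcr$ weights each outcome by $1/|\hat{\cS}|$. Concentrating miscoverages into small selection sets can therefore inflate $\fcr$ without violating the per-unit conditional coverage.

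Concretely, I would take $m=2$ and introduce an auxiliary latent outcome $W \in \{A,B,C\}$ with $\PP(W=A) = \PP(W=B) = \epsilon$ and $\PP(W=C) = 1 - 2\epsilon$, and let the test outcomes be deterministic (the calibration set can be taken empty, as the proposition imposes no exchangeability or other structural requirement). Define the selection rule and prediction sets so that on $\{W=A\}$ we have $\hat{\cS} = \{1\}$ and $Y_{n+1} \notin \hat{C}_{\alpha,n+1}$; on $\{W=B\}$ we have $\hat{\cS} = \{2\}$ and $Y_{n+2} \notin \hat{C}_{\alpha,n+2}$; and on $\{W=C\}$ we have $\hat{\cS} = \{1,2\}$ with both outcomes covered.

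Two routine calculations then finish the verification. For weak coverage on unit $1$, the conditional miscoverage probability is $\PP(W=A)/\PP(1\in\hat{\cS}) = \epsilon/(1-\epsilon)$, and the same value holds for unit $2$ by symmetry; both are at most $\alpha$ whenever $\epsilon \leq \alpha/(1+\alpha)$. For $\fcr$, the ratio $\sum_j \ind\{j\in\hat{\cS},\,Y_{n+j}\notin \hat{C}_{\alpha,n+j}\}/(|\hat{\cS}|\vee 1)$ equals $1$ on $\{W=A\}\cup\{W=B\}$ and $0$ on $\{W=C\}$, giving $\fcr = 2\epsilon$, which exceeds $\alpha$ as soon as $\epsilon > \alpha/2$. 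Since $\alpha/(1+\alpha) > \alpha/2$ for every $\alpha \in (0,1)$, any $\epsilon$ in the nonempty interval $(\alpha/2,\, \alpha/(1+\alpha)]$ witnesses the proposition.

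The main obstacle is not the computation itself but the design of the example: one must recognize that a single-unit selection event with one miscoverage contributes a full $1$ to the $\fcr$ numerator but only a probability slice of size $\epsilon$ to the denominator of the per-unit conditional probability. Once this asymmetry is identified, the two-branch construction above is essentially the minimal instance that exploits it, and the rest of the argument reduces to the \emph{elementary} algebraic comparison of $\alpha/2$ and $\alpha/(1+\alpha)$.
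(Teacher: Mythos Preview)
Your construction is correct and essentially identical to the paper's: the paper also takes $m=2$, assigns $\hat{\cS}=\{1\}$, $\hat{\cS}=\{2\}$, $\hat{\cS}=\{1,2\}$ with respective probabilities $\frac{\alpha}{1+\alpha}$, $\frac{\alpha}{1+\alpha}$, $1-\frac{2\alpha}{1+\alpha}$, miscovers the lone unit when $|\hat{\cS}|=1$, and covers both when $|\hat{\cS}|=2$. This is exactly your example with the specific choice $\epsilon=\alpha/(1+\alpha)$, the endpoint of your admissible interval at which the weak selection-conditional coverage holds with equality.
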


We end this section with a remark on the interpretation of selection-conditional coverage and FCR control.  
\begin{remark}[Interpretation of selection-conditional coverage and FCR control]
As shown by Proposition~\ref{prop:notions}, the guarantee in~\eqref{eq:def_cond_cov_set} 
is stronger than FCR control (for a proper choice of the selection taxonomy).
In fact, the former has often been used as a device 
to derive the latter in the literature~\citep[e.g.,][]{weinstein2013selection,bao2024selective}. 

In terms of interpretation, there are two main differences between selection-conditional 
coverage and FCR. 
First, the guarantee of FCR is averaged over all the selection events, 
including the case of empty selection set where the false coverage proportion is by definition  
zero. Therefore, even with FCR control, one could still suffer from a high false 
coverage proportion when the selection set is non-empty as long as this is compensated 
by the cases where the selection set is empty. On the other hand, selection-conditional 
coverage delivers guarantees conditioning on selection events of interest, which prevents 
the aforementioned undesired situation.
Second, FCR control provides a guarantee that is averaged over 
all the selected units: it could be possible that for some selected units, the coverage
is much lower than the nominal level, and for others the coverage is much higher, 
so that the average coverage over all the selected units is controlled at the nominal level. 
In contrast, selection-conditional coverage provides guarantees specific to
each selected units.
        
Finally, we note that the distinction between the two concepts can be  asymptotically 
negligible in special cases. 
We refer to Lemma~\ref{lem:scc_fcr} in the supplementary material for such an instance.
\end{remark}
\section{JOMI: a unified framework}
\label{sec:unified}

\subsection{Warm-up: split conformal prediction}
\label{subsec:SCP}
To warm up, we briefly summarize the split conformal prediction (SCP) 
method~\citep{vovk2005algorithmic,lei2018distribution} and 
how it achieves finite-sample coverage under exchangeability.

SCP starts with a nonconformity score function
$V\colon \cX \times \cY \mapsto \RR$ determined by $\cdt$,  
so that $V(x,y)$ informs how well a hypothetical value $y\in\cY$ conforms 
to a machine prediction. For instance, one may set 
$V(x,y) = |y - \hat{\mu}(x)|$, where $\hat{\mu}(x)$ is regression function 
fitted on $\cdt$. 
Other popular choices  
include {\em conformalized quantile regression}~\citep[CQR,][]{romano2019conformalized} 
for regression and {\em adaptive prediction sets}~\citep[APS,][]{romano2020classification} 
for classification.

Compute  
$V_i =V(X_{i},Y_i)$ for $i \in[n]$. 
The split conformal prediction set for test unit $j\in[m]$ is 
\#\label{eq:vanilla_conformal}
\hat{C}_{\alpha, n+j}^{\text{SCP}} = 
\Big\{ y\colon V(X_{n+j},y)\leq \text{Quantile}\big(1-\alpha; \{V_i\}_{i=1}^n \cup\{+\infty\}\big) \Big\},
\#
where $\text{Quantile}(1-\alpha;\cdot)$ is the  $(1-\alpha)$-th empirical
quantile of the set in the second argument. 
When $(Z_1,\dots,Z_n,Z_{n+j})$ are exchangeable, $\hat{C}_{\alpha, n+j}^{\text{SCP}}$
achieves~\eqref{eq:def_mgn_coverage}~\citep{vovk2005algorithmic}.

It helps motivate our approach  
to see 
the ideas behind the validity of $\hat{C}_{\alpha, n+j}^{\text{SCP}}$. 
In words, $\hat{C}_{\alpha,n+j}^{\text{SCP}}$ 
finds hypothesized values of $y$ that make $V(X_{n+j},y)$ look 
similar to calibration scores. 
Note that
\#\label{eq:proof_vanilla_CP}
\PP\big(Y_{n+j}\in \hat{C}_{\alpha, n+j}^{\text{SCP}}\big)
= \PP\Big( V_{n+j}\leq \text{Quantile}\big(1-\alpha; \{V_i\}_{i=1}^n \cup\{V_{n+j}\}\big)  \Big),
\#
where $V_{n+j}=V(X_{n+j},Y_{n+j})$. 
Recall the unordered set  
 $[\cD_j] = [Z_1,\dots,Z_n,Z_{n+j}]$ where $Z_i=(X_i,Y_i)$. 
Conditional on the event $\{[\cD_j] = [z_1,\dots,z_{n+j}]\}$, 
the only randomness is in the ordering of $(Z_1,\dots,Z_n,Z_{n+j})$; 
due to exchangeability, 
the probability of $Z_{n+j}$ taking on each value in $z_1,\dots,z_n,z_{n+j}$ is equal. 
Therefore, conditional on $[\cD_j]= [z_1,\dots,z_{n+j}]$, 
the chance of $V_{n+j}$ being no greater than the $(1-\alpha)$-th quantile 
of $[v_1,\dots,v_n,v_{n+j}]$, where $v_i = v(z_i)$, is 
at least $1-\alpha$, i.e.,  
\#\label{eq:proof_vanilla_CP_2}
\PP\Big( V_{n+j} \leq \text{Quantile}\big(1-\alpha; \{v_i\}_{i=1}^n \cup\{v_{n+j}\}\big) \Biggiven [\cD_j] = [z_1,\dots,z_{n+j}] \Big) \geq 1-\alpha.
\# 
This leads to~\eqref{eq:proof_vanilla_CP} by the tower property. 
Therefore, inverting the criterion on the right-hand side of~\eqref{eq:proof_vanilla_CP} 
gives a valid prediction set for $Y_{n+j}$. 

The reason why vanilla SCP may fail to achieve selection-conditional
coverage like~\eqref{eq:def_cond_cov_set} is that, 
conditional on the selection event, 
the data points $\{Z_1,\dots,Z_n,Z_{n+j}\}$ are no longer exchangeable.  
In other words, in~\eqref{eq:proof_vanilla_CP_2}, 
it is unclear how  $V_{n+j}$ is distributed over $v_1,\dots,v_{n+j}$ 
if we additionally condition on the selection event. 
As such, a natural remedy is to  
find a subset of calibration data 
that are ``exchangeable'' with respect to the test point 
conditioning on the selection event, and leverage 
their scores to calibrate 
the prediction of the test unit. 
We introduce our methods below.  

\subsection{Conformal inference via a reference set}

Fix a test unit $j\in[m]$. 
Recall that $Z_{n+j}(y) = (X_{n+j},y)$ is the imputed test point
with a hypothesized response $y\in \cY$. 
The core of our method is to find calibration points  
that are exchangeable with respect to the test point 
conditional on the selection set. 
At a high level, 
they are calibration units $i\in [n]$ that are ``indistinguishable'' with $j\in [m]$, 
in the sense that treating $Z_{n+j}(y)$ as  a 
calibration point and $Z_i$ as a test point  results in the same selection event. 

We formalize this idea via a ``swap'' operation. 
For any calibration unit $i\in \cic$,  
we define the ``swapped'' calibration data  $\cD_{\calib}^{\swap{i}{j}}(y)$ 
and the swapped test data $\cD_{\test}^{\swap{i}{j}}$ as follows: 
\$
& \cD_{\calib}^{\swap{i}{j}}(y) = (Z_1^{\swap{i}{j}}(y), Z_2^{\swap{i}{j}}(y), \dots, Z_n^{\swap{i}{j}}(y)), \\ 
& \cD_{\test}^{\swap{i}{j}} = (X_{n+1}^{\swap{i}{j}}, X_{n+2}^{\swap{i}{j}}, \dots, X_{n+m}^{\swap{i}{j}}),
\$
where for $k \in [n]$ and $\ell \in[m]$,  
\$
Z_k^{\swap{i}{j}}(y) = 
\begin{cases}
Z_{n+j}(y) & k = i, \\ 
Z_k & k \neq i .
\end{cases}
\quad 
X_{n+\ell}^{\swap{i}{j}} =
\begin{cases}
X_i & \ell = j, \\
X_{n+\ell} & \ell \neq j.
\end{cases}
\$ 
That is, $\cdc^\swap{i}{j}(y)$ and $\cdte^\swap{i}{j}$ 
are the calibration and test data 
if we treat $Z_i$ as the $j$-th test point, 
and $Z_{n+j}(y)$ as the $i$-th calibration point. 
Figure~\ref{fig:swap} is an illustration of the swap operation.

\begin{figure}[htbp]
    \scalebox{0.9}{%
    \begin{tikzpicture}
        \coordinate (p);
        \foreach \n/\w/\c in {{\footnotesize$Z_1$}/0.8/blue!10,
        {\footnotesize$Z_2$}/0.8/blue!10,
        $\cdots$/0.8/blue!10,
        {\footnotesize$Z_{i}$}/0.8/red!10,
        $\cdots$/0.8/blue!10,
        {\footnotesize$Z_n$}/0.8/blue!10}{
        \node[draw,minimum height=0.8cm,minimum width=\w cm,anchor=west,outer sep=0pt, fill = \c]
        (n) at (p) {\n};
        \coordinate (p) at (n.east);
        }
        \node at (-0.8,0) {$\cdc = $};
        \node at (7.1,0) {$\cdte = $};
    
        \coordinate (p) at (7.8,0);
        \foreach \n/\w/\c in {{\footnotesize$X_{n+1}$}/0.8/blue!10,
        {\footnotesize$X_{n+2}$}/0.8/blue!10,
        {\footnotesize$\cdots$}/0.8/blue!10,
        {\footnotesize$X_{n+j}$}/0.8/red!10,
        $\cdots$/0.8/blue!10,
        {\footnotesize$X_{n+m}$}/0.8/blue!10}{
        \node[draw,minimum height=0.8cm,minimum width=\w cm,anchor=west,outer sep=0pt, fill = \c]
        (n) at (p) {\n};
        \coordinate (p) at (n.east);
        }
        
        \coordinate (p) at (0,-1.5);
        \foreach \n/\w/\c in {{\footnotesize$Z_1$}/0.8/blue!10,
        {\footnotesize$Z_2$}/0.8/blue!10,
        $\cdots$/0.8/blue!10,
        {\footnotesize$Z_{n+j}(y)$}/0.8/red!10,
        $\cdots$/0.8/blue!10,
        {\footnotesize$Z_n$}/0.8/blue!10}{
        \node[draw,minimum height=0.8cm,minimum width=\w cm,anchor=west,outer sep=0pt, fill = \c]
        (n) at (p) {\n};
        \coordinate (p) at (n.east);
        }
        \node at (-1.3,-1.5) {$\cdc^{\swap{i}{j}}(y)=$};
    
        \coordinate (p) at (7.8,-1.5);
        \foreach \n/\w/\c in {{\footnotesize$X_{n+1}$}/0.8/blue!10,
        {\footnotesize$X_{n+2}$}/0.8/blue!10,
        {\footnotesize$\cdots$}/0.8/blue!10,
        {\footnotesize$X_{i}$}/0.8/red!10,
        $\cdots$/0.8/blue!10,
        {\footnotesize$X_{n+m}$}/0.8/blue!10}{
        \node[draw,minimum height=0.8cm,minimum width=\w cm,anchor=west,outer sep=0pt, fill = \c]
        (n) at (p) {\n};
        \coordinate (p) at (n.east);
        }
    \node at (6.7,-1.5) {$\cdte^{\swap{i}{j}}=$};
      \end{tikzpicture} 
    }
      \caption{Graphical illustration of $\cdc^\swap{i}{j}(y)$ and $\cdte^{\swap{i}{j}}$.}
      \label{fig:swap}
    \end{figure}

Applying the same selection rule $\cS$ to the swapped data, 
we define the swapped  selection set  
with the hypothesized $y$ as
\$
\hat\cS^{\swap{i}{j}}(y) =  
\cS \big(\cD_{\calib}^{\swap{i}{j}}(y),\cD_{\test}^{\swap{i}{j}} \big).
\$  
Then, we define the ``reference set'' for achieving~\eqref{eq:def_cond_cov_set}
with taxonomy $\mfS$ as  
\@\label{eq:def_hat_Rj}
\hcR_{n+j}(y) = \big\{  i \in [n] \colon j \in \hat\cS^{\swap{i}{j}}(y), 
\text{ and } \hcS^{\swap{i}{j}}(y) \in \mfS \big\}.
\@
In words, the reference set is the collection of calibration points $i\in[n]$ such that, 
after swapping unit $i$ and $n+j$, 
the (posited) test point $j$ (which is the original unit $i$)  remains in the focal set and the 
focal set remains in $\mfS$. 
We use the notation $\hcR_{n+j}(y)$ to emphasize that $\hat\cR_{n+j}(\cdot)$ is a 
data-dependent mapping from $\cY$ to the power set of $[n]$. 

With all the preparation, we define our prediction set for $Y_{n+j}$ as
\@\label{eq:PI_cond}
\hat{C}_{\alpha,n+j} = \Big\{ y \in \cY \colon V(X_{n+j},y) \leq \quant 
\big(1-\alpha; \{V_i\}_{i\in \hat\cR_{n+j}(y)} \cup \{+\infty\} \big)  \Big\}.
\@ 

As we shall show shortly, 
our prediction set~\eqref{eq:PI_cond} achieves near-exact 
coverage when the nonconformity score is continuous and the reference set is of moderate 
size. 
In addition, we can achieve exact coverage by introducing extra randomness: 
\@\label{eq:PI_cond_rand}
&\hat{C}_{\alpha,n+j}^\rand \nonumber\\ 
=& \bigg\{  y\colon 
\frac{\sum_{i \in \hat\cR_{n+j}(y)} \ind\{V(X_{n+j},y) < V_i\} + 
U_j \cdot (1+  \sum_{i\in \hat\cR_{n+j}(y)}
\ind\{V(X_{n+j},y) = V_i\}) }{1+ | \hat\cR_{n+j}(y)|}
\leq 1-\alpha \bigg\},
\@
where $U_1,\ldots,U_m$ are i.i.d.~random variables drawn from 
$\textrm{Unif}[0,1]$  independent of the data. 
In Supplementary Sections~\ref{app:MCP} 
and~\ref{app:BY}, we discuss in detail the connection of our method to Mondrian conformal prediction (MCP)
and comparison with the BY procedure~\citep{benjamini2005false}. 
While the BY procedure is a natural and heuristic solution to post-selection inference, it suffers from over-conservativeness and non-adaptivity to the selection event.

\subsection{Theoretical guarantees}
Theorem~\ref{thm:PI_cond_set} confirms the conditional validity of our prediction 
sets $\hC_{\alpha,n+j}$ in~\eqref{eq:PI_cond} and 
$\hC^\rand_{\alpha,n+j}$ in~\eqref{eq:PI_cond_rand}. 

\begin{theorem}
    \label{thm:PI_cond_set}
    Suppose $\cS(\cD_\calib,\cD_\test)$ is invariant to permutations  of $\cD_\calib$, 
    and that $\{Z_i\}_{i=1}^n\cup\{Z_{n+j}\}$ are exchangeable conditional on 
    $\{X_{n+\ell}\}_{\ell\in[m]\backslash \{j\}}$ for any $j\in[m]$. 
    Then, for any selection taxonomy $\mfS$, 
    the following statements hold.
    \begin{enumerate} 
    \item[(a) ] $\hat{C}_{\alpha,n+j}$ defined in~\eqref{eq:PI_cond} obeys 
    \@\label{eq:def_cond_cov_set_again}
    \PP\big( Y_{n+j} \in \hat{C}_{\alpha, n+j}
    \biggiven j\in \hat\cS, ~\hat\cS\in \mathfrak{S}\big) \geq 1-\alpha. 
    \@
    Furthermore, if ties among $V_1,\ldots,V_n,V_{n+j}$ occur with probability 
    zero, then 
    \@\label{eq:def_cond_cov_set_upper}
    \PP\big( Y_{n+j} \in \hat{C}_{\alpha, n+j} 
    \biggiven j\in \hat\cS, ~\hat\cS\in \mathfrak{S}\big) \le 1-\alpha
    + \EE\bigg[\frac{1}{1+|\hcR_{n+j}(Y_{n+j})|}\bigg]. 
    \@
    \item[(b) ] The randomized prediction set $\hat{C}_{\alpha,n+j}^\rand$ 
    defined in~\eqref{eq:PI_cond_rand}
    satisfies 
    \$
    \PP(Y_{n+j}\in \hat{C}_{\alpha,n+j}^\rand\given j\in \hat\cS,~ \hcS \in \mfS) = 1-\alpha. 
    \$
    \end{enumerate}
\end{theorem}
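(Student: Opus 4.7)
\textbf{Proof plan for Theorem~\ref{thm:PI_cond_set}.} The plan is to set up a conditioning argument that reduces the selection-conditional coverage to a finite-sample exchangeability claim on a well-chosen sub-multiset of scores. Fix $j\in[m]$ and let $\cE$ be the $\sigma$-algebra generated by the unordered multiset $[\cD_j]=[Z_1,\dots,Z_n,Z_{n+j}]$ together with $\{X_{n+\ell}\}_{\ell\ne j}$. By the assumed exchangeability, conditional on $\cE$ the ordered vector $(Z_1,\dots,Z_n,Z_{n+j})$ is a uniformly random permutation of the multiset. Write the multiset values as $z_1^*,\dots,z_{n+1}^*$ and let $k^*\in[n+1]$ be the (random) index such that $Z_{n+j}=z_{k^*}^*$; conditional on $\cE$, $k^*$ is uniform on $[n+1]$.

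Next, I would use the permutation-invariance of $\cS$ in $\cD_\calib$ to convert the selection event into a property of $k^*$. For each $k\in[n+1]$, let $\hat\cS^{(k)}$ denote the (well-defined) selection set obtained by placing $z_k^*$ in the test slot and the remaining values in the calibration slots in any order; and set
\[
\cK^* \,=\, \bigl\{k\in[n+1]\colon j\in\hat\cS^{(k)},\ \hat\cS^{(k)}\in\mfS\bigr\}.
\]
Then $\cK^*$ is $\cE$-measurable, and the conditioning event $\{j\in\hat\cS,\ \hat\cS\in\mfS\}$ equals $\{k^*\in\cK^*\}$. The second key step, and what I expect to be the main obstacle, is to correctly identify the reference set with this $\cK^*$. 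Concretely, I would unpack the swap operation and argue, again via permutation-invariance of $\cS$ in $\cD_\calib$, that a calibration index $i\in[n]$ lies in $\hcR_{n+j}(Y_{n+j})$ if and only if the value $Z_i$ sitting at calibration slot $i$ has multiset-label in $\cK^*\setminus\{k^*\}$. Consequently, setting $v_k^*=V(z_k^*)$,
\[
\{V_{n+j}\}\cup\{V_i:i\in\hcR_{n+j}(Y_{n+j})\}\,=\,\{v_k^*:k\in\cK^*\},
\]
a deterministic multiset under $\cE$, of cardinality $1+|\hcR_{n+j}(Y_{n+j})|$.

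With this identification in hand, part (a) follows by the standard conformal quantile argument. Conditional on $\cE$ and $\{k^*\in\cK^*\}$, $k^*$ is uniform on $\cK^*$, so $V_{n+j}$ is uniform on the multiset $\{v_k^*:k\in\cK^*\}$. Since $\hcR_{n+j}(\cdot)$ and all the $V_i$'s with $i\in\hcR_{n+j}$ are also $\cE$-measurable (being a function of the multiset and the fixed other test features), the event $Y_{n+j}\in\hat C_{\alpha,n+j}$ becomes
\[
V_{n+j}\,\le\,\quant\bigl(1-\alpha;\{V_i\}_{i\in\hcR_{n+j}(Y_{n+j})}\cup\{+\infty\}\bigr),
\]
whose conditional probability is at least $1-\alpha$ by the usual rank argument on a uniform draw from a multiset of size $1+|\hcR_{n+j}|$. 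Taking the tower expectation over $\cE$ (restricted to $\{k^*\in\cK^*\}$) delivers~\eqref{eq:def_cond_cov_set_again}. Under continuous scores, ties have probability zero, so the probability in question is exactly $\lceil(1-\alpha)(1+|\hcR_{n+j}|)\rceil/(1+|\hcR_{n+j}|)$, which is at most $1-\alpha+1/(1+|\hcR_{n+j}|)$; averaging yields~\eqref{eq:def_cond_cov_set_upper}.

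For part (b), I would replace the quantile criterion by the randomized rank statistic involving $U_j\sim\mathrm{Unif}[0,1]$, which is independent of everything. Under the same conditioning, $V_{n+j}$ has a uniform rank on $\{1,\dots,1+|\hcR_{n+j}|\}$, and the independent $U_j$ smooths out ties so that the randomized normalized rank is exactly $\mathrm{Unif}[0,1]$ conditional on $\cE\cap\{k^*\in\cK^*\}$. This gives the exact equality $\PP(Y_{n+j}\in\hat C_{\alpha,n+j}^{\rand}\given j\in\hat\cS,\hat\cS\in\mfS)=1-\alpha$ after integrating out the conditioning. The only delicate bookkeeping is the multiset-to-position translation in step~2; once that bijection is pinned down, the remaining arguments are clean conformal manipulations.
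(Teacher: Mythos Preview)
Your proposal is correct and follows essentially the same route as the paper's proof. The paper conditions on the unordered set $[\cD_j]$ together with $\cD_j^c$, defines a deterministic index set $\tilde R$ (your $\cK^*$) of multiset positions that would trigger the selection event when placed in the test slot, proves the key identity that the multiset $\{V_i:i\in\hat\cR_{n+j}^+\}$ equals $\{v_i:i\in\tilde R\}$ (your boxed identification), and then applies the uniform-rank argument via Bayes' rule; the upper bound and the randomized version follow exactly as you outline. One cosmetic point: the set of \emph{indices} $\hat\cR_{n+j}(Y_{n+j})\subseteq[n]$ is not $\cE$-measurable (it depends on the ordering), only the induced multiset of scores is---you say this correctly earlier, but your later sentence ``$\hcR_{n+j}(\cdot)$ \dots\ are also $\cE$-measurable'' should be tightened to refer to the multiset of values rather than the index set.
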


We defer the detailed proof of Theorem~\ref{thm:PI_cond_set} 
to Supplementary Section~\ref{appd:proof_PI_cond_set} and
provide some intuition here. 
Similar to the ideas of SCP in Section~\ref{subsec:SCP}, 
the key fact we rely on is that, 
plugging in the true value $y=Y_{n+j}$, 
data in the reference set and the new test point 
are still exchangeable conditional on the selection event. 
To be specific, to prove~\eqref{eq:def_cond_cov_set_again}, we are to show a stronger result:
\#\label{eq:intuition_1}
\PP\Big(  V_{n+j} \leq \quant 
    \big(1-\alpha; \{V_i\}_{i\in \hcR_{n+j}(Y_{n+j})} \cup \{V_{n+j}\} \big) 
    \Biggiven j\in \hat\cS, \, \hcS \in \mfS, \, [\cD_j],\, \cD_j^c 
    \Big)\geq 1-\alpha,
\#
where we recall $[\cD_j]= [Z_1,\dots,Z_n,Z_{n+j}]$ and 
$\cD_j^c = \cdte\backslash\{X_{n+j}\}$.
For any fixed values $z_1,\dots,z_n,z_{n+j}$, 
once given the unordered values $[\cD_j] =[d_j] = [z_1,\dots,z_n,z_{n+j}]$  and 
the values of other test points $\{X_{n+\ell}\}_{\ell\neq j}$,
the only randomness is in the ordering of $Z_1,\dots,Z_{n+j}$ 
among $[z_1,\dots,z_n,z_{n+j}]$. 
Meanwhile, we show that our reference set $\hat\cR_{n+j}(\cdot)$ is 
constructed in a delicate way such that the 
(unordered set of) scores 
$\{V_i\}_{i\in \hcR_{n+j}(Y_{n+j})} \cup \{V_{n+j}\}$ 
is fully determined by  $[z_1,\dots,z_n,z_{n+j}]$. 
That is,  $\hat{R}_{n+j}^+:= [V_i\colon i \in \hat\cR_{n+j}(Y_{n+j}) \cup\{n+j\}]$ 
is fully determined given $[d_j]$. 
Then, \eqref{eq:intuition_1} reduces to 
\#
\PP\Big(  V_{n+j} \leq \quant 
    \big(1-\alpha; \hat{R}_{n+j}^+ \big) 
    \Biggiven j\in \hat\cS, \, \hcS \in \mfS, \, [\cD_j],\, \cD_j^c 
    \Big)\geq 1-\alpha,
\#
Finally, by the exchangeability of $Z_1,\dots,Z_n,Z_{n+j}$,  
the probability of $V_{n+j}$ taking on any value in $\hat{R}_{n+j}^+$ is 
equal given $[\cD_j]$ and $\cD_j^c$, leading to the validity in Theorem~\ref{thm:PI_cond_set} 
via Bayes' rule.

\section{Computationally tractable instances}
\label{sec:compute}

So far, we have presented a general framework for constructing 
valid prediction sets conditional on general selection events. 
However, computing the prediction sets 
according to their definition 
requires looping over all possible values of $y\in \cY$, 
which can be computationally intractable. 
When $|\cY|$ is finite (and relatively small), our proposed method 
can be efficiently implemented according to its definition; the corresponding 
computational complexity is at most $O(|\cY|mn)$ times 
the complexity of the selection rule.

In this section, we instantiate our general procedure beyond the small $|\cY|$ setting 
with concrete examples where special 
structures enable efficient computation. 
We focus on three classes of selection rules that can be of practical interest: 
selection using only the covariates, 
selection based on conformal p-values, and selection based on conformal prediction sets.
When practitioners are willing to slightly modify the selection rule to improve computation efficiency, they may refer to Supplementary Section~\ref{app:splitting} where we discuss extensions to simplify the construction of prediction sets 
by further splitting the calibration data.

\subsection{Covariate-dependent selection rules}
\label{sec:covariate}

We first consider covariate-dependent selection rules, i.e., 
when $\cS(\cD_\calib,\cD_\test)$ is only a function of $\{X_{i}\}_{i=1}^{n+m}$.  
Under such rules, the reference set no longer 
depends on $y$; we shall suppress the dependence on $y$ 
and write $\hcR_{n+j}(y)\equiv \hcR_{n+j}$ throughout this subsection.  

Here, $\hcR_{n+j}$  
can be efficiently computed by looping over $i\in [n]$. 
The complete procedure for constructing $\hC_{\alpha,n+j}$  
and $\hC^\rand_{\alpha,n+j}$  with an arbitrary covariate-dependent selection rule 
is summarized in Algorithm~\ref{alg:PI_cond_cov}. 
Its overall computational complexity is $O(mn\cdot C_\cS)$, 
with $C_\cS$ being the complexity of the selection process. 

\begin{algorithm}[htbp]
\caption{JOMI for arbitrary covariate-dependent selection rules}
\label{alg:PI_cond_cov}
\begin{algorithmic}
\Require Calibration data $\cdc$; test data $\cD_{\test}$; miscoverage level $\alpha$;
nonconformity score $V(\cdot, \cdot)$;
selection rule $\cS$; 
selection taxonomy $\mfS$;
form of prediction set $\in$ \{\texttt{dtm}, \texttt{rand}\}.
\State Compute $\hcS = \cS(\cdt,\cdte)$.
\For{$j \in \hcS$}

\State Initialize $\hcR_{n+j} = \emptyset$.
\For{$i = 1,\ldots,n$}
    
 \State   $\hcR_{n+j} = \hcR_{n+j} \cup \{i\}$ ~ if $j \in \hcS^{\swap{i}{j}}$ and 
    $\hcS^{\swap{i}{j}} \in \mfS$.

\EndFor

\If{\textnormal{form = \texttt{dtm}}}
    \State $\hC_{\alpha,n+j} = 
     \Big\{ y \in \cY \colon V(X_{n+j},y) \leq \quant \big(1-\alpha; 
     \{V_i\}_{i\in \hat\cR_{n+j}} \cup \{+\infty\} \big)  \Big\}.
    $
\EndIf
\If{\textnormal{form = \texttt{rand}}}

\State Sample $U_{j} \sim \textnormal{Unif}[0,1]$.
\State $\hC_{\alpha,n+j} \leftarrow \Big\{  y\colon 
\frac{\sum_{i \in \hat\cR_{n+j}} \ind\{V(X_{n+j},y) < V_i\} + U_{j} \cdot (1+  \sum_{i\in \hat\cR_{n+j}} 
\ind\{V(X_{n+j},y) = V_i\}) }{1+ | \hat\cR_{n+j}|}
\leq 1-\alpha \Big\}.$
\EndIf
\EndFor

\Ensure {$\{\hC_{\alpha,n+j}\}_{j\in \hcS}$.}
\end{algorithmic}
\end{algorithm}

By Theorem~\ref{alg:PI_cond_cov}, 
the output of Algorithm~\ref{alg:PI_cond_cov} is valid as long as 
the selection rule does not rely on the ordering   
of the calibration points. This includes many commonly-used selection rules: 
\begin{enumerate}
    \item[(1)] {\em Top-K selection.}  
    The $K$ test units with the highest scores $S(X_i)$ are selected, 
    where $\cS\colon \cX\to \RR$ is a pre-trained score function. 
    For example, $S(X_i)$ may be the predicted  
    binding affinity for a drug with chemical structure $X_i$ or 
    the predicted health risk of a patient with features $X_i$, 
    and the drug discovery process or clinical system admits a fixed number of new units. 
    \item[(2)] {\em Selection based on joint quantiles.}  
     A unit $j$ is selected if its score $S(X_{n+j})$ surpasses 
     the $q$-th quantile of both 
     the calibration {\em and}  test scores $\{S(X_i)\}_{i=1}^{n+m}$. 
     For instance, a scientist may be interested in toxicities of 
     drug candidates in $\cD_\test$ where those in $\cD_\calib$ have been tested, 
     but they only focus on drugs 
     with highest predicted activities $S(X_{i})$ in the entire library.
    \item[(3)] {\em Selection based on calibration quantiles.} A test unit $j$ is 
    selected if its score $S(X_{n+j})$ surpasses the 
    $q$-th quantile of calibration scores $\{S(X_i)\}_{i=1}^n$. 
    This may happen when a doctor uses the 
    predicted health risks of existing patients to determine a normal range,  
    and picks test units anticipated to have a relatively extreme health risk.  
    \item[(4)] {\em Selection by black-box optimization procedures.} 
    A test unit $j$ is selected by running an arbitrary (even black-box) optimization program that 
    does not involve $\{Y_i\}_{i\in \cI_\calib}$. 
    One such example is optimization under constraints: apart from the score $S(X_{n+j})$, each unit $j$ is associated with a 
    cost $C_{n+j}$, and $\hcS$ is the subset of test units 
    that maximizes $\sum_{j \in \hcS} S(X_{n+j})$ subject to $\sum_{j\in \hcS} C_{n+j} \leq c$, 
    where $c$ reflects the total budget. This may happen in healthcare management systems 
    that optimize resources subject to constraints and send patients to different care categories, or 
    in candidate screening for job interviews subject to budget and diversity constraints. 
    More generally, in drug discovery, scientists may run a complex Bayesian optimization algorithm to select the next batch of drugs to evaluate~\citep{pyzer2018bayesian},
    which may be viewed as a black box process. No matter how complicated these optimization programs are, as long as they do not involve calibration labels, JOMI supports efficient uncertainty quantification afterwards. We will see some stylized examples in our numerical experiments.
\end{enumerate}

We additionally show in Supplementary Section~\ref{app:cov_dep} that 
we can further improve the computational efficiency of JOMI for rules (1)-(3) 
by deriving exact forms of the reference sets, where in each case the computation complexity is $O(\max\{m,n\})$. 
We also note that rules (1)-(2) were considered in~\citet{bao2024selective}, 
where they propose methods that achieve~\eqref{eq:def_sel_coverage} and FCR control. 
The prediction sets proposed therein  
coincide with ours, and 
our results imply that they in fact achieve 
the strong selection-conditional coverage in~\eqref{eq:def_cond_cov_set} 
for free. 

\subsection{Selection based on conformal p-values}
\label{sec:conf_pval}

The second class of selection rules we study 
concern  selecting units 
whose outcomes satisfy certain conditions 
while controlling some type-I error. 
To this end, the test units are selected by thresholding 
a class of conformal p-values, 
where each p-value is computed via contrasting 
a test point with the calibration data. 
As such, the selection rule can be  complicated and asymmetric. 

We will follow the framework of~\citet{jin2022selection} 
to define the p-values and selection rules, who 
study the problem of discovering test units with large outcomes. 
Examples include selecting drugs with sufficiently high binding affinities, 
finding highly competent job candidates, and identifying patients who benefit from a treatment, etc. 
In these problems, 
predictions from machine learning models 
serve as proxies for the true outcomes of interest 
that are too expensive or impossible to evaluate, 
and the selection procedure leverages the power of 
predictions to select units with large outcomes while ensuring error control. 

Given test points $\{X_{n+j}\}_{j\in[m]}$ 
and (potentially random) thresholds $c_{n+j}\in\RR$, 
the goal is to select those $Y_{n+j}> c_{n+j}$ while controlling 
the number/fraction of
false positives. 
%
%
The statistical evidence for 
detecting a large outcome is quantified by the so-called
``conformal p-values''. 

Suppose the calibration data are $\{(X_i,Y_i,c_i)\}_{i=1}^n$, 
such that the tuples $\{X_{i},Y_i,c_i\}_{i=1}^{n+m}$ are exchangeable.  
Assume access to a score function $S\colon \cX\times\RR\to \RR$
such that  $S(x,y)$ is 
non-increasing in $y$ for any $x\in \cX$. An example is 
$S(x,y) = \hat{\mu}(x) - y$ where $\hat\mu(x)$ is a 
point predictor trained on $\cD_\train$. 
We then compute $\hat{S}_i= S(X_i,c_i)$ for $i\in[n+m]$, and 
define the conformal p-values\footnote{
Our p-value slightly modify the definition
in~\citet{jin2022selection} for the ease of describing our prediction sets. 
Similar to the original ones, our p-values control the type-I error 
in detecting one large outcome. 
In addition, using our p-values 
in their original procedures maintains error control with improved power; 
we discuss these results 
in Supplementary Section~\ref{subsec:conf_pval} for completeness. 
}
\@\label{eq:iid_conf_pval}
& p_j = \frac{1 + \sum_{i \in \cic}\ind\{\hS_i \ge \hS_{n+j}, Y_i \le c_i\}}
{n+1},\quad j\in [m].
\@
Hereafter, we call $S$ the \emph{selection} score function. 
We can show that $p_j$ is valid in the sense that   
$
\PP(p_j\leq t,~ Y_{n+j}\leq c_{n+j}) \leq t, 
$
$\forall t \in [0,1]$.
That is, testing with $p_j$ controls the type-I error in 
finding one large outcome, accounting for the randomness in the outcomes as well. 
We then select test units whose conformal p-values are below a threshold,
the choice of which determines the type of error control guarantee. Some examples are given below.
\begin{enumerate}
\item {\em Fixed threshold.} We select test units whose conformal p-values~\eqref{eq:iid_conf_pval}
are below a fixed threshold $q\in(0,1)$. This could happen 
when testing a single hypothesis, 
or testing multiple hypotheses with Bonferroni correction. 
The latter  controls the family-wise error rate in finding large outcomes, 
which is useful in highly risk-sensitive settings such as disease diagnosis. 
\item {\em BH threshold.} We select test units whose conformal p-values~\eqref{eq:iid_conf_pval} 
are below  a data-dependent threshold given by the Benjamini-Hochberg 
(BH) procedure~\citep{benjamini1995controlling}. This selection procedure is shown in~\cite{jin2022selection} to control the FDR in detecting large outcomes, 
which is useful in exploratory screening such as drug discovery for ensuring efficient resource use in follow-up investigations. 
\end{enumerate}

For generality, we consider the selection rule in the form of  
$\hcS_\cp = \{j\in[m]: \hS_{n+j} \ge \tau\}$, where 
$\tau$ is a stopping time adapted to the filtration 
$\{\sigma(\{A_s\}_{s \le t}, \{B_s\}_{s\le t})\}_{t \in \RR}$:  
\begin{align}\label{eq:partial_sum}
A_s = 1 + \sum_{i \in [n]} \ind\{\hS_i \ge s, Y_i \le c_i\},
~B_s = \sum_{j\in[m]} \ind\{\hat S_{n+j} \ge s\}.
\end{align}
Equivalently, for any $t \in \RR$, we can write 
$ \ind\{\tau \le t\} = f_t(\{A_s\}_{s \le t}, \{B_s\}_{s \le t})$ 
for some function $f_t$.
It can be shown that the two examples above are special cases of this general rule 
(these results can be found in  Section~\ref{app:equiv} in the supplementary material).

We now provide a general solution for constructing selection-conditional 
prediction sets corresponding to such rules.
The challenge here is that all the p-values depend on each other, 
as they leverage the same set of calibration data, and/or 
the data-driven threshold determined by all p-values would further add to the intricacy. 
We note that the BH-based rule is studied in~\cite{bao2024selective} with 
approximate FCR control, while we are to provide an efficient solution with 
exact coverage guarantee. The following proposition lays out the form of the reference set 
and its validity, with its proof delegated to Section~\ref{app:proof_general_conf_pval} 
in the supplementary material.

\begin{proposition}\label{prop:general_conf_pval}
Suppose the selection set is $\hcS_\cp = \{j\in[m]: \hS_{n+j} \ge \tau\}$, 
where $\tau$ is determined by   
$\ind\{\tau \le t\} = f_t(\{A_s\}_{s \le t}, \{B_s\}_{s\le t})$, 
for some $f_t$ and $(A_s,B_s)$ defined in~\eqref{eq:partial_sum}, $\forall t \in \RR$.
For any $\mfS \in 2^{[m]}$ and $j\in[m]$ such that $j\in \hcS_\cp$
and $\hcS_\cp\in \mfS$, the reference set can be simplified as  
\#
& \hcR_{n+j}^\cp(y) = \ind\{y \le c_{n+j}\} \cdot \hcR_{n+j}^{\cp,1} + 
\ind\{y > c_{n+j}\} \cdot \hcR_{n+j}^{\cp,0}, \quad \text{where} \\ 
 \label{eq:conf_pval_ref}
\hcR_{n+j}^{\cp,k} = & \{i \in \cic: Y_i \le c_i, \hat S_i \ge  
\tau^{(j)}(k,0), \{\ell \in [m]: \hS_{n+\ell}^\swap{i}{j} \ge \tau^{(j)}(k,0)\} \in \mfS \} \\ 
& \cup \{i \in \cic: Y_i > c_i, \hat S_i \ge \tau^{(j)}(k,1), 
\{\ell \in [m]: \hS_{n+\ell}^\swap{i}{j} \ge \tau^{(j)}(k,1)\} \in \mfS\}.
\#
For $k,\ell \in \{0,1\}$, the adjusted threshold $\tau^{(j)}(k,\ell)$ is given by 
\begin{align}\label{eq:conf_pval_T}
& \ind\{\tau^{(j)}(k,\ell) \le t\} = f_t(\{A_s^{(j)}(k,\ell)\}_{s\le t}, \{B_s^{(j)}\}_{s\le t}), \text{where}\\
& A_s^{(j)}(k,\ell) = \ell + \sum_{i\in[n]} \ind\{\hS_i \ge s, Y_i \le c_i\} 
+ k\cdot \ind\{\hS_{n+j} \ge s\},~
B^{(j)}_s = 1 + \sum_{\ell \neq j} \ind\{\hS_{n+\ell} \ge s\}. \notag 
\end{align}
\end{proposition}
Based on Proposition~\ref{prop:general_conf_pval}, the JOMI prediction set is given by 
\$ 
\hC_{\alpha,n+j} = \big\{y \in \cY: y > c_{n+j}, V(X_{n+j},y)\le \hat{q}_0\big\} 
\cup \big\{y \in \cY: y \le c_{n+j}, V(X_{n+j},y)\le \hat{q}_1\big\},
\$
where $\hat{q}_k = \quant(1-\alpha;\{V_i:i\in \hcR^{\cp,k}_{n+j}\} \cup \{\infty\})$
for $k=0,1$. 
See Algorithm~\ref{alg:pval_fix} for a summary of the complete procedure, where we only present the 
deterministic version for simplicity. The overall computation complexity is 
at most $O(m(m+n)|\hat\cS|)$. 


\begin{algorithm}[htbp]
    \caption{JOMI for selection based on conformal p-values}
    \label{alg:pval_fix}
    \begin{algorithmic}
        
    \Require{Calibration data $\cdc$; test data $\cD_{\test}$; miscoverage level $\alpha$; selection taxonomy $\mfS$;
    selection rule $\cS$; nonconformity score function $V(\cdot, \cdot)$.}
    \State Compute $\hcS = \cS(\cdc,\cdte)$.
    
    \For{$j \in \hat\cS$}
        \State Compute $\tau^{(j)}(k,\ell)$ as in~\eqref{eq:conf_pval_T}, for $(k,\ell)\in \{0,1\}$. 
    
        \State Compute $\hat{\cR}_{n+j}^{\text{cp},0}$ and $\hat{\cR}_{n+j}^{\text{cp},1}$ as~\eqref{eq:conf_pval_ref}.

        \State Compute $\hat{q}_k =\text{Quantile}(1-\alpha;\{V_i: i\in \hat\cR_{n+j}^{\text{cp},k}\}\cup \{\infty\})$, for $k=0,1$.
        \State Compute $\hat{C}_{\alpha,n+j} = \{y\in \cY\colon y>c_{n+j},V(X_{n+j},y) \leq \hat{q}_0\}
        \cup \{y\in \cY\colon y\leq c_{n+j},  V(X_{n+j},y)\leq \hat{q}_1\}$.
    
    \EndFor
    \Ensure {$\{\hC_{\alpha,n+j}\}_{j\in \hcS}$.}
    \end{algorithmic}
    \end{algorithm}

\subsection{Selection based on conformal prediction sets}
\label{sec:conf_pred}

The final class of selection rules 
we study are based on the properties of 
(preliminary) prediction sets, usually constructed by 
running the vanilla SCP. 
Such use cases have appeared implicitly in many heuristic applications of conformal prediction. 
For example, practitioners may select
units whose prediction intervals are shorter/longer than a 
threshold, which roughly indicates enough confidence~\citep{sokol2024conformalized}. 
People may also select units 
whose prediction sets entirely lie above a threshold, which roughly 
indicates a desired outcome~\citep{svensson2017improving}. 
Note that 
the original prediction intervals are no longer valid 
 {\em conditional on being selected}~\citep{jin2022selection}, 
 and thus  using them for interpreting downstream uncertainty 
 can be misleading. 
 In this section, we apply our general framework to re-calibrate 
prediction sets for the units selected in such a way. 

Formally, we consider two stages of prediction set construction. 
The one constructed in the first stage, called 
the \emph{preliminary} prediction set, is used for determining the selection set $\hat\cS$.
The one in the second stage, which we call 
the \emph{selective} prediction set, is the one we are to build with JOMI.  
Following SCP in Section~\ref{subsec:SCP},
we let $S(x,y)$ and $V(x,y)$ be the nonconformity score 
functions for the two stages, respectively. 
The ($1-\beta$)-level preliminary 
prediction set for the $j$-th test unit is  
\#\label{eq:CP_prelim}
\hC^{\text{prelim}}_{\beta,n+j} 
= \big\{y\in \cY: S(X_{n+j},y) \le \eta \big\},
\#
where $\eta$ is the $K \,:=\, \lceil (1-\beta)(n+1)\rceil$-th smallest element in 
$\{S(X_i,Y_i)\}_{i=1}^n$.

We consider any selection rule based on the preliminary prediction set 
$\hat{C}_{\beta,n+j}^{\text{prelim}}$. 
Note that by~\eqref{eq:CP_prelim}, given the first-stage score function $S(\cdot,\cdot)$, 
the form of $\hat{C}_{\beta,n+j}^{\text{prelim}}$
is fully determined by $X_{n+j}$ and $\eta$. 
We can thus express any selection rule through $\cL: \cX \times \RR \mapsto \{0,1\}$,
where $\cL(X_{n+j},\eta) = 1$ means selecting the unit and $\cL(X_{n+j},\eta) = 0$ otherwise. 
An example is selecting based on prediction interval lengths: 
following~\cite{sokol2024conformalized}, 
suppose that we use CQR~\citep{romano2019conformalized} in the first stage, 
i.e., 
$S(x,y) = \max\{\hat{q}_L(x) - y, y - \hat{q}_U(x)\}$, where $\hat{q}_L(x)$
and $\hat{q}_U(x)$ are estimates of some lower and upper conditional quantiles. 
Selecting prediction intervals shorter than a threshold $\lambda$ gives 
$\cL(x,\eta) = \ind\{\hat{q}_U(x) 
- \hat{q}_L(x) + 2\eta \le \lambda\}$. 
As another example, 
for a binary outcome $Y$, 
we might want to select units whose prediction set is a singleton, leading to 
$\cL(x,\eta) = \ind\big\{ S(x,1)\le \eta <S(x,0) \text{ or } S(x,0)\le \eta <S(x,1) \big\}$. 

Having determined the selection rule $\cL$, the selection set is thus  
$
\cS(\cD_\calib,\cD_\test) = \hcS_{\ps} = \big\{j\in [m]: \cL(X_{n+j},\eta) =1 \big\}.
$
We are to derive a computationally efficient but slightly conservative version 
of the JOMI prediction set, which nevertheless has tight coverage 
in all our numerical experiments (see Section~\ref{sec:icu}).  
Define
\begin{equation} 
\begin{aligned}
\label{eq:conf_pred_set}
\hC^\ps_{\alpha,n+j}
:= ~& \{y: \eta^- \le S(X_{n+j},y) \le \eta^+\} \cup \{y: V(X_{n+j},y)\le q_{j,1}~~\text{and}~~ S(X_{n+j},y) < \eta^-\}\\
& \cup \{y: V(X_{n+j},y)\le q_{j,2}~~\text{and}~~ S(X_{n+j},y) > \eta^+\}
\end{aligned}
\end{equation}
where $\eta^+$ and $\eta^-$ are the $(K+1)$-th and $(K-1)$-th 
smallest element in $\{S_i\}_{i=1}^n$, respectively, and  
\@\label{def_ps_q}
& q_{j,1} := \quant\Big(1-\alpha; 
 \big\{V_i\colon i\in [n], S_i \le \eta^-, \cL(X_i,\eta)=1, 
\{\ell \in [m]: \cL(X_{n+\ell}^\swap{i}{j}, \eta) = 1\} \in \mfS \big\}\notag\\
& \qquad \qquad \qquad \qquad \qquad 
 \cup \big\{V_i: S_i > \eta^-, ~\cL(X_i,\eta^-) = 1 , 
\{\ell \in [m]: \cL(X_{n+\ell}^\swap{i}{j}, \eta^-) = 1\} \in \mfS\big\}\Big);\notag \\
& q_{j,2} := \quant\Big(1-\alpha;\big\{V_i\colon i\in [n],  S_i \le \eta, \cL(X_i,\eta^+) = 1,
\{ \ell \in [m]: \cL(X_{n+\ell}^\swap{i}{j}, \eta^+) = 1\} \in \mfS\big\}\notag\\
& \qquad \qquad \qquad \qquad \qquad 
\cup \big\{V_i: S_i > \eta, ~\cL(X_i,\eta)=1, 
\{\ell \in [m]: \cL(X_{n+\ell}^\swap{i}{j}, \eta) = 1\} \in \mfS\big\}\Big).
\@

We prove the validity of $\hC_{\alpha,n+j}^\ps$ in~\eqref{eq:conf_pred_set} 
below, whose proof is in Supplementary Section~\ref{appd:proof_conf_pred}.
\begin{proposition}
\label{prop:conf_pred}
For any selection rule $\cL$, any $j\in [m]$ 
and any selection taxonomy $\mfS$ such that 
$j\in \hcS_\ps$ and $\hcS_\ps \in \mfS$, 
 $\hC_{\alpha,n+j}^\ps$  
is a superset of the \textnormal{JOMI} prediction set $\hC_{\alpha,n+j}$ 
defined in~\eqref{eq:PI_cond}, and 
\$
\hat{C}_{\alpha,n+j}^\ps \backslash \hC_{\alpha,n+j} 
\subseteq \big\{y\in \cY\colon \eta^- \leq S(X_{n+j},y) \leq \eta^+ \big\}.
\$
\end{proposition}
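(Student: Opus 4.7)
The plan is to partition the hypothesized values $y \in \cY$ into three regions according to where $S(X_{n+j},y)$ sits relative to $\eta^-, \eta^+$. In the middle region $\{y: \eta^- \le S(X_{n+j},y) \le \eta^+\}$, the set $\hC_{\alpha,n+j}^\ps$ contains every $y$ by construction, so trivially $\hC_{\alpha,n+j}^\ps \cap \{\eta^- \le S(X_{n+j},y) \le \eta^+\} \supseteq \hC_{\alpha,n+j} \cap \{\eta^- \le S(X_{n+j},y) \le \eta^+\}$. The whole claim therefore reduces to showing that on the two outer regions $\{y: S(X_{n+j},y) < \eta^-\}$ and $\{y: S(X_{n+j},y) > \eta^+\}$, $\hC_{\alpha,n+j}$ is contained in $\hC_{\alpha,n+j}^\ps$, which will moreover imply that the symmetric difference $\hC_{\alpha,n+j}^\ps \setminus \hC_{\alpha,n+j}$ lives entirely in the middle band.

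The key technical step is to compute the swapped threshold $\eta^{\swap{i}{j}}(y)$, defined as the $K$-th smallest of the swapped calibration scores $\{S_l: l \neq i\} \cup \{S(X_{n+j},y)\}$, as a function of where $S_i$ and $S(X_{n+j},y)$ sit relative to $\eta^-, \eta, \eta^+$. For $y$ with $S(X_{n+j},y) < \eta^-$, a direct bookkeeping of how the sorted list of calibration scores shifts when $S_i$ is removed and $S(X_{n+j},y)$ inserted yields two cases: if $S_i \le \eta^-$, the $K$-th order statistic is unaffected, so $\eta^{\swap{i}{j}}(y) = \eta$; if $S_i > \eta^-$, removing $S_i$ first promotes $\eta^+$ to the $K$-th position, but inserting $S(X_{n+j},y) < \eta^-$ shifts everything back up by one, giving $\eta^{\swap{i}{j}}(y) = \eta^-$. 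Substituting these into the definition~\eqref{eq:def_hat_Rj} of $\hat\cR_{n+j}(y)$, together with the identity $X_{n+j}^{\swap{i}{j}} = X_i$, shows that $\hat\cR_{n+j}(y)$ decomposes exactly as the union of two sets appearing inside the quantile defining $q_{j,1}$ in~\eqref{def_ps_q}.

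A completely symmetric case analysis for $y$ with $S(X_{n+j},y) > \eta^+$ gives $\eta^{\swap{i}{j}}(y) = \eta^+$ when $S_i \le \eta$ and $\eta^{\swap{i}{j}}(y) = \eta$ when $S_i > \eta$, so $\hat\cR_{n+j}(y)$ coincides with the set underlying $q_{j,2}$. Hence on each outer region the data-dependent threshold appearing in~\eqref{eq:PI_cond} equals (up to an outer $+\infty$) exactly $q_{j,1}$ or $q_{j,2}$, so the membership condition for $\hC_{\alpha,n+j}$ and $\hC_{\alpha,n+j}^\ps$ restricted to the outer region coincide, yielding $\hC_{\alpha,n+j}^\ps \cap \{S(X_{n+j},y)\notin[\eta^-,\eta^+]\} \supseteq \hC_{\alpha,n+j} \cap \{S(X_{n+j},y)\notin[\eta^-,\eta^+]\}$. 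Combined with the trivial containment on the middle band, we obtain $\hC_{\alpha,n+j}^\ps \supseteq \hC_{\alpha,n+j}$ with $\hC_{\alpha,n+j}^\ps \setminus \hC_{\alpha,n+j} \subseteq \{y: \eta^- \le S(X_{n+j},y) \le \eta^+\}$.

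The main obstacle is the careful order-statistic bookkeeping: one must track how the positions of $\eta^-, \eta, \eta^+$ in the sorted score list change under the combined operation of removing $S_i$ and inserting $S(X_{n+j},y)$, including the boundary assignments when $S_i$ itself equals one of $\eta^-,\eta,\eta^+$. A secondary technical point is that in the middle band, the same calculation shows $\eta^{\swap{i}{j}}(y)$ can actually take the value $S(X_{n+j},y)$ itself (when $S_i$ lies on the opposite side of $\eta$), so the exact JOMI reference set genuinely varies with $y$ there; this is precisely why $\hC_{\alpha,n+j}^\ps$ conservatively absorbs the whole band rather than attempting a pointwise description. Throughout, the $\mfS$-membership condition on the swapped selection set must be carried through each case so that it matches the $\{\ell: \cL(X_{n+\ell}^{\swap{i}{j}},\cdot)=1\}\in\mfS$ indicators embedded in the two unions defining $q_{j,1}$ and $q_{j,2}$.
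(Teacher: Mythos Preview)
Your proposal is correct and follows essentially the same route as the paper's proof: the same three-way partition on $S(X_{n+j},y)$, the same order-statistic case analysis yielding $\eta^{\swap{i}{j}}(y)\in\{\eta,\eta^-\}$ (resp.\ $\{\eta^+,\eta\}$) on the two outer regions, and the same identification of $\hat\cR_{n+j}(y)$ with the sets defining $q_{j,1}$ and $q_{j,2}$. Your additional remark that on the outer regions the two prediction sets in fact coincide (not merely one contains the other) is also what the paper's argument implicitly establishes.
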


By Proposition~\ref{prop:conf_pred}, 
the conservativeness of $\hat{C}_{\alpha,n+j}^\ps$ is quite limited, as 
$\eta^-$ and $\eta^+$ are usually very close to each other. We also verify 
its tight empirical coverage in Section~\ref{sec:icu}. 

The procedure is summarized in Algorithm~\ref{alg:ps}. 
For each $j$, the computation cost 
of $\hat{C}_{\alpha,n+j}^\ps$ is $O(m + n)$, and 
therefore the overall computation cost is $O(m(m+n))$.  

\begin{algorithm}[htbp]
    \caption{JOMI for selection based on preliminary prediction sets}
    \label{alg:ps}
    \begin{algorithmic}
        
    \Require {Calibration data $\cdc$; test data $\cD_{\test}$; 
    selection taxonomy $\mfS$;
    selective miscoverage level $\alpha$; 
    first-stage miscoverage level $\beta$;
    selection rule $\cS$; first-stage score function $S(\cdot,\cdot)$; 
    nonconformity score function $V(\cdot, \cdot)$.}
    \State $\hcS = \cS(\cdc,\cdte)$.

    \State Compute $\eta^+$ as the $(K+1)$-th order statistic of $\{S_i\}_{i=1}^n$, where $K=\lceil (1-\beta)(n+1)\rceil$. 

    \State Compute $\eta^+$ as the $(K-1)$-th order statistic of $\{S_i\}_{i=1}^n$.

    \For{$j \in \hat\cS$}
        
    \State    Compute $q_{j,1}$ and $q_{j,2}$ as in~\eqref{def_ps_q}.  
    
    \State    Compute $\hC_{\alpha,n+j}^{\ps}$ as in~\eqref{eq:conf_pred_set}.
    
    \EndFor
    \Ensure {$\{\hC_{\alpha,n+j}\}_{j\in \hcS}$.}
    \end{algorithmic}
    \end{algorithm}

\begin{remark}
We note that it is possible that the prediction interval produced by Algorithm~\ref{alg:ps} 
does not satisfy the constraints on the preliminary prediction sets.   
This is, however, a desired feature in our setting where the selection rule is 
given and the inference step is decoupled from the selection step; the selection-conditional 
prediction sets should be used as tools for informing further decisions/analysis. Otherwise, we may need to take an orthogonal strategy: change the selection algorithm, for which ideas from~\cite{jin2022selection,gazin2024selecting} may be useful.
\end{remark}

\section{Application to drug discovery}
\label{sec:drug}
In drug discovery, powerful prediction machines are increasingly used to 
guide the search of promising drug candidates.  
For such high-stakes decisions,  
it is important 
to quantify the uncertainty in 
the predictions~\citep{svensson2017improving,jin2022selection,laghuvarapu2024codrug}. 
Meanwhile, selection issues naturally arise 
as scientists may only focus on seemingly promising drugs.

In this section, we apply $\mname$ to several application scenarios in drug discovery  
with a selective nature.  
In some cases, $\mname$ yields shorter prediction intervals than vanilla conformal prediction 
when the latter is under-confident; in others, it makes the just right inflation of the prediction interval to provide exact selection-conditional coverage.

\vspace{0.5em}
\noindent\textbf{Application scenarios.}
In the main text, we focus on \emph{drug property prediction} (DPP),  a classification problem 
where the binary outcome indicates whether a drug candidate binds to a pre-specified disease target, and the covariates are the (encoded) chemical structure of the drug compound. Due to limited space, we defer the results for several selection scenarios in drug-target-interaction prediction (DTI) to Supplementary Section~\ref{subsec:DTI}. DTI is a regression problem 
where each sample is a pair of drug and disease target. The outcome of interest 
is a real-valued variable indicating the binding affinity of that pair. 
The covariates are the (concatenated) encoded structure of both.


\vspace{0.5em}
\noindent\textbf{Selection rules.}
We consider three types of realistic selection rules $\cS$: 
\begin{enumerate}[(1)]
    \item \emph{Covariate-dependent top-K selection}: selecting drugs with highest predicted binding affinities. 
    \begin{enumerate}[(i)]
        \item Top-K among test data. When the scientist has a fixed budget of investigating $K$ drug candidates in the next phase, one may select  $K$ test samples with the largest  $\hat\mu(X_{n+j})$. 
        \item Top-K among mixed data. When the scientist is to investigate other properties for promisingly active drug candidates in the next phase, they may select $K$ units in $\cD_\calib\cup\cD_\test$ with the highest predicted affinities. 
        \item Calibration-referenced selection. The scientist may use the $\cD_\calib$ as reference and select test samples whose predicted activities are greater than the $K$-th highest in $\{\hat\mu(X_i)\}_{i\in \cI_\calib}$. 
    \end{enumerate} 
    \item \emph{Conformal selection}. The scientist might also obtain a subset of active drugs while controlling the FDR below some $q\in (0,1)$. In this case, $\cS(\cD_\calib,\cD_\test)$ is the set of test drugs picked by Conformal Selection in~\cite{jin2022selection} at FDR level $q\in(0,1)$. 
    \item \emph{Selection with constraints}. The cost of developing a drug is a variable $C_{n+j}$, and one wants to select as many drugs with the highest predictions $\hat\mu(X_{n+j})$ as possible while ensuring the total cost is below some constant $C$, 
    or maximize the total reward while controlling the costs.  
\end{enumerate} 

\noindent\textbf{Evaluation metrics.}
We evaluate the selection-conditional (mis)coverage 
via the consistent estimator
$
\hat{\textrm{Miscov}} = 
\frac{\sum_{j=1}^{m}\hat{P}(j\in \hat\cS, Y_{n+j}\notin \hat{C}_{\alpha,n+j})}
{\sum_{j=1}^{m}\hat{P}(j\in \hat\cS)},
$
where $\hat{P}$ is the empirical probability over all repeats. 
The size of prediction sets is evaluated by averaging the 
cardinality of $\hat{C}_{\alpha,n+j}$ for classification or length of $\hat{C}_{\alpha,n+j}$ 
for regression over selected test units in all repeats. 
We also evaluate (but do not show in figures for brevity) the false coverage rate via 
$
\hat{\textrm{FCR}} = \hat{E}\Big[ 
\frac{\sum_{j=1}^{m}\ind\{ j\in \hat\cS, Y_{n+j}\notin \hat{C}_{\alpha,n+j}\}}{1\vee|\hat\cS|}  \Big],
$
where $\hat{E}$ is the empirical mean over all repeats. 

\vspace{0.5em}
\noindent\textbf{Data and prediction models.}
We use the HIV screening data in the DeepPurpose library~\citep{huang2020deeppurpose} with 
a total sample size of $n_{\textnormal{tot}} = 41127$. 
The numerical features $X\in \cX$ are  encoded by Extended-Connectivity FingerPrints~\citep[ECFP]{rogers2010extended} which
characterize topological properties of the drug candidates.  
A small-scale neural network $\hat\mu\colon \cX\to [0,1]$ 
is trained on randomly sampled $20\%$ of the entire dataset. 
Then, the remaining is randomly 
split into two equally-sized folds $\cD_\calib$ 
and $\cD_\test$. 
The exchangeability among $\cD_\calib$ 
and $\cD_\test$ is thus satisfied. 
Sampling of the training data and model training is independently repeated  $10$ times;
within each, we randomly split calibration/test data for $100$ times. This leads 
to $N=1000$ runs in total.

\subsection{Top-K selection}
\label{subsubsec:dpp-topK-test}

We first consider the selection rule (i) top-K with test data. 
For a trained predictor $\hat\mu$, we select (a)  the highest $K$ values of $\hat\mu(X_j)$ and (b) the lowest $K$ values of $\hat\mu(X_j)$ among all test data. 
The results for (ii) top-K selection with both test and calibration data, 
as well as (iii) calibration-referenced selection 
are deferred to Supplementary Section~\ref{app:subsec_topK_dpp},
showing similar patterns.

For a confidence level $\alpha\in(0,1)$, 
we apply the vanilla conformal prediction and  
JOMI (both deterministic and randomized) 
to the selected units at level $\alpha$, 
with $\mathfrak{S} = 2^{[m]}$.  
Since $|\hat\cS|$ is fixed, 
selection-conditional coverage implies FCR control at level $\alpha$ due to Proposition~\ref{prop:notions}. 
In addition, we include the BY procedure~\citep{benjamini2005false} as a heuristic baseline.
We vary $\alpha \in \{0.1,0.2,\dots,0.9\}$ 
and $K\in\{20, 100, 1000, 2000, 5000, 10000, 15000\}$, 
and set $V$ as
the APS score~\citep{romano2020classification}. The results with another binary score is in Supplementary Section~\ref{app:subsec_dpp_binary}. 

Figure~\ref{fig:dpp_topK_test} shows the empirical selection-conditional miscoverage 
in the left panel  
and the average prediction set size in the right panel, 
both at nominal levels $\alpha\in\{0.1, 0.8\}$. 
The solid (resp. dashed) lines show the results when we select units with 
the highest (resp. lowest) predicted affinities.

\begin{figure}[htbp]
    \includegraphics[width=\textwidth]{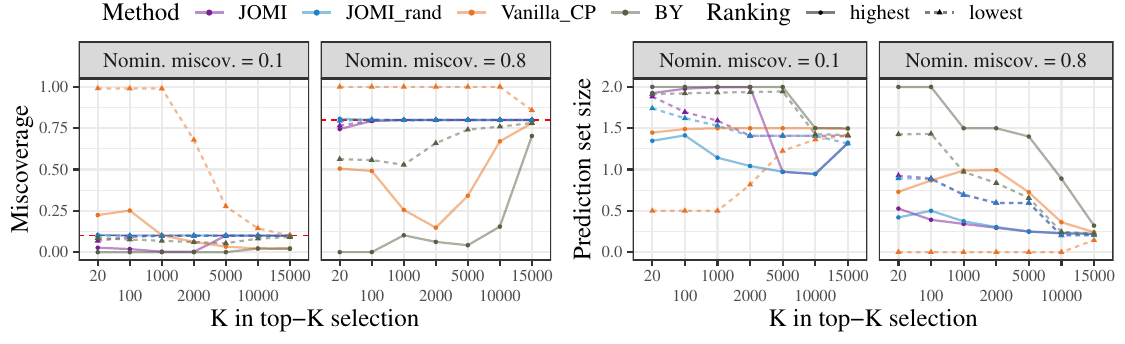}
    \caption{Empirical selection-conditional miscoverage (left) and prediction set size (right) in drug property prediction for vanilla conformal prediction (\texttt{Vanilla\_CP}), BY correction (\texttt{BY}), JOMI and randomized JOMI (\texttt{JOMI\_rand}) for test units whose $\hat\mu(X_{n+j})$ are top-$K$ (\texttt{highest}, solid line) or bottom-$K$ (\texttt{lowest}, dashed line) among test units. The reds dashed line is the nominal miscoverage level $\alpha\in\{0.1,0.8\}$. The results are averaged over $N=1000$ runs.}
    \label{fig:dpp_topK_test}
\end{figure}

The orange curves (\texttt{Vanilla\_CP}) show that 
vanilla conformal prediction with APS scores 
is over-confident for units with the lowest predicted affinities 
while under-confident for units with highest predicted affinities. 
In contrast, the purple (\texttt{JOMI}) and blue (\texttt{JOMI\_rand}) curves
both show valid coverage for our proposed methods. 
Also, using the APS score introduces visible gap between the actual coverage and $1-\alpha$ 
for $\mname$ due to discretization, which is made exact by its randomized version. 
Finally, while \texttt{BY} has lower-than-nominal miscoverage which shows its validity as a heuristic method, it is overly conservative, leading to much larger prediction sets.

Interestingly, vanilla conformal prediction with the APS score 
yields zero-cardinality prediction sets 
for units with the lowest predicted affinities 
with $\alpha=0.8$ (the orange dashed line in the right panel of Figure~\ref{fig:dpp_topK_test}). This is because vanilla CP covers other test units with very high rate, 
and thus marginal coverage is guaranteed even with empty prediction sets for 
the selected units. 
Of course, this is worrying if one cares more about these units at the bottom.

The behavior of these methods also depends on the choice of 
the nonconformity score $V$. 
Supplementary Section~\ref{app:subsec_dpp_binary} 
shows the results for selection rules (i)-(iii) with the binary score
$V(x,y) = y(1-\hat\mu(x))+(1-y)\hat\mu(x)$. 
With the binary score, 
$\mname$ consistently achieves valid coverage for selected units, 
and the coverage gap of $\mname$ 
due to discretization is less visible. 
For vanilla CP, opposite to the situations here, 
it is over-confident 
for units with the highest predicted affinities yet under-confident for units with 
the lowest predicted affinities.

\subsection{Conformal selection}

We then consider conformal selection, where the focal test units 
are those believed 
to obey $Y=1$ with false discovery rate control at level $q\in(0,1)$. 
This problem was investigated in~\cite{bao2024selective}  
with no exact finite-sample coverage guarantees in theory (though their heuristic method performs reasonably in their empirical studies). 
We apply conformal  selection~\citep{jin2022selection} at 
FDR level $q\in\{0.2,\dots,0.9\}$ and $c_{n+j}\equiv 0.5$ to determine $\hat\cS$, 
and construct prediction intervals for units in $\hat\cS$ using 
both the APS score and the binary score. 
Experiments are repeated for $N=1000$ independent runs.

Figure~\ref{fig:dpp_csel} depicts  $\hat{\textnormal{Miscov}}$ 
and prediction interval lengths 
with nominal miscoverage level $\alpha\in\{0.1,0.8\}$ under various 
FDR levels $q$ and two choices of nonconformity score $V$.

\begin{figure}[htbp]
    \includegraphics[width=\textwidth]{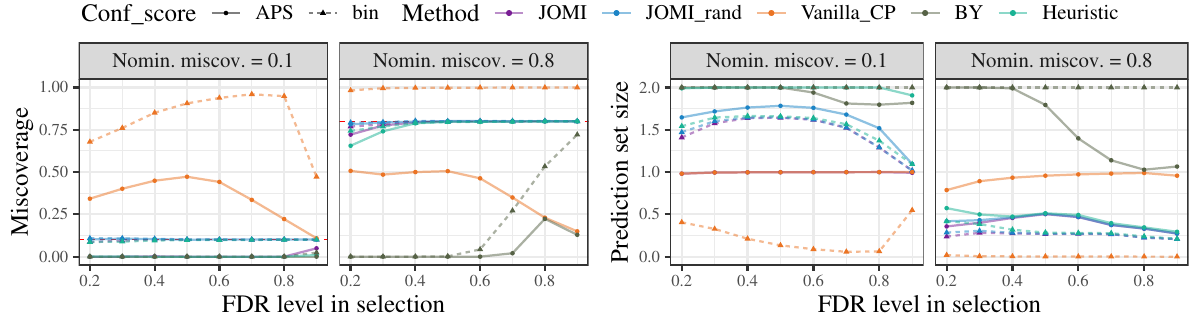}
    \caption{Empirical selection-conditional miscoverage (left) and prediction set size (right)
     in drug property prediction for test units selected by conformal selection 
      at target FDR levels $q\in \{0.2,\dots,0.9\}$ when applying 
     vanilla conformal prediction (\texttt{Vanilla\_CP}), BY (\texttt{BY}), JOMI (\texttt{JOMI}),
      randomized JOMI (\texttt{JOMI\_rand}), and the method of~\cite{bao2024selective} 
      (\texttt{Heuristic}).
       Results with the APS score are in solid lines, while those with the 
       binary score are in dashed lines. Details are otherwise the same as Figure~\ref{fig:dpp_topK_test}.}
    \label{fig:dpp_csel}
\end{figure}

Vanilla CP is not calibrated for 
selected units: it is over-confident with both scores 
for $\alpha=0.1$, while being over-confident with binary score 
and under-confident with APS score at $\alpha=0.8$.  
In contrast, \texttt{JOMI} and 
\texttt{JOMI\_rand} achieves valid selection-conditional coverage in all scenarios. 
There is some gap for \texttt{JOMI} with the APS score due to discretization, 
but not for \texttt{JOMI\_rand} or the binary score. 
We observe an even lower empirical FCR than the conditional miscoverage for our methods
(so they achieve valid FCR control);
this is because the selection set can sometimes be empty for small values of FDR level $q$ 
(recall Proposition~\ref{prop:weak_FCR}).  
Compared with the heuristic methods of~\cite{bao2024selective}, 
our method usually achieves smaller prediction set sizes whereas their method 
seems overly conservative. We conjecture that this is due to a more delicate choice of 
the reference set. 
Finally, \texttt{BY} is also overly conservative despite valid empirical coverage.

\subsection{Selection with constraints}

We now consider selecting units with the highest predicted binding affinities 
within a total budget of subsequent development. In this case, 
$\cS(\cD_\calib,\cD_\test) = \{ j\in [m]\colon \hat{\mu}(X_{n+j})\geq \bar\mu\}$, 
where $\bar\mu = \max\{\mu \colon \sum_{j=1}^m L_{n+j}\ind\{\hat\mu(X_{n+j})\geq \mu\}\leq C\}$, 
and $\{L_{n+j}\}_{j\in [m]}$ are the costs.

We create semi-synthetic datasets since the original HIV data does not contain the cost information. 
Specifically, for each $i\in [n+m]$, 
we generate $L_i = \exp(3\hat\mu(X_i)) + 2 |\sin(\hat\mu(X_i))| +\epsilon_i$, 
where $\hat\mu(X_i)$ is the predicted binding affinity, and $\epsilon_i \sim \text{Exp}(1)$ are i.i.d.~random variables that capture other cost-related information. 
Setting $20\%$ of the data aside 
as the training set, 
we randomly sample the data without replacement so that $n=2500$ and $m=2500$. 

The average miscoverage, prediction set size, and reference set size 
are reported in Figure~\ref{fig:dpp_cons}. 
Interestingly, after adding cost constraints, vanilla conformal prediction 
is over-confident with the binary score
and under-confident with the APS score. 
In contrast, our methods always yield near-exact coverage. 
From the right-most plot, we see that $|\hat\cR_j|$ is positively correlated 
with the number of selected test units. 
As usual, \texttt{BY} is overly conservative.

\begin{figure}[htbp]
    \centering
    \includegraphics[width=0.85\textwidth]{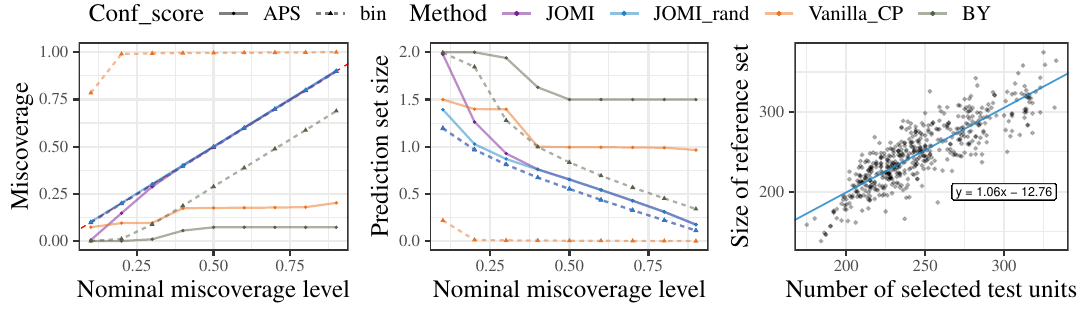}
    \caption{Empirical miscoverage rate (left), average length of prediction interval (middle), 
    and scatter plots for averaged reference set size 
    $\{|\hat\cR_j|\}_{j\in \hcS}$ versus $|\hcS|$  (right), across  $N=500$ independent runs of \texttt{Vanilla\_CP}, \texttt{BY}, \texttt{JOMI}, and \texttt{JOMI\_rand}. The $x$-axis of left and middle plots is the nominal miscoverage level $\alpha\in\{0.1,0.2,\dots,0.9\}$.}
    \label{fig:dpp_cons}
\end{figure}

    Finally, we report the empirical FCR in the three tasks in Supplementary Section~\ref{app:subsec_DPP_FCR}. 
    Consistent with our theory in Section~\ref{subsec:compare}, selection-conditional coverage implies FCR control in top-K selection, and the empirical FCR is also close to the nominal coverage level under conformal selection where the selection size is stable. However, in the constrained optimization task, we find that the empirical FCR of vanilla CP is much lower than the selection-conditional coverage. Similar to the ideas in Proposition~\ref{prop:strong_weak}, this is because the FCR is defined as zero when the selection set is empty. This shows the limitation of FCR as an error metric: even if it is controlled, on the ``interesting'' event that the selection set is non-empty, the actual coverage can be lower than anticipated.

\section{Application to health risk prediction}
\label{sec:icu}
Prediction machines are also widely used in healthcare for guiding clinical decision-making. 
These decisions may come from complicated underlying processes 
such as clinical resource optimization~\citep{ahmadi2017outpatient,master2017improving} or 
preliminary uncertainty quantification~\citep{olsson2022estimating}. 
Accounting for selections from complicated decision processes 
is necessary for reliable and informative uncertainty quantification. 
In this section, we demonstrate 
the application of our framework to health risk prediction 
settings.
We will consider three selection rules:

\begin{enumerate}[(1)]
    \item \emph{Selection with constraints}. Clinical decision makers may optimize a performance measure  subject to certain constraints such as budget, capacity, or fairness~\citep{castro2012combined,kemper2014optimized,gocgun2014dynamic}, and send patients to different care categories. In Section~\ref{subsec:icu_cons}, we consider a stylized example where we minimize the total predicted ICU stay subject to total cost budget, where the selected units can be viewed as patients sent to a certain category. 
    \item \emph{Selecting small conformal prediction sets}. Based on preliminary conformal prediction, one may  suggest human intervention for units with large set sizes~\citep{olsson2022estimating,sokol2024conformalized} while leaving those with small prediction sets unattended. In Section~\ref{subsec:icu_size}, we re-calibrate predictive inference for those whose preliminary prediction sets are shorter than a threshold. 
    \item \emph{Selection based on upper prediction bounds.} Practitioners may also focus on units whose preliminary 
    prediction intervals lie below a threshold~\citep{business_conformal}. In Section~\ref{subsec:icu_high}, we study selected test units whose preliminary prediction intervals 
    have upper bounds below a threshold.
\end{enumerate}

Among the above, rule (1) is covariate-dependent, 
while rules (2) and (3) can depend on the outcomes.
All of them are efficiently tackled by the computation tricks in Section~\ref{sec:compute}. 

Our experiments use the ICU-stay data in the \texttt{MIMIC-IV} dataset~\citep{johnson2023mimic}. 
Data (and features in $X$) are pre-processed using the pipeline provided by~\cite{gupta2022extensive}, 
with the outcome $Y$ being the length of ICU stay. 
We use random forests in the \texttt{scikit-learn} Python package 
to train a point prediction model $\hat\mu(\cdot)$ 
and two quantile regression models $\hat{q}_{1-\beta/2}(\cdot)$ and $\hat{q}_{\beta/2}(\cdot)$ 
using a holdout training set. Calibration and test data are randomly split with $n=3000$ and $m=2000$. As the discretization issue is minimal in this regression problem, we only present results for JOMI without randomization. 

\subsection{Selection with constraints}
\label{subsec:icu_cons}

We first study the case where test units are
selected by minimizing the total predicted ICU stay time subject to a budget constraint. 
Formally, for each patient $i\in [m+n]$, 
we let  $\hat\mu(X_i)$ be its predicted ICU stay length, 
and $L_i>0$ be the budget needed for them. 
Then, $\cS(\cD_\calib,\cD_\test)$ aims to solve the following optimization problem: 
\#\label{eq:knapsack_icu}
\mathop{\text{maximize}}_{S\subseteq [m]}&\quad  
\sum_{j\in S}  \hat\mu(X_{n+j})   \\ 
\text{subject to}&\quad  \sum_{j\in S} L_{n+j} \leq \bar{L}, \notag
\#
where $\bar{L} = 200$ is a budget limit. Again, as the dataset does not come with 
drug development costs, we generate $L_i = \lceil\exp(3\hat\mu(X_i)/\bar\mu) +   |\sin(\hat\mu(X_i))| +\epsilon_i-1+\varepsilon_i \rceil$, 
where $\bar{\mu} = \max_{i \in \cdt}|\hat\mu(X_i)|$, and
$\epsilon_i \sim \text{Exp}(1)$, $\varepsilon_i\sim \text{Unif}([0,1])$ are independent random variables.

The optimization problem~\eqref{eq:knapsack_icu} is known as the Knapsack problem which is 
NP-hard. Nevertheless, there are efficient approximate solvers and 
we note that the validity of JOMI does not rely on exactness of the results; in our experiments, 
we use the Python package \texttt{mknapsack}~\citep{mknapsack}. 
Existing methods such as~\cite{bao2024selective} cannot 
deal with such a complicated selection process. 
In contrast, our framework tackles this problem with 
a computation complexity that is polynomial in $m$, $n$, 
and the complexity of the subroutine $\cS(\cdot,\cdot)$. 


Figure~\ref{fig:icu_cons} shows the empirical miscoverage, 
length of prediction interval, and sizes of the selection set and reference sets. 
While vanilla conformal prediction is over-confident and BY is overly conservative, 
our method achieves exact coverage for selected test units despite the complexity of the selection process. 
We also see a slightly positive correlation of 
$|\hat\cR_{n+j}|$ and $|\hcS|$. 

\begin{figure}[htbp]
    \centering
    \includegraphics[width=0.85\textwidth]{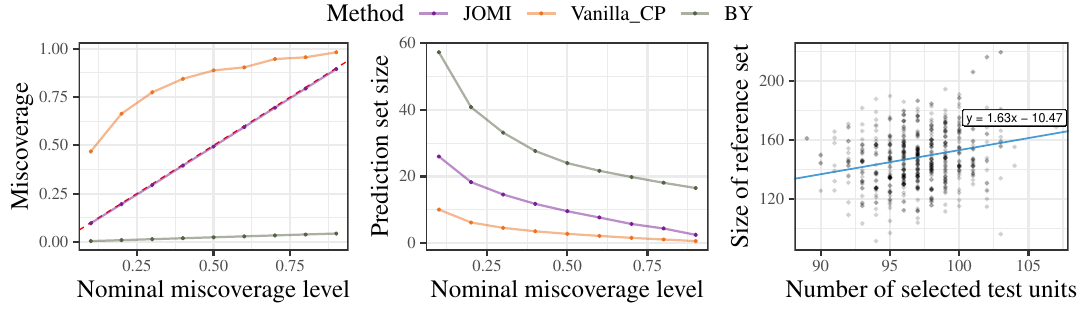}
    \caption{Empirical miscoverage rate (left), average length of prediction interval (middle), 
    and scatter plots for the averaged reference set size versus $|\hcS|$  (right), across  $N=500$ independent runs of \texttt{Vanilla\_CP}, \texttt{BY}, \texttt{JOMI}, and \texttt{JOMI\_rand}. The $x$-axis of left and middle plots is the nominal level $\alpha\in\{0.1,0.2,\dots,0.9\}$.}
    \label{fig:icu_cons}
\end{figure}

\subsection{Selecting small-sized prediction sets}
\label{subsec:icu_size}

We then consider the second selection rule, 
where we first build preliminary conformal prediction 
intervals $\hat{C}_{\alpha,n+j}^{\text{prelim}}$ via 
the score function $S(x,y) = |y-\hat\mu(x)|/\hat\sigma(x)$~\citep{lei2018distribution}; 
both the point prediction function $\hat\mu(\cdot)$ and the conditional 
standard deviation are estimated via random forests. 
We then select those test units with 
$|\hat{C}_{\alpha,n+j}^{\text{prelim}}|\leq 5$, i.e., 
the upper and lower bounds of 
the preliminary prediction intervals are less than $5$ days apart. 
This mimics the ideas in~\citep{ren2023robots,sokol2024conformalized} 
where small-sized prediction sets are ``certified'' as confident. 

After selection, we leverage the method in Section~\ref{sec:conf_pred} to construct 
$\hat{C}^{\textrm{ps}}_{\alpha,n+j}$ for all selected test units.  
Since $\hat{C}^{\textrm{ps}}_{\alpha,n+j}$ 
is a superset of 
the exact output $\hat{C}_{\alpha,n+j}$, we evaluate its empirical coverage 
to investigate whether it is over-conservative. 
Also, note from~\eqref{eq:conf_pred_set} that it 
is the union of three subsets; we also evaluate 
the number of disjoint segments in  $\hat{C}^{\textrm{ps}}_{\alpha,n+j}$. 

The miscoverage and  length of prediction sets are reported in the left and middle 
plots in Figure~\ref{fig:icu_len}. We observe that selectively certifying short prediction intervals 
can lead to under-coverage (orange curve), while JOMI achieves exact coverage (purple curve) 
by inflating the prediction sets, 
meaning that the superset $\hat{C}^{\textrm{ps}}_{\alpha,n+j}$ is
effectively quite tight.   
The average number of disjoint segments in the right plot of Figure~\ref{fig:icu_len}, 
which shows that $\hat{C}^{\textrm{ps}}_{\alpha,n+j}$ is almost always  one single interval. 

\begin{figure}[htbp]
    \centering
    \includegraphics[width=0.85\textwidth]{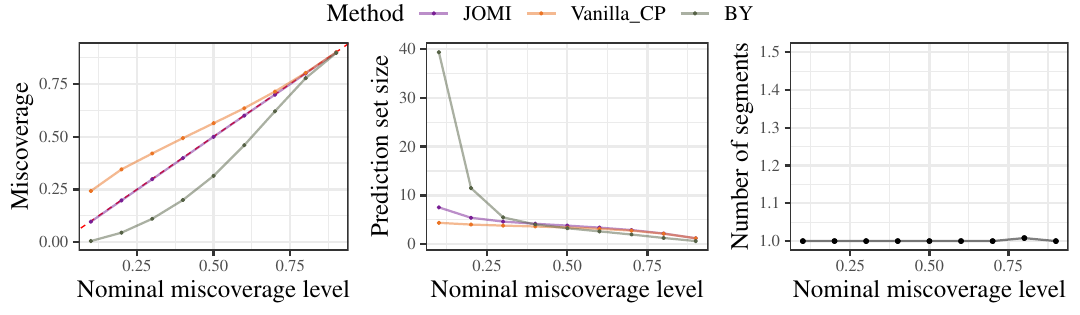}
    \caption{Empirical miscoverage rate (left), average length of prediction interval (middle), 
    and average number of segments in $\hat{C}^{\textrm{ps}}_{\alpha,n+j}$ (right), across  $N=500$ runs of \texttt{Vanilla\_CP} and \texttt{JOMI} when test units with short preliminary prediction sets are selected. The  $x$-axis is nominal levels $\alpha\in\{0.1,0.2,\dots,0.9\}$.}
    \label{fig:icu_len}
\end{figure}

\subsection{Selecting prediction sets below a threshold}
\label{subsec:icu_high}

Finally, we study selection rule (iii) which is also based on a preliminary 
prediction set constructed in the same way as Section~\ref{subsec:icu_size}.
We imagine that practitioners select a test unit $n+j$ 
if the upper bound of $\hat{C}^{\text{pre}}_{\alpha,n+j}$ is below $6$, i.e., 
it appears the patient will stay in ICU less than 6 days. 

The miscoverage rate, length of prediction sets, and number of disjoint segments
averaged over $N=500$ independent runs of JOMI 
(Section~\ref{sec:conf_pred}) 
and vanilla conformal prediction 
are summarized in Figure~\ref{fig:icu_upp}. We see that 
preliminary prediction sets with low upper bounds tend to under-cover for small $\alpha$, 
while JOMI achieves exact coverage despite 
that we construct a superset of $\hat{C}_{\alpha,n+j}$. 
However, JOMI may produce multiple segments for large values of $\alpha$. 

\begin{figure}[htbp]
    \centering
    \includegraphics[width=0.85\textwidth]{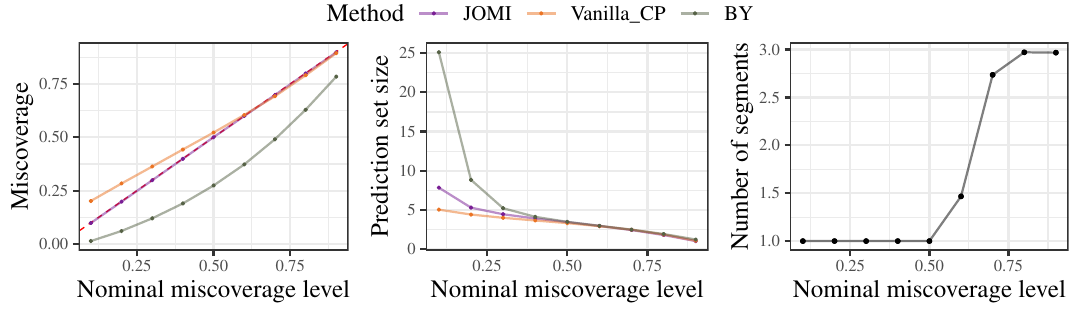}
    \caption{Empirical miscoverage rate (left), average length of prediction interval (middle), 
    and average number of segments in $\hat{C}^{\textrm{ps}}_{\alpha,n+j}$ (right), across  $N=500$ independent runs of \texttt{Vanilla\_CP} and \texttt{JOMI}. The nominal miscoverage levels on the $x$-axis are $\alpha\in\{0.1,0.2,\dots,0.9\}$.}
    \label{fig:icu_upp}
\end{figure}

    Finally, we present in Supplementary Section~\ref{app:subsec_icu_fcr} the tight empirical FCR control of JOMI in the above three tasks, since the size of the selection set is stable and nonzero.

\paragraph{Reproducibility and data availability.}
{The code for reproducing the numerical results in this paper 
is available at \url{https://github.com/ying531/JOMI-paper}.
The data underlying this article are available online, with 
references given at appropriate places in the paper.
}

\paragraph{Conflict of interest.}
{The authors declare no conflict of interest.}

\paragraph{Acknowledgments.}
{The authors thank the anonymous reviewers for their constructive comments.
Z.R.~acknowledges support from the National Science Foundation under grant DMS-2413135.
Y.J. was partially supported by the Harvard Data Science Wojcicki-Troper Postdoctoral Fellowship.

\bibliographystyle{apalike}
\bibliography{ref}

\newpage
\appendix
\section{Related works}
\subsection{Literature overview}
\label{app:overview}
Our method extends conformal prediction, which is a general framework for 
building marginally valid prediction intervals~\citep{vovk2005algorithmic,lei2018distribution,romano2019conformalized,romano2020classification}. 
Compared with vanilla conformal prediction, we aim at 
selection-conditional coverage for a unit of interest, 
instead of marginal coverage for an ``average'' unit. 
Particularly related to our framework is 
Mondrian Conformal Prediction~\citep[MCP]{Vovk2003MondrianCM}, 
which uses a subset of calibration data to build a prediction set
with coverage conditional on a pre-determined or permutation-equivariant 
class membership. 
Achieving selection-conditional coverage
is  a new goal with distinct motivations from MCP. Moreover, our techniques 
and scopes of application are significantly different from those of 
MCP  (see Supplementary Section~\ref{app:MCP}).

The selection-conditional guarantee we seek for 
is closely related to the 
post-selection inference (POSI) literature~\citep{berk2013valid,tibshirani2016exact,chernozhukov2015valid}  and in particular 
conditional selective inference~\citep{lee2016exact,markovic2017unifying,reid2017post,tibshirani2018uniform,kivaranovic2020tight,mccloskey2024hybrid}.
In POSI, one important task is to build confidence intervals 
for selected model parameters conditional on a 
selection event, which is close to our  selection-conditional coverage guarantee. 
Methods for this goal usually leverage specific problem structures such as linearity
and distribution of the estimators~\citep{zhong2008bias,lee2016exact,tian2017asymptotics,andrews2024inference,liu2023exact}.
As we focus on predictive inference (where the inferential targets are random variables 
instead of parameters), 
we leverage the exchangeability across units to achieve exact selection-conditional 
coverage, and thus our method substantially differs from POSI. 

Another related error notion in the selective inference literature  
is the false coverage rate (FCR)~\citep{benjamini2005false}, which is defined as 
the expected fraction of selected objects not covered 
by their confidence/prediction sets. 
Methods for achieving FCR control have been proposed 
for various settings~\citep{benjamini2005false,weinstein2013selection,weinstein2020online,
zhao2020constructing,xu2022post}.
In particular,~\citet{weinstein2020online} introduce
an online FCR-controlling method 
that can be applied to conformal prediction with selection. 
They consider sequentially arriving test units, 
and require the selection of a current unit to only depend on 
its prediction set and 
past selection decisions.  
In contrast,  
we allow the selection rule to simultaneously depend on {\em all} test units, 
such as top-K selection, 
optimization-based selection, and conformal p-value-based selection.  
Our approach to error control differs from theirs as well: 
while they achieve FCR control by adjusting the confidence level 
of marginal prediction sets, 
we leverage a judiciously chosen subset of calibration data 
to construct post-selection prediction sets with
selection-conditional coverage. 
Another closely related work is~\citet{bao2024selective} who  study prediction sets 
for selected units with (approximate) FCR control, yet
they achieve exact guarantee only for a limited class of
selection rules such as top-K selection.
Later on, we will show that JOMI yields the same prediction sets as theirs 
when applied to top-K selection, and leads to exact guarantee 
for  other problems they addressed with approximate FCR control. 
It also applies to many more settings with complicated 
selection rules that cannot be handled by their methods. 
Finally, a concurrent work of \cite{gazin2024selecting} (which was posted shortly after the 
first appearance of this work on arXiv) 
also studies the selective inference problem in conformal prediction. 
It specifically considers selection based on properties of prediction sets, 
but jointly calibrates the selection procedure and the construction of 
prediction sets while achieving FCR control. 
In contrast, we take a complementary perspective, decoupling the 
selection and uncertainty quantification steps, 
which allows the development of a versatile framework that is applicable to 
a broader range of selection rules.

More broadly, many recent works have studied selective inference 
problems in conformal prediction, 
mostly focusing on multiple testing, 
i.e., selecting individuals that obey some properties with 
FDR control.
This includes selecting outliers~\citep{bates2023testing,marandon2022machine,liang2022integrative,bashari2024derandomized} 
and selecting high-quality data~\citep{jin2022selection,jin2023model}. 
The construction of conformal prediction sets is an orthogonal direction, 
and our methods also work for selection based on such procedures; 
see Section~\ref{sec:conf_pval} for details.

Under a similar name, \citet{sarkar2023post} propose ``post-selection inference 
for conformal prediction'', 
but they mainly focus on selecting the confidence level $\alpha\in(0,1)$  
instead of selecting the units. 
In addition, they are closer to the ``simultaneous inference'' strand in POSI 
since they aim for valid inference  simultaneously for all confidence levels, 
other than conditional on a selection event. 
For these reasons, our methods and guarantees are quite different. 

\subsection{Connection to Mondrian Conformal Prediction}
\label{app:MCP}
Our framework is closely related to 
Mondrian Conformal Prediction (\textnormal{MCP})~\citep{Vovk2003MondrianCM} which 
applies to case of one test unit ($m=1$). 
Given $K$ categories $[K]$, MCP assumes access to a classifier 
$\cK$ that gives the class label for all units: 
$ 
(\kappa_1,\kappa_2,\dots,\kappa_{n+1}) = \cK(Z_1,Z_2,\dots,Z_{n+1}),
$ 
where $\kappa_i \in [K]$ is the class label for unit $i$.
The classifier $\cK$ is required to be equivariant to any permutation $\pi$ of $[n+1]$, i.e., 
$ 
(\kappa_{\pi(1)},\kappa_{\pi(2)},\dots,\kappa_{\pi(n+1)})
 = \cK(Z_{\pi(1)},Z_{\pi(2)},\dots,Z_{\pi(n+1)}).
$
The goal of \textnormal{MCP}  is to construct a prediction set for a test unit
with category-wise coverage, i.e., 
$\PP(Y_{n+1}\in \hat{C}^{\textnormal{MCP}}_{\alpha,n+1}\given \kappa_{n+1}=k)$ for each $k\in [K]$. 
This is achieved by finding  a subset of calibration data 
that are categorized as $k$ 
for each hypothesized test outcome $y\in \cY$, i.e.,
\@ \label{eq:mcp_ref_set}
\tilde{\cR}(y) = \{i\in \cic: \cK(\cD_{\calib},(X_{n+1},y))_{[i]} = k\},
\@ 
and then constructing the prediction set as
\$ 
\hat{C}^{\textnormal{MCP}}_{\alpha,n+1}= \Big\{y \in \cY: 
V(X_{n+1},y) \le \quant\big(1-\alpha; \{V_i\}_{i\in \tilde{\cR}(y)}\cup \{\infty\}\big)\Big\}.
\$

While the setups seem different, 
\textnormal{JOMI} effectively reduces to \textnormal{MCP} when 
$\cK$ is a binary classifier
and $\cS$ is equivariant under permutations of $[n+1]$. 
To see this, one can view the selection rule $\cS$ 
as categorizing test points into
two classes (selected or not selected), 
and selection-conditional coverage~\eqref{eq:def_cond_cov_set} as 
coverage given the ``selected'' category. 
In this way, one can check that our reference set $\hcR_{n+1}(y)$ coincides with 
$\tilde{\cR}(y)$ defined in~\eqref{eq:mcp_ref_set}, 
and thus the prediction sets also coincide.

With a distinct goal of selection-conditional coverage, 
we additionally address the challenge that 
realistic selection rules are often asymmetric to calibration 
and (multiple) test points. To this end, \textnormal{JOMI} generalizes \textnormal{MCP} 
via a more delicate framework for finding exchangeable subgroups. 
As we jointly consider the calibration and test points when 
defining the reference set, we call our method 
JOint Mondrian Conformal Inference (\textnormal{JOMI}). 

\subsection{Comparison to the Benjamini-Yekutieli (BY) correction}
\label{app:BY}

For post-selection inference, the Benjamini-Yekutieli (BY) procedure~\citep{benjamini2005false} is a natural solution, which is argued in prior works~\citep{bao2024selective,gazin2024selecting} to heuristically control the FCR in practice. In the conformal prediction context, the BY prediction set is of the form 
\$
\hat{C}_{\alpha,n+j}^{\textrm{BY}} = \Big\{y\in \cY\colon V(X_{n+j},y)\leq \text{Quantile}\big(1-\alpha |\hat\cS|/m; \{V_i\}_{i\in \cI_{\calib}}) \cup\{\infty\} \Big\}.
\$
which essentially produces conformal prediction sets at an adjusted level $\tilde\alpha = \alpha|\hat\cS|/m$. 

As we demonstrate in our numerical experiments, the BY procedure indeed offers valid selection-conditional coverage empirically (though no formal theory is established in general). However, they are overly conservative in many cases, producing non-informative sets $\{0,1\}$ in binary classification problems, and prediction intervals with twice the lengths of JOMI. These results highlight the importance of leveraging the exchangeability structure unique in predictive inference for efficient inference.  

Apart from conservativeness, an issue with the BY prediction set is that it does not adapt to the nature of the selection. Note that given the covariate value of a test unit, $\hat{C}_{\alpha,n+j}^{\textrm{BY}}$ depends on $\hat\cS$ only through its cardinality. Take top-K selection in a regression problem with $V(x,y) = |y-\hat\mu(x)|$ as an example. In this case, no matter whether we select  $K$ units with the highest predictions or the lowest predictions, the prediction set is always of the form $[\hat\mu(x)-\hat\eta,\hat\mu(x)+\hat\eta]$ with $\hat\eta$ being the $(1-\tilde\alpha)$-th quantile of all calibration scores. In contrast, by searching for ``similar'' calibration data in the reference set, our method provides more precise and targeted uncertainty quantification.

\section{Additional results on covariate-dependent selection}
\label{app:cov_dep}
This subsection is devoted to stating the tailored solutions under covariate-dependent rules (1)-(3)
in Section~\ref{sec:covariate} of the main text. 
Throughout, we write $S_i=S(X_i)$ 
and let $\cT = \{S_i\colon i\in [n+m]\}$.

\subsection{Top-K selection} 
In this example, we assume that $S_{n+j}$'s are distinct for all $j\in[m]$ almost 
surely for well-defined choice of top-K units. 
The top-K selection set is equivalently 
\$
\hcS_\topk = \{j\in[m]: S_{n+j} > T_\topk\},\quad \text{where}\quad 
T_\topk = \inf\Big\{t \in \cT: \textstyle{\sum_{j\in[m]}} \ind\{S_{n+j}\le t\}  \ge m-K\Big\}.
\$
In this case, the reference set for each $j \in \hcS_\topk$ takes a unified form:
\@\label{eq:topK_ref_set}
\hcR_\topk \,:=\,  \{i \in \cic: S_i > T_\topk\},
\@
which is also agnostic to the choice of $\mfS$.
Intuitively, $T_\topk$ refers to the $(m-K)$-th smallest score among the test units; 
when $S_{n+j}$ is among the top $K$, replacing $S_{n+j}$ with another 
score above $T_\topk$ does not change the  selection outcome.
This intuition is formalized in the following proposition, with its proof deferred to 
Supplementary Section~\ref{appd:prood_of_topk}.

\begin{proposition}
\label{prop:topK}
With the top-K selection rule, 
for any $j\in[m]$ and any selection taxonomy $\mfS \in 2^{[m]}$ 
such that $j \in \hcS_\topk$ and $\hcS_\topk \in \mfS$, 
the reference set obeys $\hcR_{n+j}(y) = \hcR_\topk$ for all $y\in \cY$.
\end{proposition}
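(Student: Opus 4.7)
The plan is to observe that the top-K rule depends only on the test covariates $\{X_{n+\ell}\}_{\ell\in[m]}$ and hence only on the test scores $\{S_{n+\ell}\}_{\ell\in[m]}$; in particular $\hcS^{\swap{i}{j}}(y)$ is independent of $y$, and the swap merely replaces the $j$-th test score $S_{n+j}$ by $S_i$ while leaving the remaining test scores unchanged (the calibration scores are permuted but do not enter the top-K computation). It therefore suffices to show, for each $i \in [n]$ and every $y\in\cY$, that $i\in\hcR_{n+j}(y)$ if and only if $S_i > T_\topk$.

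For the direction $S_i > T_\topk \Rightarrow i\in\hcR_{n+j}$: since $j\in\hcS_\topk$ forces $S_{n+j} > T_\topk$, exchanging one above-threshold test score for another preserves both the $(m-K)$-th order statistic of the test scores (so the threshold remains $T_\topk$) and the set of positions exceeding that threshold. Hence $\hcS^{\swap{i}{j}} = \hcS_\topk$, which automatically yields $j\in\hcS^{\swap{i}{j}}$ and $\hcS^{\swap{i}{j}}\in\mfS$.

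For the direction $S_i \le T_\topk \Rightarrow i\notin\hcR_{n+j}$: after the swap, only $K-1$ test scores strictly exceed $T_\topk$, namely the originally-selected units other than $j$, and the position $\ell^\star\in[m]\setminus\{j\}$ that realizes $T_\topk$ in the original test set is still present (it is distinct from $j$ because $S_{n+j} > T_\topk$). The swapped top-K therefore comprises those $K-1$ positions together with $\ell^\star$, none of which is $j$, so $j\notin\hcS^{\swap{i}{j}}$.

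The main bookkeeping obstacle is pinning down the identity of the post-swap threshold and confirming that $j$ genuinely falls out of the top-K whenever $S_i \le T_\topk$; the standing hypothesis that $S_{n+1},\ldots,S_{n+m}$ are distinct makes $\ell^\star$ unique and removes any ambiguity, and the borderline case $S_i = T_\topk$ is handled uniformly because the strict inequality defining top-K membership still excludes $S_i$. Piecing the two directions together gives $\hcR_{n+j}(y) = \hcR_\topk$ for all $y\in\cY$, as claimed.
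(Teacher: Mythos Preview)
Your argument is correct. A small remark on the reverse direction: the sentence ``the swapped top-K therefore comprises those $K-1$ positions together with $\ell^\star$'' is not literally the selection set $\hcS^{\swap{i}{j}}$ in the borderline case $S_i = T_\topk$---there the swapped threshold stays at $T_\topk$ and the strict inequality drops $\ell^\star$ as well, leaving only $K-1$ members. Your closing remark about the strict inequality shows you have this in hand, and in any case the only thing actually needed is $j\notin\hcS^{\swap{i}{j}}$: after the swap, position $j$ carries $S_i$ while $K$ other positions (the $K-1$ originally selected units other than $j$, plus $\ell^\star$) carry scores $\ge T_\topk \ge S_i$, so strictly fewer than $m-K$ swapped test scores lie below $S_i$, forcing $T_\topk^{\swap{i}{j}} \ge S_i$.

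The paper's proof takes a somewhat different route. It introduces a leave-$j$-out threshold
\[
\tT_\topk \;=\; \inf\Big\{t\in\cT:\ \textstyle\sum_{\ell\in[m]\setminus\{j\}} \ind\{S_{n+\ell}\le t\}\ge m-K\Big\},
\]
observes that this quantity is swap-invariant (the sum does not involve position $j$), and then verifies $T_\topk = \tT_\topk$ (using $S_{n+j}>T_\topk$) and $T_\topk^{\swap{i}{j}}=\tT_\topk$ whenever $S_i>T_\topk^{\swap{i}{j}}$. Both implications drop out by chaining these equalities, with the reverse direction handled by contrapositive rather than by locating the post-swap threshold directly. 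Your order-statistic argument is more elementary and self-contained for this one result; the paper's proxy-threshold device has the payoff that the same template is reused, essentially verbatim, for the calibration-quantile, fixed-$p$-value-threshold, and BH selection rules in the subsequent propositions.
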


Computing $\hcR_\topk$ as a universal reference set 
reduces the computation complexity of 
the reference construction step in Algorithm~\ref{alg:PI_cond_cov} to $O(\max(m, n))$.  

\subsection{Selection based on joint quantiles}

When selected units are those whose scores are above the 
$q$-th quantile of calibration and test scores, 
the selection set can be written as 
\$ 
\hcS_{\jq} = \{j \in [m]: S_{n+j} > T_\jq\},\quad  \text{where}\quad 
T_\jq = \inf\Big\{t \in \cT: 
\textstyle{\sum_{i=1}^{m+n}} \ind\{S_i \le t\}\ge q(n+m) \Big\} 
\$ 
By definition, the selection threshold 
$T_\jq$ is invariant to the permutation of the calibration and test scores, and 
it is straightforward to check that for all $j\in \hcS_\jq$, the reference set is  
\$ 
\hcR_{\jq} = \{i \in \cic: S_i > T_\jq\}.
\$
Replacing $\hcR_{n+j}$'s with $\hcR_{jq}$ reduces the computation complexity of
the reference set in Algorithm~\ref{alg:PI_cond_cov} to $O(\max(m, n))$.  

\subsubsection{Selection based on calibration quantiles}  
When selected units are those whose scores are greater than the $q$-th quantile 
of the calibration scores, 
the selection set is equivalently
\@ \label{eq:cq_selection}
\hcS_{\cq} = \big\{j\in [m]: S_{n+j} > T_{\cq} \big\},\quad \text{ where}\quad   
T_{\cq} = \inf\Big\{t \in \cT: \textstyle{\sum_{i=1}^n} \ind\{S_i \le t\} \ge qn \Big\}.
\@
For each $j \in \hcS_\cq$, the reference set consists of all calibration units in the top $q$-th quantile, i.e., 
\@ 
\hcR_{\cq} = \{i\in \cic: S_i > T_\cq\}.
\@
The validity of $\hcR_\cq$ as the reference set is established below, 
whose proof is deferred to Supplementary Section~\ref{appd:proof_cq}.

\vspace{0.5em}
\begin{proposition}
\label{prop:cq}
For the selection rule specified in~\eqref{eq:cq_selection}, 
for any $j\in[m]$ and any selection taxonomy $\mfS$ 
such that $j\in \hcS$ and $\hcS \in \mfS$,
it holds that  $\hcR_\cq = \hcR_{n+j}(y)$ defined 
in~\eqref{eq:def_hat_Rj} for all  $y\in \cY$. 
\end{proposition}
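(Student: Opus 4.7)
The plan is to use the fact that under the covariate-only selection rule in~\eqref{eq:cq_selection}, the threshold $T_\cq$ depends on the scores only through the $\lceil qn\rceil$-th order statistic of the calibration scores $(S_1,\dots,S_n)$: the infimum in its definition is attained at this order statistic since it lies in $\cT$. Because the rule does not involve $y$, the reference set $\hcR_{n+j}(y)$ is constant in $y$, and $\hcS^{\swap{i}{j}}$ is determined purely by the swapped covariate vector, which differs from the original only by exchanging the roles of $S_i$ and $S_{n+j}$ in the calibration and test score multisets. With this setup, the argument reduces to tracking how a single order statistic moves under such a swap, and I would split $i\in[n]$ into two cases.

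First I would handle indices $i$ with $S_i>T_\cq$. On the event $\{j\in\hcS,\,\hcS\in\mfS\}$ we also have $S_{n+j}>T_\cq$, so the swap replaces one calibration value strictly above $T_\cq$ by another value strictly above $T_\cq$. This leaves the $\lceil qn\rceil$-th smallest calibration score unchanged, so $T_\cq^{\swap{i}{j}}=T_\cq$. The new test score for unit $j$ is $S_i>T_\cq$, while the remaining test scores are unchanged, so $\hcS^{\swap{i}{j}}=\hcS\in\mfS$, and hence $i\in\hcR_{n+j}(y)$.

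Next I would handle indices with $S_i\le T_\cq$. Here the swap removes a calibration value in the lower tail and inserts a value $S_{n+j}>T_\cq$ in the upper tail, so the $\lceil qn\rceil$-th smallest calibration score can only weakly increase: $T_\cq^{\swap{i}{j}}\ge T_\cq\ge S_i$. The new test score for unit $j$ is $S_i$, which then fails the strict inequality $S_i>T_\cq^{\swap{i}{j}}$, so $j\notin\hcS^{\swap{i}{j}}$ and $i\notin\hcR_{n+j}(y)$. Combining the two directions gives $\hcR_{n+j}(y)=\{i\in[n]:S_i>T_\cq\}=\hcR_\cq$.

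The argument is essentially bookkeeping on one order statistic and I do not expect a serious obstacle. The one place that warrants care is the boundary $S_i=T_\cq$: since the threshold is defined via the weak inequality $\sum_i\ind\{S_i\le t\}\ge qn$ while the selection rule uses a strict inequality, the second case already covers ties at $T_\cq$ uniformly, but I would make the $\ge$ versus $>$ distinction explicit to rule out any off-by-one confusion.
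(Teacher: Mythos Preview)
Your proof is correct but takes a different route from the paper's. The paper introduces an auxiliary threshold $\tT_\cq = \inf\{t\in\cT:\sum_{i\in\cI_j}\ind\{S_i\le t\}\ge qn\}$ computed on the \emph{augmented} set $\cI_j=[n]\cup\{n+j\}$, which is manifestly swap-invariant; it then shows $T_\cq=\tT_\cq$ whenever $S_{n+j}>\tT_\cq$, and $T_\cq^{\swap{i}{j}}=\tT_\cq$ whenever $S_i>\tT_\cq$, so both thresholds collapse to the same invariant quantity in the relevant cases. You instead work directly with the $\lceil qn\rceil$-th order statistic of the calibration scores and track how replacing one entry above or below $T_\cq$ affects it. Your argument is more elementary and entirely self-contained for this proposition; the paper's proxy-threshold device is slightly less direct here but is the unifying pattern it reuses across several propositions (top-$K$, joint quantiles, conformal p-values), which likely motivates its choice. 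Both routes establish the crucial fact $\hcS^{\swap{i}{j}}=\hcS$ when $S_i>T_\cq$, which is what delivers membership in $\mfS$ for arbitrary taxonomies.
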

\vspace{0.5em}

Similar to the previous two cases, 
directly computing $\hcR_{\cq}$ 
reduces the computation complexity of the reference set to $O(\max(m, n))$.
As a side note,~\citet{bao2024selective} also considered the selection rule based on 
calibration quantiles (as an instance of ``calibration-assisted selection''), for 
which they provide asymptotically FCR control guarantees under additional assumptions.

\section{Additional experiment results}
\label{app:experiments}

This section collects additional experimental results that are 
omitted in Section~\ref{sec:drug} of the main text. 
Supplementary Section~\ref{app:subsec_topK_dpp} shows results for DPP
with selection rules (ii) and (iii) with the APS score.
Supplementary Section~\ref{app:subsec_dpp_binary} shows results for 
DPP with selection rules (i)-(iii) with the binary score. 
Supplementary Section~\ref{app:subsec_DTI} shows results for DTI 
with selection rules (ii) and (iii).

\subsection{Additional results for APS score in drug property prediction}
\label{app:subsec_topK_dpp}
In this part, we present results 
for DPP when test units are selected 
via (ii) top-K in both calibration and test data (Supplementary Section~\ref{subsubsec:topK_dpp_mix}), 
and (iii) only calibration data as reference (Supplementary 
Section~\ref{subsubsec:topK_dpp_calib}). 

\subsubsection{Top-K selection in mixed sample}
\label{subsubsec:topK_dpp_mix}
 
We use the same values of $(\alpha,K)$ and the same scheme of repeated experiments 
as in Section~\ref{subsubsec:dpp-topK-test} of the main text, with $\mathfrak{S} = 2^{[m]}$.
Figure~\ref{fig:dpp_topK_mix} shows that vanilla conformal prediction incurs a 
large coverage gap for selected units, while $\mname$ and its randomized version 
both achieve valid selection-conditional coverage. 
The coverage gap due to discretization is eliminated again by randomization. 
Randomization also reduces the prediction set sizes on the right panel, 
but again, with $\alpha=0.8$, we see the concerning issue that 
it suffices for vanilla CP to construct empty prediction sets 
for selected units to achieve marginal coverage. 

\begin{figure}[htbp]
    \includegraphics[width=\textwidth]{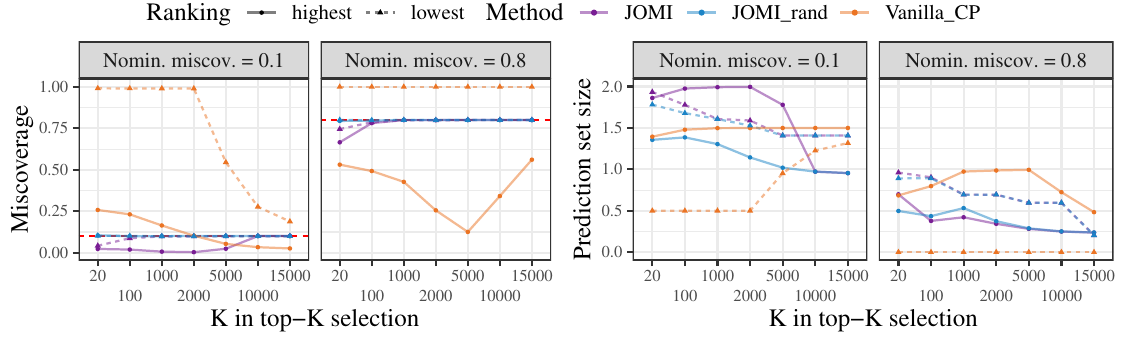}
    \caption{Empirical selection-conditional miscoverage (left) and prediction set size (right) in drug property prediction for test units whose $\hat\mu(X_{n+j})$ are top-K (\texttt{highest}, solid line) or bottom-K (\texttt{lowest}, dashed line) among both calibration and test units. 
    Details are otherwise the same as Figure 3.}
    \label{fig:dpp_topK_mix}
\end{figure}

\vspace{-1em}

\subsubsection{Calibration-referenced selection}
\label{subsubsec:topK_dpp_calib}

Now we select test units whose predicted affinities must be greater (or smaller) 
than the $K$-th largest (or smallest) predictions in the calibration set. 
We use the choices of $(\alpha,K)$ and the scheme of repeated experiments 
as in Section~\ref{subsubsec:dpp-topK-test} of the main text with $\mathfrak{S} = 2^{[m]}$.  

\begin{figure}[H]
    \includegraphics[width=\textwidth]{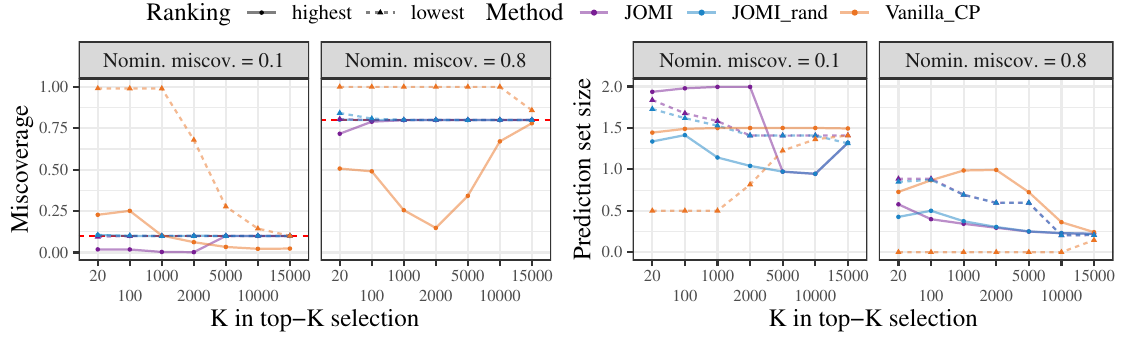}
    \caption{Empirical selection-conditional miscoverage (left) and prediction set size (right) in drug property prediction for test units selected by (iii) whose $\hat\mu(X_{n+j})$ are greater than the $K$-th highest (solid line) or smaller than the $K$-th lowest (dashed line) in calibration units. Details are otherwise 
    as Figure 3.}
    \label{fig:dpp_topK_calib}
\end{figure}

Figure~\ref{fig:dpp_topK_calib} shows that our proposed methods achieve valid 
coverage (exact with \texttt{JOMI\_rand}) while vanilla conformal prediction fails to. 
Again, vanilla CP is either over-confident or under-confident for units with lowest or highest predicted affinities. 

\subsection{Additional results for binary score in drug property prediction}
\label{app:subsec_dpp_binary}

In Figures~\ref{fig:dpp_topK_test_bin},~\ref{fig:dpp_topK_mix_bin}, and~\ref{fig:dpp_topK_calib_bin}, we show the empirical selection-conditional coverage and 
prediction set sizes for selection rules (i)-(iii)
when the prediction sets are constructed with  
the binary score $V(x,y) = y(1-\hat\mu(x))+(1-y)\hat\mu(x)$. 
We observe that while vanilla CP is either over-confident or under-confident, 
our proposed methods achieve valid coverage across all configurations. 
The coverage gap of the non-randomized $\mname$ is smaller with this specific 
score function compared with the APS score.

\begin{figure}[htbp]
    \includegraphics[width=\textwidth]{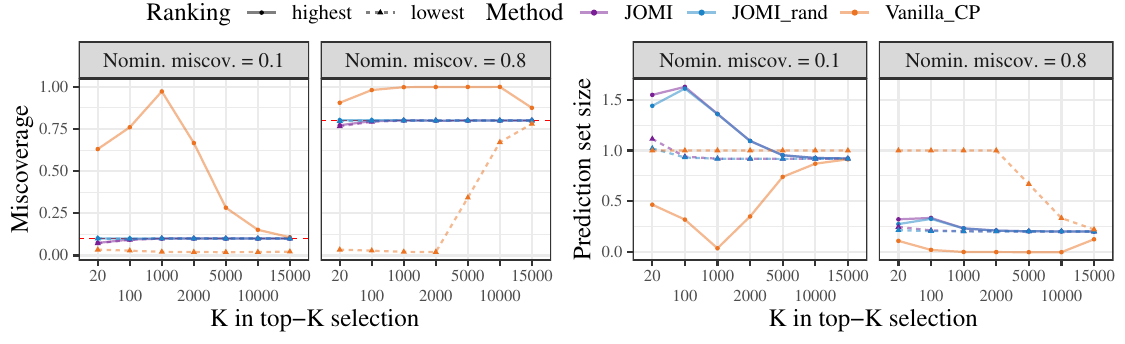}
    \caption{Empirical selection-conditional miscoverage (left) and prediction 
    set size (right) in drug property prediction with the binary score, for test units selected by (i) top-K (or bottom-K) in test data. Details are otherwise as 
    Figure 3.}
    \label{fig:dpp_topK_test_bin}
\end{figure}

\begin{figure}[htbp]
    \includegraphics[width=\textwidth]{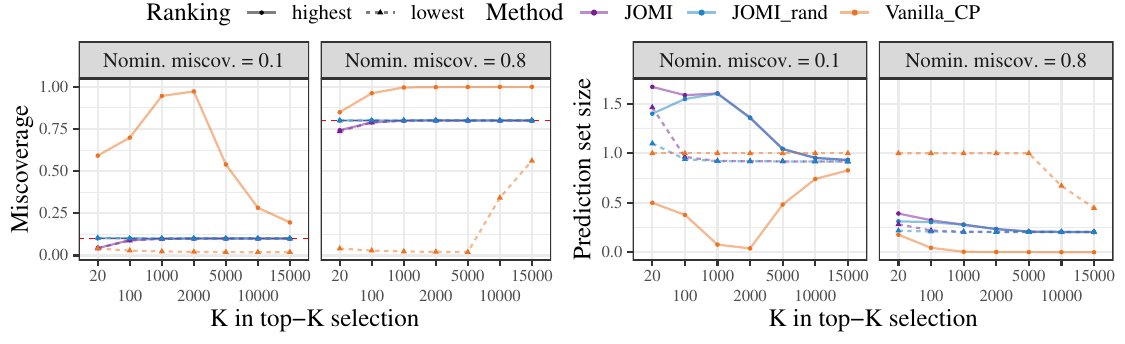}
    \caption{Empirical selection-conditional miscoverage (left) and prediction set size (right) in drug property prediction with the binary score, for test units selected by (ii) top-K (or bottom-K) in both calibration and test data. Details are otherwise as Figure~\ref{fig:dpp_topK_mix}.}
    \label{fig:dpp_topK_mix_bin}
\end{figure}

\begin{figure}[htbp]
    \includegraphics[width=\textwidth]{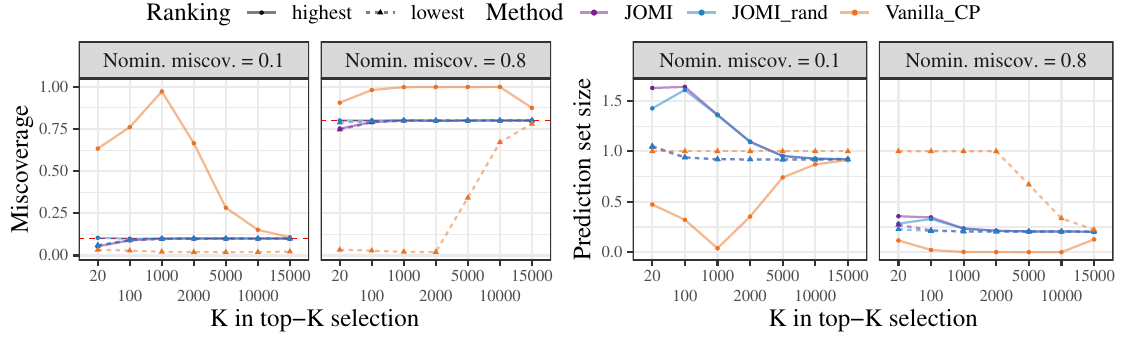}
    \caption{Empirical selection-conditional miscoverage (left) and prediction set size (right) in drug property prediction with the binary score, for test units that are (iii) greater than the $K$-th highest (or smaller than the $K$-th lowest) predicted affinities in the calibration data. Details are as Figure~\ref{fig:dpp_topK_calib}.}
    \label{fig:dpp_topK_calib_bin}
\end{figure}

\subsection{Additional results for empirical FCR in drug property prediction}
\label{app:subsec_DPP_FCR}

In this part, we present empirical FCR results in drug property prediction, 
which complement the three tasks in Section~\ref{sec:drug} in the main text. 

Figure~\ref{fig:dpp_topK_fcr} plots the empirical FCR in DPP in three top-K selection tasks with two conformity scores (APS and binary) at nominal miscoverage level $\alpha=0.1$. Since the selection set size is fixed, by our theory, selection-conditional coverage implies FCR control, as evidence from the results. 

Figure~\ref{fig:dpp_csel_fcr} plots the empirical FCR when the units are selected by the conformal selection algorithm in~\cite{jin2022selection}.
Here, the empirical FCR of JOMI is sill very close to the nominal conditional coverage level, because the selection set size is stable. 

Finally, Figure~\ref{fig:dpp_cons_fcr} plots the empirical FCR when the units are selected by running the constrained optimization program. While the empirical FCR of JOMI is close to the nominal conditional coverage level, the FCR of vanilla CP with both conformity scores is lower than the selection-conditional coverage. This is because the selection set can sometimes be empty, and FCR is considered on this event. This also shows the limitation of FCR as an overly ``optimistic'' error metric: even if the FCR is low, on the ``interesting'' event that the selection set is nonzero, the coverage might still be very low.
    
\begin{figure}[htbp]
    \centering
    \includegraphics[width=0.8\textwidth]{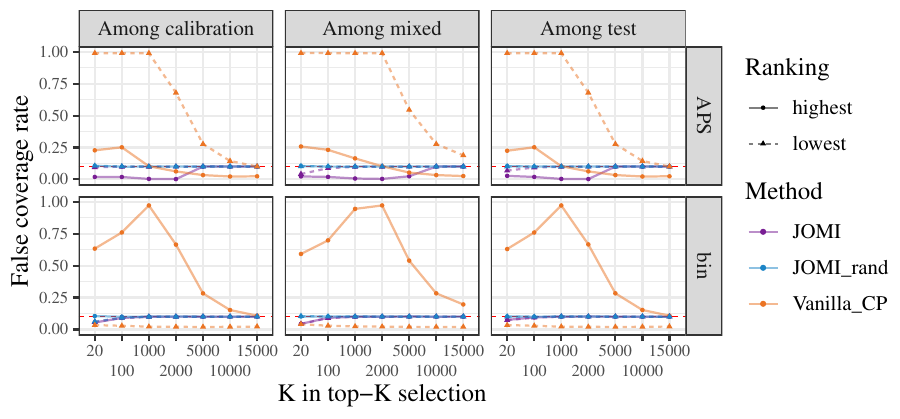}
    \caption{Empirical false coverage rate  across  $N=1000$ independent runs of \texttt{Vanilla\_CP}, \texttt{JOMI}, and \texttt{JOMI\_rand} in top-K selection for drug property prediction. The $x$-axis is the nominal miscoverage level $\alpha=0.1$.}
    \label{fig:dpp_topK_fcr}
\end{figure}

\begin{figure}[htbp]
    \centering
    \includegraphics[width=0.6\textwidth]{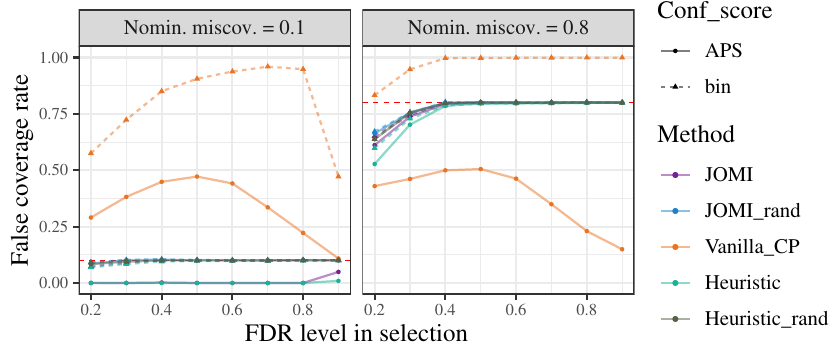}
    \caption{Empirical false coverage rate  across  $N=1000$ independent runs of \texttt{Vanilla\_CP}, \texttt{JOMI}, and \texttt{JOMI\_rand} in drug property prediction when units are selected by FDR-controlled conformal selection. The $x$-axis is the nominal FDR level in the selection algorithm.}
    \label{fig:dpp_csel_fcr}
\end{figure}

\begin{figure}[H]
    \centering
    \includegraphics[width=0.4\textwidth]{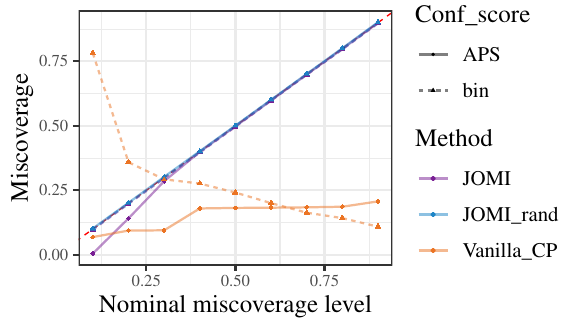}
    \caption{Empirical false coverage rate  across  $N=1000$ independent runs of \texttt{Vanilla\_CP}, \texttt{JOMI}, and \texttt{JOMI\_rand} in drug property prediction when units are selected by running the optimization program. The $x$-axis is the nominal miscoverage level $\alpha\in \{0.1,\dots,0.9\}$.}
    \label{fig:dpp_cons_fcr}
\end{figure}

\subsection{Application to drug-target-interaction prediction}
\label{subsec:DTI}

In this part, we include experiments in drug-target-interaction prediction tasks omitted in the main text.  
We use the DAVIS dataset in the DeepPurpose library~\citep{huang2020deeppurpose}. 
The outcome $Y$ is a continuously-valued variable indicating the 
binding affinity of a drug-target pair. 
The features $X$ combine the encodings of the drug compound 
and the target, where we encode the drug via a convolutional neural network (CNN) structure, and the target via a Transformer encoder. 
Our framework is agnostic to the choice of these encoding methods. 

We train a point predictor $\hat\mu(\cdot)\colon \cX\to \RR$ with $20\%$ randomly selected data in the entire dataset using a small neural network. 
The remaining $80\%$ data are then randomly split into two equally-sized folds 
as $\cD_\calib$ and $\cD_\test$. 
Similar to DPP, the training phase is repeated for $10$ independent runs; 
within each run, we independently run $100$ random split of calibration/test data 
and subsequent selection and construction of prediction sets. 
We use the nonconformity score $V(x,y) = |y-\hat\mu(x)|$ throughout.

\subsubsection{Conformalized selection for heterogeneous $c_{n+j}$'s}

We then study the behavior of vanilla CP and our proposed methods for 
selected units by running conformalized selection~\citep{jin2022selection} to find 
test pairs $j\in [m]$ whose outcomes are greater than 
data-dependent thresholds $c_{n+j}$. Here, we set $c_{n+j}$ as the $0.7$-th quantile 
of activities for training pairs with the same target sequence as the $j$-th data, while controlling the FDR below $q\in \{0.2,0.3,\dots,0.9\}$. 
The method of~\cite{bao2024selective} is not applicable to the heterogeneous thresholds here. 

Figure~\ref{fig:dti_csel} shows the miscoverage (the left panel) 
and average length of prediction intervals (the right panel) using JOMI and vanilla conformal prediction. For those most promising units picked by conformalized selection, 
vanilla CP (orange) tends to be over-confident with 
exceedingly high miscoverage. 
In contrast, 
our proposed methods achieve near-exact coverage across all values of $q$ 
and $\alpha$. Again, we observe that 
the FCR of our methods is even lower than $\hat{\textnormal{Miscov}}$, 
empirically yielding valid FCR control.

\begin{figure}[htbp]
    \includegraphics[width=\textwidth]{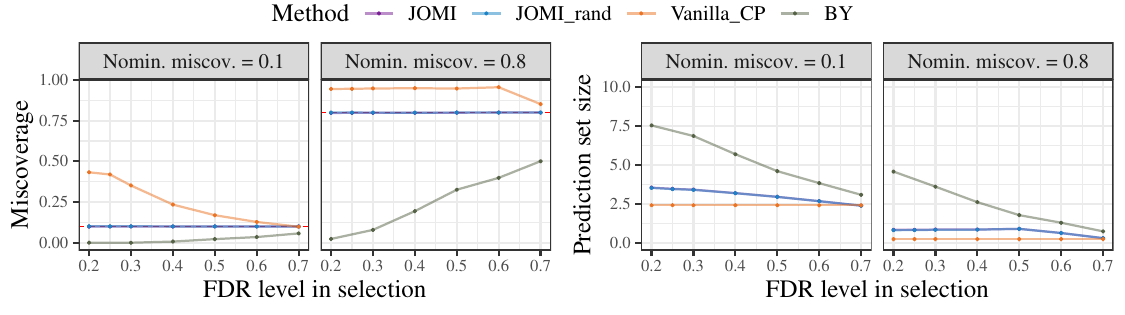}
    \caption{Empirical selection-conditional miscoverage (left) and prediction set size (right) in drug-target-interaction 
    for vanilla conformal prediction (\texttt{Vanilla\_CP}) and JOMI (\texttt{JOMI} and \texttt{JOMI\_rand}), 
    at nominal miscoverage $\alpha\in\{0.1,0.8\}$ for test units selected by conformalized selection at nominal FDR levels $q\in \{0.2,\dots,0.9\}$.}
    \label{fig:dti_csel}
\end{figure}

Our method can sometimes lead to two separate prediction intervals. 
We calculate the 
average number of intervals among selected test units in each experiment 
and report their boxplot in Figure~\ref{fig:dti_csel_interval}.
The frequency of producing two segments is very low across all cases. 

\begin{figure}[htbp] 
    \centering
    \includegraphics[width=0.9\textwidth]{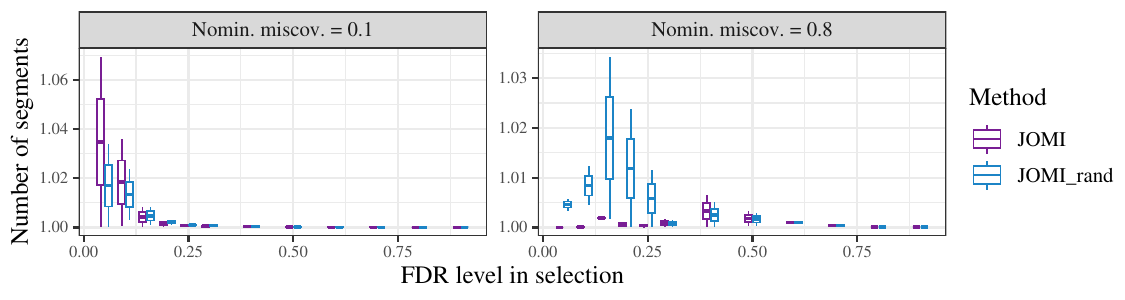}
    \caption{Boxplot of the average number of  intervals produced by \texttt{JOMI} and \texttt{JOMI\_rand} among selected test units, over $N=1000$ independent runs, 
    at nominal miscoverage $\alpha\in\{0.1,0.8\}$ for various experiment configurations.}
    \label{fig:dti_csel_interval}
\end{figure}

\subsubsection{Selection under budget constraint}
\label{subsec:dti_constraint}
Finally, we consider a covariate-dependent selection rule, 
where the scientist aims to maximize the predicted activities of selected drugs 
subject to a fixed budget for 
subsequent development.

Formally, for each 
drug-target pair $i\in [m+n]$, 
we let $c_i$ be the $0.7$-th quantile of true activity scores in the training set 
with the same target, $\hat\mu(X_i)$ be its predicted binding affinity, 
and $L_i>0$ be its development costs. 
Then, $\cS(\cD_\calib,\cD_\test)$ aims to solve the following optimization problem: 
\#\label{eq:knapsack}
\mathop{\text{maximize}}_{S\subseteq [m]}&\quad  
\sum_{j\in S} \big[\hat\mu(X_{n+j}) - c_{n+j}\big] \\ 
\text{subject to}&\quad  \sum_{j\in S} L_{n+j} \leq \bar{L}, \notag
\#
where $\bar{L} = 200$ is a budget limit. Again, as the dataset does not come with 
drug development costs, we generate $L_i = \exp(3\hat\mu(X_i)/\bar\mu) +   |\sin(\hat\mu(X_i))| +\epsilon_i-1+\varepsilon_i$, 
where $\bar{\mu} = \max_{i \in \cdt}|\hat\mu(X_i)|$, and
$\epsilon_i \sim \text{Exp}(1)$, $\varepsilon_i\sim \text{Unif}([0,1])$ are independent random variables.

The optimization problem~\eqref{eq:knapsack} is known as the Knapsack problem which is 
NP-hard. Nevertheless, there are efficient approximate solvers; in our experiments, 
we use the Python package \texttt{mknapsack}~\citep{mknapsack}. 
Existing methods such as~\cite{bao2024selective} cannot 
deal with such a complicated selection process. 
In contrast, our framework tackles this problem with 
a computation complexity that is polynomial in $m$, $n$, 
and the complexity of the subroutine $\cS(\cdot,\cdot)$. 

In Figure~\ref{fig:dti_cons}, 
we report the empirical miscoverage rate, average length of prediction intervals, 
and the sizes of $\cS(\cD_\calib,\cD_\test)$ and the reference set $\hat\cR_j$ 
for all selected units, across $N=1000$ independent runs of our procedures. 
While vanilla conformal prediction is over-confident (with a much higher miscoverage rate), 
our methods yield exact coverage for the focal units. 
From the length of prediction sets, we see that longer prediction intervals are needed to cover 
promising drug-target pairs. 
While the number of selected test units are relatively stable, 
the reference set size may vary a bit. 

\begin{figure}[htbp]
    \centering
    \includegraphics[width=0.85\textwidth]{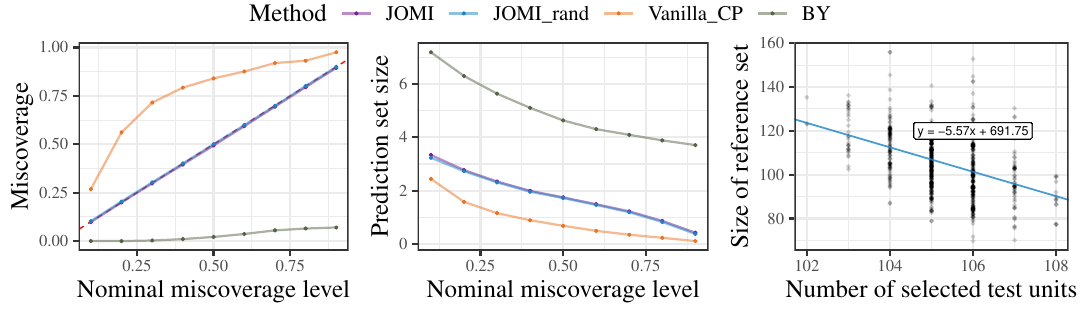}
    \caption{Empirical miscoverage rate (left), average length of prediction interval (middle), 
    and scatter plots for $|\cS(\cD_\calib,\cD_\test)|$ versus $\{|\hat\cR_j|\}_{j\in \cS(\cD_\calib,\cD_\test)}$ (right), across  $N=1000$ independent runs of \texttt{Vanilla\_CP}, \texttt{JOMI}, and \texttt{JOMI\_rand}. The $x$-axis of left and middle plots is the nominal miscoverage level $\alpha\in\{0.1,0.2,\dots,0.9\}$.}
    \label{fig:dti_cons}
\end{figure}

The empirical FCR in the two tasks is reported in Supplementary Section~\ref{app:subsec_DTI_FCR} showing similar messages as in drug property prediction tasks. While JOMI consistently controls the empirical FCR, it can in general be lower than the selection-conditional coverage when the selection set may be empty. 

\subsection{Additional results for drug-target-interaction prediction}
\label{app:subsec_DTI}

In this section, we collect results for DTI 
with (1) \emph{covariate dependent top-K selection}.
We present results under selection rule (1-i) in Supplementary Section~\ref{subsubsec:dti-topK-test},
(1-ii) in Supplementary Section~\ref{app:subsubsec:DIT_topk_mixed}
and (1-iii) in Supplementary Section~\ref{app:subsubsec:DIT_topk_calib}.

\subsubsection{Top-K selection for fixed K}
\label{subsubsec:dti-topK-test}

We consider the case where test units are selected via (i) top-K (or bottom-K) among predicted affinities in the test set. 
We apply the vanilla conformal prediction and our proposed methods 
with $\mathfrak{S} = 2^{[m]}$, 
so that selection-conditional coverage implies FCR control. 
Results under selection rules (ii) and (iii) demonstrate similar patterns, 
which are in Supplementary Section~\ref{app:subsec_DTI}.

Figure~\ref{fig:dti_topK_test} shows $\hat{\textrm{Miscov}}$ (the left panel) and the average length of the prediction intervals (the right panel)
for $K\in \{20,100, 1000, 2000, 5000, 10000\}$ and nominal miscoverage level $\alpha\in\{0.1,0.8\}$. We see that vanilla conformal prediction is over-confident for pairs with the highest predicted affinities, while being under-confident for the lowest. In contrast, our proposed methods achieve near-exact coverage under different nominal levels. 
By comparing the solid and dashed lines in the right panel, 
we observe that in this problem, the intrinsic uncertainty in higher-predicted-affinity pairs is larger, so a longer prediction interval is needed to achieve certain coverage. 

Finally, similar to DPP, BY is overly conservative despite valid coverage. The right panel reveals another drawback of BY: it doesn not adapt to the nature of selection. More specifically, by definition, it only depends on the $\alpha |\hat\cS|/m$-th quantile of calibration scores. As such, it only relies on the size of the selection set, neglecting the nature of the selection (i.e., whether we select highest predictions or lowerst predictions). In both cases, it returns a prediction set of the form $[\hat\mu(X_{n+j})-\hat\eta,\hat\mu(X_{n+j})+\hat\eta]$ where $\hat\eta$ remains the same between two ranking options. 
In contrast, by finding ``similar'' calibration data in the reference set, JOMI provides more precise and targeted uncertainty quantification.

\begin{figure}[htbp]
    \includegraphics[width=\textwidth]{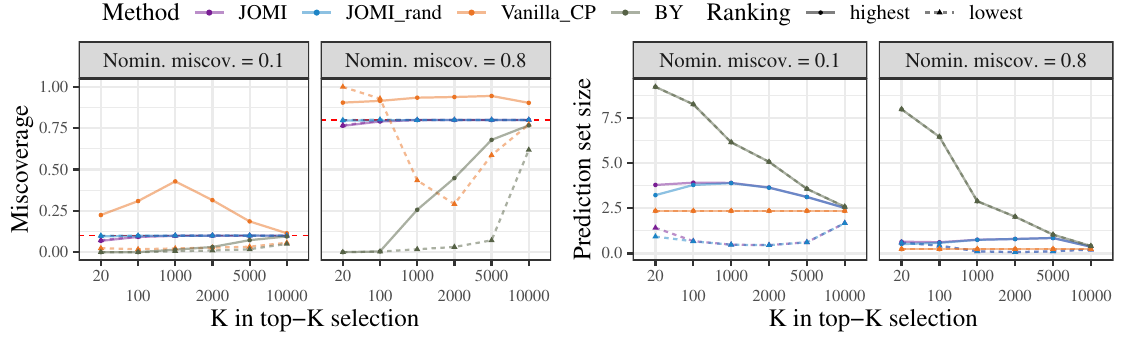}
    \caption{Empirical miscoverage (left) and prediction interval length (right) in drug-target-interaction prediction, for vanilla conformal prediction (\texttt{Vanilla\_CP}), BY (\texttt{BY}), JOMI (\texttt{JOMI}), and randomized JOMI applied to test units whose $\hat\mu(X_{n+j})$ are top-$K$ (\texttt{highest}, solid line) or bottom-$K$ (\texttt{lowest}, dashed line) among test units. The red dashed line is the nominal miscoverage level $\alpha\in\{0.1,0.8\}$.}
    \label{fig:dti_topK_test}
\end{figure}

\subsubsection{Top-K selection in mixed sample}
\label{app:subsubsec:DIT_topk_mixed}

Figure~\ref{fig:dti_topK_mix} shows the miscoverage 
and average length of prediction intervals when (ii) the focal test units 
are those whose predicted affinities are among the $K$ highest or lowest 
in both calibration and test data. We see that vanilla CP 
is over-confident for highest-prediction units, while under-confident for 
lowest-prediction units; BY is overly conservative in both cases. In contrast, our proposed methods always yield valid coverage.  

\begin{figure}[htbp]
    \includegraphics[width=\textwidth]{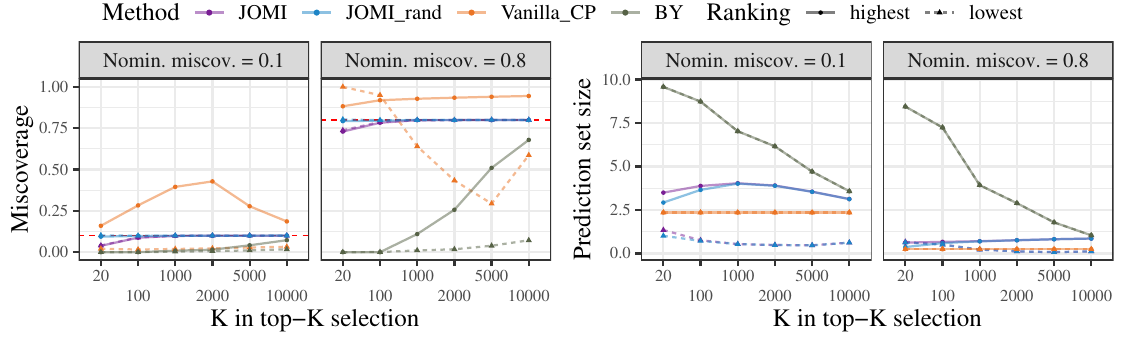} 
    \caption{Empirical miscoverage (left) and prediction interval length (right) in drug-target-interaction prediction, for selected test units whose $\hat\mu(X_{n+j})$ are top-$K$ among {both calibration and test units}. 
    Details are otherwise the same as Figure~\ref{fig:dti_topK_test}.}
    \label{fig:dti_topK_mix}
\end{figure}

\subsubsection{Calibration-referenced selection}
\label{app:subsubsec:DIT_topk_calib}

Figure~\ref{fig:dti_topK_calib} shows the results with (iii) 
calibration-referenced selection, i.e., the selected test units 
are those whose predicted affinities are greater than (or smaller than) 
the $K$-th highest (or lowest) in the calibration data. 
Similar to rules (i) and (ii), vanilla CP is not calibrated for these focal units, 
while our methods yield near-exact coverage across the whole spectrum of $K$. As usual, BY is overly conservative.  

\begin{figure}[H]
    \includegraphics[width=\textwidth]{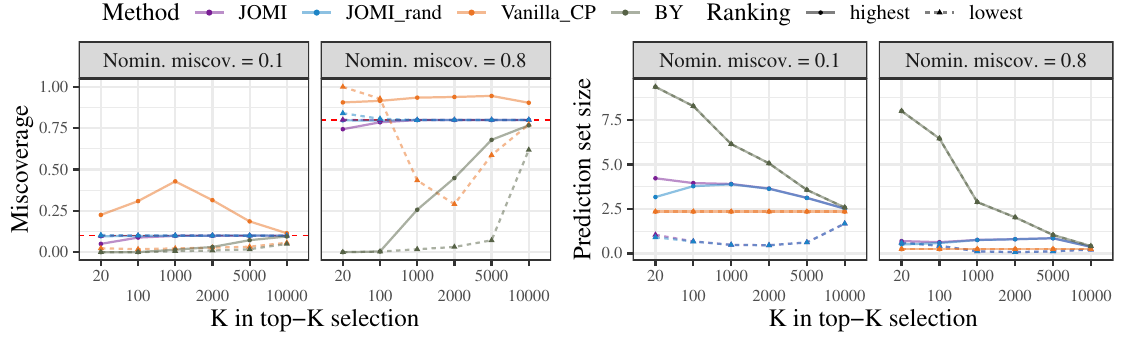}
    \caption{Empirical miscoverage (left) and prediction interval length (right)  in drug-target interaction prediction, for selected test units whose $\hat\mu(X_{n+j})$ surpass the highest $K$ predicted affinities in the calibration data (or smaller than the lowest). Details are otherwise the same as Figure~\ref{fig:dti_topK_test}.}
    \label{fig:dti_topK_calib}
\end{figure}

\subsection{Additional results for empirical FCR in drug target interaction}
\label{app:subsec_DTI_FCR}

    In this part, we collect results for the empirical FCR in drug-target interaction tasks in Section~\ref{subsec:DTI}. 

    Figure~\ref{fig:dti_csel_fcr} plots the empirical FCR when units are selected by running conformalized selection. We observe lower empirical FCR than selection-conditional coverage, demonstrating limitations of FCR as an error metric  similar to the message in Section~\ref{app:subsec_DPP_FCR}. 

    Figure~\ref{fig:dti_cons_fcr} plots the empirical FCR when units are selected by constrained optimization. We observe tight empirical FCR below the nominal level, and in these experiments the selection set is always nonempty.

\begin{figure}[H]
    \centering
    \includegraphics[width=0.5\textwidth]{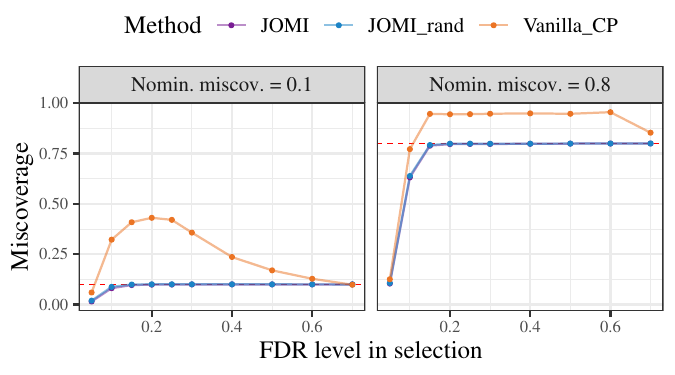}
    \caption{{Empirical FCR in drug-target interaction prediction when units are selected by conformal selection with heterogeneous $c_{n+j}$. The $x$-axis is the nominal FDR level in the selection program.}}
    \label{fig:dti_csel_fcr}
\end{figure}

\begin{figure}[H]
    \centering
    \includegraphics[width=0.45\textwidth]{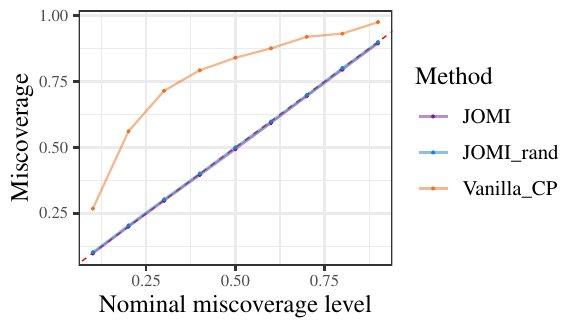}
    \caption{{Empirical FCR in drug-target interaction prediction when units are selected by constrained optimization. The $x$-axis is the nominal miscoverage level $\alpha\in \{0.1,\dots,0.9\}$.}}
    \label{fig:dti_cons_fcr}
\end{figure}

\subsection{Additional results for empirical FCR in health risk prediction}
\label{app:subsec_icu_fcr}

Figure~\ref{fig:icu_fcr} presents the empirical FCR for three tasks in health risk prediction. In these examples, the selection set sizes are stable and nonempty, hence the empirical FCR is also tightly controlled. 

\begin{figure}[H]
    \centering
    \includegraphics[width=0.8\textwidth]{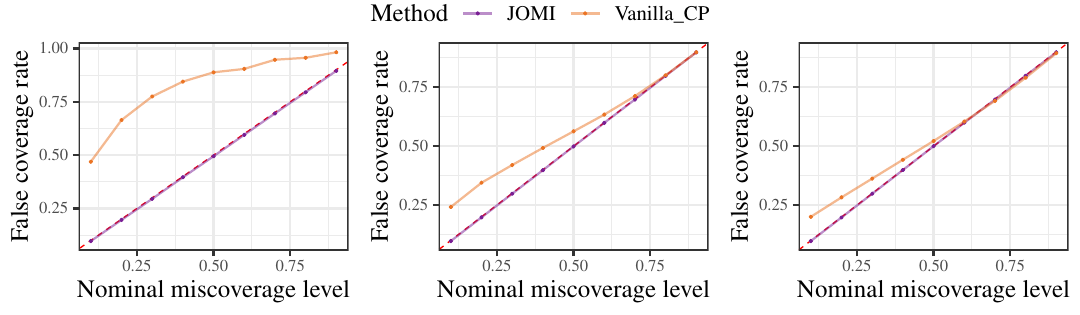}
    \caption{{Empirical FCR in health risk prediction when units are selected by constrained optimization (left), length of preliminary prediction interval (middle), and lower bound of preliminary prediction interval (right). The $x$-axis is the nominal miscoverage level $\alpha\in \{0.1,\dots,0.9\}$.}}
    \label{fig:icu_fcr}
\end{figure}

\section{Technical proofs}

\subsection{Proof of Proposition~\ref{prop:strong_weak}}
\label{app:strong_weak}

Suppose $\hat{C}_{\alpha,n+j}^{(\ell)}$ satisfies $\PP (Y_{n+j} \in \hC_{\alpha,n+j}^{(\ell)} \given 
j\in \hat\cS,~\hat\cS \in \mathfrak{S}_\ell ) \geq 1-\alpha$
for a set of mutually disjoint 
taxonomies $\{\mfS_\ell\}_{\ell\in\cL}$ 
such that $\cup_{\ell\in \cL}\mfS_\ell = 2^{[m]}$, 
and we let $\hat{C}_{\alpha,n+j} = \hat{C}_{\alpha,n+j}^{(\ell)}$ whenever $j\in \hat\cS$ and $\hat\cS\in \mfS_\ell$. This is well-defined because $\mfS_\ell$'s are disjoint. 
Then 
\$
\PP(Y_{n+j}\in \hat{C}_{\alpha,n+j} \given j\in \hat\cS) 
&\stackrel{\textrm{(a)}}{=} 
\sum_{\ell\in \cL}\PP\big(Y_{n+j}\in \hat{C}_{\alpha,n+j}^{(\ell)}, \hcS \in \mfS_\ell \biggiven j\in \hat\cS\big) \\ 
& = \sum_{\ell\in \cL}\PP\big(Y_{n+j}\in \hat{C}_{\alpha,n+j}^{(\ell)} \biggiven j\in \hat\cS,\hat\cS\in \mfS_\ell\big) \cdot \PP(\hat\cS \in \mfS_\ell \given j\in \hat{\cS}) \\ 
& \stackrel{\textrm{(b)}}{\geq} (1-\alpha) \cdot \sum_{\ell\in \cL}\PP(\hat\cS \in \mfS_\ell \given j\in \hat{\cS}) 
= 1-\alpha,
\$
where in the first step we use the fact that  $\{\mfS_\ell\}_{\ell\in\cL}$  are disjoint 
and their union is the full set, and the second step follows from the strong selection-conditional coverage. 
By definition, it satisfies the weak selection-conditional coverage (2).

\subsection{Proof of Proposition~\ref{prop:notions}}
\label{appd:proof_prop_notions} 
Suppose $\hC_{\alpha,n+j}^{(\ell)}$ satisfies the strong 
selection-conditional coverage (3) for a class of mutually disjoint 
taxonomies $\{\mfS_\ell\}_{\ell\in\cL}$ such that $\cup_{\ell\in \cL}\mfS_\ell = 2^{[m]}$, 
and for each $\ell\in \cL$, it holds that 
$\mfS_\ell \subseteq \{S\subseteq[m]\colon |S|=r(\ell)\}$ for some $0\leq r(\ell) \leq m$. 
By definition, and since $\{\mfS_\ell\}_{\ell\in \cL}$ are disjoint, we have 
\@\label{eq:fcr_step1}
\text{FCR} & = 
\EE\bigg[\frac{\sum_{j\in [m]} 
\ind\{j\in \hcS, Y_{n+j}\notin \hC_{\alpha,n+j}\}}{|\hcS|\vee 1} \bigg]\\
&=
\sum_{\ell\in \cL} 
\EE\bigg[\ind\{\hcS\in \mfS_\ell\}\cdot \frac{\sum_{j\in [m]} 
\ind\{j\in \hcS, Y_{n+j}\notin \hC_{\alpha,n+j}^{(\ell)}\}}{|\hcS|\vee 1} \bigg].
\@ 
Noting that $\mfS_\ell \subseteq \{S\subseteq[m]\colon |S|=r(\ell)\}$ hence 
$|\hcS|=r(\ell)$ when $\hcS\in \mfS_\ell$, 
we can rewrite the above as
\@\label{eq:fcr_step2}
& \sum_{\ell\in \cL}  \sum^m_{j=1} \frac{1}{r(\ell) \vee 1}
\EE\big[\ind\big\{\hcS\in \mfS_\ell, j\in \hcS, Y_{n+j} \notin 
\hC_{\alpha,n+j}^{(\ell)} \big\}\big]\notag \\ 
=~&\sum_{\ell\in \cL}\sum^m_{j=1} \frac{1}{r(\ell) \vee 1}
\EE\Big[ \ind\{\hcS\in \mfS_\ell, j\in \hcS\}
\cdot \EE\big[\ind\{Y_{n+j} \notin 
\hC_{\alpha,n+j}^{(\ell)}\} \biggiven \ind\{j\in \hcS, \hcS \in \mfS_\ell\} \big]\Big] \notag\\
=~&\sum_{\ell\in \cL}\sum^m_{j=1} \frac{1}{r(\ell) \vee 1}
\EE\Big[ \ind\{\hcS\in \mfS_\ell, j\in \hcS\}
\cdot \PP\big(Y_{n+j} \notin 
\hC_{\alpha,n+j}^{(\ell)} \biggiven j\in \hcS, \hcS \in \mfS_\ell \big)\Big] \notag\\
\le~& \sum_{\ell\in \cL} \sum^m_{j=1}
 \frac{\alpha}{r(\ell) \vee 1}\EE\big[\ind\{j \in \hcS, \hcS \in \mfS_\ell\}\big], 
\@
where the last inequality follows from the strong selection-conditional coverage.
Finally, the FCR control proof is completed by noting that
\$ 
\eqref{eq:fcr_step2} =  \sum_{\ell\in \cL} \frac{\alpha}{r(\ell) \vee 1} 
\EE\Big[|\hcS| \cdot \ind\{\hcS \in \mfS_\ell\}\Big] 
= \alpha \cdot \PP\big(\hcS\neq \varnothing\big) \le \alpha.
\$

\subsection{Proof of Proposition~\ref{prop:weak_FCR}}
\label{app:proof_weak_FCR}

Consider the following example with $m = 2$ test units and $\cY = \RR$. 
Suppose the target miscoverage level $\alpha \in(0,1)$, and  
the selection rule returns 
\$
\hcS = 
\begin{cases}
    \{1,2\} & \text{with probability } 1-\frac{2\alpha}{1+\alpha},\\
    \{1\} & \text{with probability } \frac{\alpha}{1+\alpha},\\
    \{2\} & \text{with probability } \frac{\alpha}{1+\alpha}.
\end{cases}
\$
Consider prediction sets $\hC_{\alpha,n+1}$ and $\hC_{\alpha,n+2}$ such that 
\begin{itemize}
\item when $\hcS = \{1,2\}$, $\hC_{\alpha,n+1} = \RR, \hC_{\alpha,n+2} = \RR$;
\item when $\hcS = \{1\}$, $\hC_{\alpha,n+1} = \varnothing$;
\item when $\hcS = \{2\}$, $\hC_{\alpha,n+2} = \varnothing$.  
\end{itemize}
We can check that the weak selection-conditional coverage is satisfied for $j=1$:
\$ 
\PP(Y_{n+1} \in \hC_{\alpha,n+1} \given 1 \in \hcS) 
= \frac{\PP(Y_{n+1} \in \hC_\alpha(X_{n+1}), 1\in \hcS)}{\PP(1 \in \hcS)}
= \frac{1 - 2\alpha/(1+\alpha)}{1-2\alpha/(1+\alpha)+\alpha/(1+\alpha)} = 1-\alpha. 
\$ 
By symmetry, the weak selection-conditional coverage is also satisfied for $j=2$.
Meanwhile, one can check that
\$ 
\text{FCR} = \frac{\alpha}{1+\alpha} + \frac{\alpha}{1+\alpha} = \frac{2\alpha}{1+\alpha}
> \alpha.
\$
Therefore, the constructed prediction sets 
$\hC_{\alpha,n+1}$ and $\hC_{\alpha,n+2}$ 
satisfy the weak selection-conditional coverage
but violate the FCR control.

\subsection{Proof of Theorem~\ref{thm:PI_cond_set}}
\label{appd:proof_PI_cond_set}
  
Recall that $\cD_{j} = \cD_{\calib}\cup\{Z_{n+j}\}$ and
$\cD_{j}^c = \{X_{n+\ell}\}_{\ell \in [m] \backslash \{j\}}$. 
For notational convenience, 
we also denote the index sets 
$\cI_j = [n] \cup \{n+j\}$ and $\cI_j^c = \citest \backslash \{ j\}$. 
Let $[\cD_{j}] = [Z_1,\dots,Z_n,Z_{n+j}]$ be the unordered set of $\cD_{j}$, and 
we denote the realized value of the unordered set as $[d_{j}] = [z_1,\dots,z_n,z_{n+j}]$, 
i.e., $z_i$'s are fixed realized values of $Z_i$'s, but we do not know which values in 
$[d_{j}]$ correspond to the calibration set and which to the test point. We also 
write $z_i = (x_i,y_i)$ and $v_i = V(x_i,y_i)$, with $x_i$ and $y_i$ being the 
realized values of $X_i$ and $Y_i$. We then prove the two statements separately.\\

\noindent\textbf{Proof of (a).} 
    To prove~\eqref{eq:def_cond_cov_set_again}, it suffices to show that 
    \@\label{eq:equiv_cond_cov2}
    \PP\Big(  V_{n+j} \leq \quant 
    \big(1-\alpha; \{V_i\}_{i\in \hcR_{n+j}(Y_{n+j})} \cup \{+\infty\} \big) 
    \Biggiven j\in \hat\cS, \, \hcS \in \mfS
    \Big)\geq 1-\alpha.
    \@

    The exchangeability of data in $\cD_j$ given $\cD_j^c$ 
    implies that, given knowledge of the unordered realized data values in $[d_{j}]$, 
    the probability of any assignment of these values to 
    the data points is equal. That is,
    for any permutation $\pi$ of $\cI_j$, 
    \$
    \PP\big(Z_1 = z_{\pi(1)}, \dots, Z_{n} = z_{\pi(n)}, Z_{n+j} = z_{\pi(n+j)} 
    \biggiven [d_{j}], \cD_{j}^c  \big) = \frac{1}{(n+1)!}. 
    \$
    Next, given $[d_j]$ and $\cD_j^c$, {we define the following subset of 
    $\cI_j = [n]\cup \{n+j\}$:}
    \$
    \tilde{R} = \big\{ i\in\cI_j \colon j \in \cS([D_\calib^i], D_\test^i),\, 
    \cS([D_\calib^i],D_\test^i) \in \mfS\big\},
    \$
    where $[D_\calib^i] = [z_1,\dots,z_{i-1},z_{n+j},z_{i+1},\dots,z_n]$ 
    is an unordered set ($[d_j]$ with $z_i$ removed) 
    and  $D_\test^i = (X_{n+1},\dots, x_i,\dots,X_{n+m})$, 
    i.e., the test data with $X_{n+j}$ replaced by $x_i$. 
    Also, $[D_\calib^{n+j}]= [z_1,\dots,z_n]$.
    
    {We note that a set is always unordered. 
    By definition, the \emph{unordered} collection of $\{\cS([D_\calib^i], D_\test^i)\}_{i\in \cI_j}$ 
    is fully determined by $[d_j]$ and $\cD_j^c$. That is, 
    given $[d_j]$ and $\cD_j^c$, no matter which value  in $[z_1,\dots,z_{i-1},z_{n+j},z_{i+1},\dots,z_n]$  the test point $Z_{n+j}$ takes on, 
    the unordered set of  $\{\cS([D_\calib^i], D_\test^i)\}_{i\in \cI_j}$ remains the same. 
    As a result,  $\tilde{R}$ is also fully determined by $[d_j]$ and $\cD_j^c$. In addition, the set of scores $\{v_i\colon i\in \tilde{R}\}$ is fully determined by $[d_j]$ and $\cD_j^c$, which remains the same no matter which value  in $[z_1,\dots,z_{i-1},z_{n+j},z_{i+1},\dots,z_n]$  the test point $Z_{n+j}$ takes on.}  

    Recall that $\hat\cR_{n+j}(Y_{n+j}) = 
    \{i \in \cic: j \in \hcS^{\swap{i}{j}}(Y_{n+j}),\, \hcS^{\swap{i}{j}}(Y_{n+j})\in \mfS\}$. 
    We let $\hat\cR^+_{n+j} = \hat\cR_{n+j}(Y_{n+j})$ if $j\notin \hat\cS$ and $\hcS \in \mfS$ 
    and $\hat\cR^+_{n+j} = \hat\cR_{n+j}(Y_{n+j}) \cup\{n+j\}$ otherwise.

    We claim that given $[d_j]$ and $\cD_j^c$, 
    \@\label{eq:equiv_ref}
    \{V_i\colon i \in \hat\cR_{n+j}^+ \} = \{v_i\colon i \in \tilde{R}\},
    \@
    where repetition of elements is allowed on both sides. 
    {In words,~\eqref{eq:equiv_ref} means that no matter which value in $[z_1,\dots,z_n,z_{n+j}]$ is the test point $Z_{n+j}$, the left-handed side always equals $\{v_i\colon i\in \tilde{R}\}$, which is determined by $[d_j]$ and $\cD_j^c$.}
       To see this, let $\pi$ denote the permutation on $\cI_j$ such that 
       $V_i  = v_{\pi(i)}$ and $V_{n+j} = v_{\pi(n+j)}$. On this event, 
       \begin{equation}
       \begin{aligned}\label{eq:representation}
        &\hat\cS = \cS\big( \underbrace{[z_{\pi(1)},z_{\pi(2)},\dots,z_{\pi(n)}]}_{\textrm{calibration data}}, 
                           \underbrace{X_{n+1},\dots,x_{\pi(n+j)}, \dots, X_{n+m}}_{\textrm{test data}}    \big)
                           = \cS\big([D_\calib^{\pi(n+1)}], D_\test^{\pi(n+1)}\big), \\ 
        &\hat\cS^{\swap{i}{j}} = \cS\big( 
            \underbrace{[z_{\pi(1)},\dots,z_{\pi(i-1)},z_{\pi(n+j)},z_{\pi(i+1)},\dots,z_{\pi(n)}]}_{\textrm{calibration data after swap}}, 
            \underbrace{X_{n+1},\cdots, x_{\pi(i)}, \dots,X_{n+m}}_{\textrm{test data after swap}}    \big)\\
        & \qquad \qquad  
        = \cS\big( [D_\calib^{\pi(i)}], D_\test^{\pi(i)}\big)
       \end{aligned}
    \end{equation}
       for all $i\in\{1,\dots,n\}$.  
       Denoting $\hat\cS = \hat\cS^{\swap{n+j}{j}}(Y_{n+j})$ for notational simplicity, 
       we have,
       \$
       \{V_i\colon i \in \hat\cR_{n+j}^+ \} 
       &\step{a}{=} \big\{ V_i \colon i\in \cI_j,
       j \in \hat\cS^{\swap{i}{j}}(Y_{n+j}),\, 
       \hcS^{\swap{i}{j}}(Y_{n+j}) \in \mfS\big\} \\ 
       &\step{b}{=} \big\{ v_{\pi(i)} \colon i \in \cI_j,
       j\in \hat\cS^{\swap{i}{j}}(Y_{n+j}),\,
       \hcS^\swap{i}{j}(Y_{n+j}) \in \mfS \big\} \\ 
       &\step{c}{=} \big\{ v_{\pi(i)} \colon i\in \cI_j,
       j \in  \cS\big([D_{\calib}^{\pi(i)}], D_{\test}^{\pi(i)}\big),\,
       \cS\big([D_\calib^{\pi(i)}], D_\test^{\pi(i)} \big) \in \mfS \big\} \\ 
       &= \big\{ v_{\ell} \colon \ell\in \cI_j,
       j\in  \cS\big([D_{\calib}^{\ell}], D_{\test}^{\ell}\big),\, 
       \cS\big([D_\calib^{\ell}], D_\test^{\ell} \big) \in \mfS \big\} \\ 
       &\step{d}{=} \{v_i\colon i\in \tilde{R}\}.
       \$
       Above, step (a) follows from the definition of $\hat\cR_{n+j}^+$;
       steps (b) and (c) are due to~\eqref{eq:representation}, 
       and step (d) follows from the definition of $\tR$. 
       We thus completed the proof of Equation~\eqref{eq:equiv_ref}.

    Returning to the proof of~\eqref{eq:equiv_cond_cov2}, we have  
    that 
    \begin{equation}
        \begin{aligned}
    \label{eq:equiv_cond_cov3}
    &\PP\Big(V_{n+j} \leq \quant \big(1-\alpha; \{V_i\}_{i\in \hcR_{n+j}(Y_{n+j})}\cup \{+\infty\}  \big)
     \Biggiven j\in \hat\cS,~\hcS \in \mfS,~[d_j], \cD_j^c\Big)  \\ 
    = ~&\PP\Big(V_{n+j} \leq \quant \big(1-\alpha; \{V_i\}_{i\in \hcR_{n+j}^+(Y_{n+j})}  \big)
     \Biggiven j\in \hat\cS,~\hcS \in \mfS,~[d_j], \cD_j^c\Big)  \\ 
    =~&\frac{\PP\big(V_{n+j} \leq \quant \big(1-\alpha; \{V_i\}_{i\in \hcR_{n+j}^+(Y_{n+j})}\big), 
    j\in \hat\cS, 
    \hcS \in \mfS\biggiven [d_j], \cD_{j}^c\big)}
    {\PP\big(  j\in \hat\cS, \hcS \in \mfS \biggiven [d_j], \cD_{j}^c\big)},         
        \end{aligned}
    \end{equation}
    where the last step is due to Bayes' rule.
    Again, given $[d_j]$ and $\cD_{j}^c$, the only randomness is the assignment of 
    $[z_1,\dots,z_n,z_{n+j}]$ to $(Z_1,\dots,Z_{n},Z_{n+j})$. 
    By~\eqref{eq:equiv_ref}, 
    \@
    &\PP\Big( V_{n+j} \leq \quant \Big(1-\alpha; \{V_i\}_{i\in \cR_{n+j}^+(Y_{n+j})} \Big), 
    j\in \hat\cS, 
    \hcS \in \mfS \Biggiven [d_j], \cD_j^c\Big) \notag \\ 
    =~&\PP\Big(  V_{n+j} \leq \quant \big(1-\alpha; \{v_i  \colon i\in\tilde{R}\} \big), 
    j\in \hat\cS,
    \hcS \in \mfS
    \Biggiven [d_j], \cD_{j}^c\Big)\\
    =~& \sum_{\ell \in \cI_j} \PP\big(V_{n+j} = v_\ell, v_\ell 
    \leq \quant \big(1-\alpha; \{v_i  \colon i\in\tilde{R}\} 
    \big), \ell \in \tR  \biggiven [d_j], \cD^c_j\big). 
    \label{eq:prob_numer}
    \@
    Since $\tR$ is deterministic conditional on $[d_j]$ and $\cD_j^c$, 
    we know that 
    \@
    \eqref{eq:prob_numer}  
    =~& \sum_{\ell \in \tR} \PP\big(V_{n+1} = v_\ell \biggiven [d_j],\cD_j^c\big)
    \cdot \ind\{v_\ell \leq \quant \big(1-\alpha; \{v_i  \colon i\in\tilde{R}\}\}\\
\label{eq:numer_deter}
    = ~&\sum_{\ell \in \tR} 
    \frac{\ind\{v_\ell \leq \quant \big(1-\alpha; \{v_i  \colon i\in\tilde{R}\}\} }{n+1}, 
    \@
    where the last step is because $\PP(Z_{n+1}=z_\ell\given[D_{n+1}], \cD_{n+1}^c ) = 1/(n+1)$ for every $\ell$. 
    By the definition of the quantile function, 
    \$
    \eqref{eq:numer_deter} \geq \frac{(1-\alpha)|\tilde{R}|}{n+1}.
    \$
    On the other hand, it is straightforward that 
    \@
    \PP\big( j\in \hat\cS, \hcS \in \mfS
     \biggiven [d_j], \cD_{j}^c\big) 
    & = \sum_{i\in \cI_j}\PP \big( Z_{n+j} = z_i\biggiven 
      [d_j], \cD_{j}^c \big) \cdot 
    \ind\{i \in \tR\}\\
    \label{eq:prob_denom}
    & = \sum_{i\in \cI_j}
    \frac{\ind\{i\in \tilde{R}\}}{n+1} = \frac{|\tilde{R}|}{n+1}.
    \@
    Combining the above and~\eqref{eq:prob_denom}, we have thus verified~\eqref{eq:equiv_cond_cov3}. 
    Then by the tower property, we conclude~\eqref{eq:equiv_cond_cov2}, 
    and thus (11).
     
    When the elements in $\tR$ are distinct, we additionally have the following 
    deterministic inequality following from the definition of the quantile function:
    \$ 
    \eqref{eq:numer_deter} \le \frac{(1 - \alpha)|\tR| +1}{n+1}, 
    \$ 
    which then leads to 
    \$
    &\PP\Big(V_{n+j} \leq \quant \Big(1-\alpha; \{V_i\}_{i\in \cR_{n+j}^+(Y_{n+j})} \Big)
    \Biggiven j\in \hat\cS, 
    \hcS \in \mfS, [d_j], \cD_j^c\Big) \notag \\
    \le~& 1-\alpha + \frac{1}{|\tilde{R}|} = 1 - \alpha + \frac{1}{|\hcR_{n+j}^+(Y_{n+j})|}
    = 1- \alpha + \frac{1}{1+|\hcR_{n+j}(Y_{n+j})|}, 
    \$
    where the last step is because we are conditioning on the event $j\in \hat\cS$.
    Taking expectation with respect to $[d_j]$ and $\cD_j^c$ 
    on both sides, we obtain (12) and 
    complete the proof of statement (a).\\

\noindent\textbf{Proof of (b).} 
To conveniently denote repeated values in $[z_1,\dots,z_n,z_{n+j}]$
 and $[v_1,\dots,v_{n},v_{n+j}]$  
we let $[1],[2],\dots,[n],[n+j]$ be a permutation of $\cI_j$ so that 
$v_{[1]}\leq v_{[2]}\leq \cdots \leq v_{[n]} \leq v_{[n+j]}$. 
Then $[d_j]$ can be equivalently represented by $\{z_{[i]}\}_{i \in \cI_j}$, and
for any $i \in \cI_j$,  
\$
\PP\big(Z_{n+j} = z_{[i]} \biggiven [d_j],\cD_{j}^c\big) = \frac{1}{n+1}.
\$
Following the proof of part (a), we know that 
\$
\PP\big( j \in \hat\cS \biggiven [d_j], \cD_j^c\big) = \frac{|\tilde{R}|}{n+1}.
\$
On the other hand, due to~\eqref{eq:equiv_ref}
we have for any $t \in (0,1)$, 
\$
&~\PP\bigg( \frac{ \sum_{i \in \hcR_{n+j}(Y_{n+j})} \ind\{V_{n+j} < V_i\} + 
U_{j} (1+ \sum_{i\in \hcR_{n+j}(Y_{n+j})} 
\ind\{V_{n+j} = V_i\}) }{ 1 + |\hcR_{n+j}(Y_{n+j})| } \leq t,
~ j \in \hat\cS, ~\hcS \in \mfS  \Biggiven  [d_j], \cD_{j}^c \bigg) \\ 
=&~ \PP\bigg( \frac{ \sum_{i \in \hcR^+_{n+j}(Y_{n+j})} \ind\{V_{n+j} < V_i\} + 
U_{j} (\sum_{i\in \hcR_{n+j}^+(Y_{n+j})} 
\ind\{V_{n+j} = V_i\}) }{|\hcR^+_{n+j}(Y_{n+j})| } \leq t,
~ j \in \hat\cS, ~\hcS \in \mfS  \Biggiven  [d_j], \cD_{j}^c \bigg) \\ 
=&~\PP\bigg( \frac{ \sum_{i \in \tilde{R}} 
\ind\{V_{n+j} < v_{[i]}\} + U_{j}  \sum_{i\in \tilde{R}} 
\ind\{V_{n+j} = v_{[i]}\} }{ |\tilde{R}| } 
\leq t,~ j \in \hat\cS, \hcS \in \mfS  \Biggiven  [d_j], \cD_j^c \bigg) \\ 
=&~\sum_{\ell\in \cI_j}
\PP\bigg(Z_{n+j}=z_{[\ell]}, \frac{ \sum_{i \in \tilde{R}} \ind\{V_{n+j} < v_{[i]}\} 
+ U_j   \sum_{i\in \tilde{R}} \ind\{V_{n+j} = v_{[i]}\} }{ |\tilde{R}| } \leq t,
~ j \in \hat\cS, \hcS \in \mfS  \Biggiven  [d_j], \cD_j^c \bigg),
\$
where  
the last equality is a decomposition of disjoint events. 
When $Z_{n+j} = z_{[\ell]}$, note that $j\in \hat\cS$ and 
$\hcS \in \mfS$ is equivalent to $[\ell]\in \tilde{R}$. Thus, 
the last equation above equals to
\$
&\sum_{\ell \in \cI_j}
 \PP\bigg( Z_{n+j} = z_{[\ell]}, 
\frac{ \sum_{i \in \tilde{R}} \ind\{v_{[\ell]} < v_{[i]}\} +
 U_j \cdot   \sum_{i\in \tilde{R}} \ind\{v_{[\ell]} = v_{[i]}\} }
{ |\tilde{R}| } \leq t,~ [\ell] \in \tilde{R}  \Biggiven  [d_j], \cD_{j}^c \bigg) \\ 
=~&\frac{1}{n+1} \sum_{\ell \in \cI_j} 
\ind\big\{[\ell]\in \tilde{R}\big\} \cdot \PP\bigg(  \frac{ \sum_{[i] \in \tilde{R}} 
\ind\{v_{[\ell]} < v_{[i]}\} + U_j \cdot   \sum_{[i]\in \tilde{R}} \ind\{v_{[\ell]} = v_{[i]}\} }{ |\tilde{R}| } \leq t 
\Biggiven  [d_j], \cD_{j}^c\bigg) \\ 
=~& \frac{t|\tilde{R}|}{n+1},
\$
where we used the take-out property of conditional probability/expectation in the 
first step;
the probability in the second line is only with respect to $U_j\sim \textnormal{Unif}[0,1]$, 
and the last equality is a standard result that is straightforward to check with exchangeability
(see, e.g.,~\citet[Proposition 2.4]{vovk2005algorithmic}). Putting them together and 
letting $t=1-\alpha$, we have  
\$
\PP\big(Y_{n+j} \in \hC_{\alpha,n+j}^\rand  \biggiven j \in \hat\cS, [d_j], \cD_{d}^c\big)
= \frac{\PP\big( Y_{n+j} \in \hC_{\alpha,n+j},~j \in \hat\cS \biggiven  [d_j], \cD_{j}^c\big)}
{\PP\big(  j \in \hat\cS \biggiven [d_{j}], 
\cD_{j}^c\big)} = 1-\alpha,
\$
which concludes the proof of statement (b). 

\subsection{Proof of Proposition~\ref{prop:general_conf_pval}}
\label{app:proof_general_conf_pval}
Fix $j\in \hcS$. 
The proof proceeds as follows: 
we consider the cases of $y \le c_{n+j}$ and $y > c_{n+j}$ separately. 
In each case, we show $\hcR_{n+j}(y) = \hcR^\cp_{n+j}(y)$ by 
proving that $\hcR_{n+j}(y) \subseteq \hcR^\cp_{n+j}(y)$ and 
$\hcR^\cp_{n+j}(y) \subseteq \hcR_{n+j}(y)$.

For notational simplicity,
let $A^\swap{i}{j}_s(y)$ and $B^\swap{i}{j}_s(y)$ denote the partial sums resulting 
from swapping units $i$ and $n+j$ in $A_s$ and $B_s$, and $\tau^\swap{i}{j}(y)$ is the 
corresponding stopping time.

\subsubsection{Case 1: $y \le c_{n+j}$.}
\paragraph{Showing $\hcR_{n+j}(y) \subseteq \hcR^\cp_{n+j}(y)$.}
Suppose $i \in \hcR_{n+j}(y)$. We now show that $i \in \hcR^\cp_{n+j}(y)$.
By definition, we have $\hS_i \ge \tau^\swap{i}{j}(y)$ and 
\$
\{\ell \in [m]: \hS_{n+\ell}^\swap{i}{j}(y) \ge \tau^\swap{i}{j}\} 
\in \mfS.
\$
\begin{enumerate}
\item [(a)]
Suppose $Y_i \le c_i$. 
By the definition of $\tau^{(j)}(1,0)$, we have that 
\$ 
\ind\{\tau^{(j)}(1,0) \le \tau^\swap{i}{j}(y)\}
= f_{\tau^\swap{i}{j}(y)}(\{A_s^{(j)}(1,0)\}_{s \le \tau^\swap{i}{j}}, \{B_s^{(j)}\}_{s \le \tau^\swap{i}{j}}).
\$
Since $\hS_i \ge \tau^\swap{i}{j}(y)$, 
for any $s \le \tau^\swap{i}{j}(y)$, $\hS_i \ge s$ and 
\$ 
A_s^{(j)}(1,0) & = \ind\{\hS_{n+j} \ge s\} + \sum_{k \in [n]}\ind\{\hS_k \ge s, Y_k \le c_k\} \\
& = \ind\{\hS_{n+j} \ge s, y \le c_{n+j}\} + \sum_{k \neq i}\ind\{\hS_k \ge s, Y_k \le c_k\} + 1
= A_s^\swap{i}{j}(y),\\
B_s^{(j)} & = 1 + \sum_{\ell \neq j} \ind\{\hS_{n+\ell} \ge s\} =
\ind\{\hS_i \ge s\} + \sum_{\ell \neq j} \ind\{\hS_{n+\ell} \ge s\}
 = B_s^\swap{i}{j}(y).
\$
As a result, 
\begin{align} 
\begin{split}\label{eq:swap_tau}
\ind\{\tau^{(j)}(1,0) \le \tau^\swap{i}{j}(y)\}  
& = f_{\tau^\swap{i}{j}(y)}(\{A_s^{(j)}(1,0)\}_{s \le \tau^\swap{i}{j}(y)}, \{B_s^{(j)}\}_{s \le \tau^\swap{i}{j}(y) })\\
& = f_{\tau^\swap{i}{j}(y)}(\{A_s^\swap{i}{j}(y)\}_{s \le \tau^\swap{i}{j}}, \{B_s^\swap{i}{j}(y)\}_{s \le \tau^\swap{i}{j}})\\
& = \ind\{\tau^\swap{i}{j}(y) \le \tau^\swap{i}{j}(y)\} = 1,
\end{split}
\end{align} 
where we use the definition of $\tau^\swap{i}{j}(y)$.
Above, we have shown $\tau^{(j)}(1,0) \le \tau^\swap{i}{j}(y)$, 
and so that $\hS_i \ge \tau^{(j)}(1,0)$.
Conversely, by the definition of $\tau^\swap{i}{j}(y)$, there is 
\$ 
\ind\{\tau^\swap{i}{j}(y) \le \tau^{(j)}(1,0)\}
= f_{\tau^{(j)}(1,0)}(\{A_s^\swap{i}{j}(y)\}_{s \le \tau^{(j)}(1,0)}, 
\{B_s^\swap{i}{j}(y)\}_{s \le \tau^{(j)}(1,0)}).
\$
Now, observe that for any $s \le \tau^{(j)}(1,0)$,
\$ 
A_s^\swap{i}{j}(y) & = \ind\{\hS_{n+j} \ge s, y\le c_{n+j}\} + 
\sum_{k \neq i}\ind\{\hS _k \ge s, Y_k \le c_k\} + 1\\ 
& =  \ind\{\hS_{n+j} \ge s\} + 
\sum_{k \in [n]}\ind\{\hS_k \ge s, Y_k \le c_k\} = A_s^{(j)}(1,0),\\
B_s^\swap{i}{j}(y) &= \ind\{\hS_i \ge s\} + \sum_{\ell \neq j} \ind\{\hS_{n+\ell} \ge s\} 
= 1 + \sum_{\ell \neq j} \ind\{\hS_{n+\ell} \ge s\} = B_s^{(j)}.
\$
Using an argument similar to~\eqref{eq:swap_tau}, we can draw the conclusion that 
$\ind\{\tau^\swap{i}{j}(y) \le \tau^{(j)}(1,0)\} = 1$. 
Combining the two results, we have that $\tau^\swap{i}{j}(y) = \tau^{(j)}(1,0)$. 
Therefore, $\hS_i \ge \tau^\swap{i}{j}(y) = \tau^{(j)}(1,0)$ and 
\$ 
\{\ell \in [m]: \hS_{n+\ell}^\swap{i}{j} \ge \tau^{(j)}(1,0)\} 
= \{\ell \in [m]: \hS_{n+\ell}^\swap{i}{j}(y) \ge \tau^\swap{i}{j}\} 
\in \mfS.
\$
That is, $i \in \hcR^\cp_{n+j}$(y).

\item [(b)] Suppose that $Y_i > c_i$. Then by the 
definition of $\tau^{(j)}(1,1)$, we have that 
\$ 
\ind\{\tau^{(j)}(1,1) \le \tau^\swap{i}{j}(y)\} 
= f_{\tau^\swap{i}{j}(y)}(\{A_s^{(j)}(1,1)\}_{s \le \tau^\swap{i}{j}}, \{B_s^{(j)}\}_{s \le \tau^\swap{i}{j}}).
\$
For any $s \le \tau^\swap{i}{j}(y)$,
\$ 
A_s^{(j)}(1,1) & = 
1 + \sum_{k \in [n]} \ind\{\hS_k \ge s, Y_k \le c_k\} + 
\ind\{\hS_{n+j} \ge s\} \\
& = 1 + \sum_{k \neq i} \ind\{\hS_k \ge s, Y_k \le c_k\} + 
\ind\{\hS_{n+j} \ge s, y\le c_{n+j}\} 
= A_s^\swap{i}{j}(y),\\ 
B_s^{(j)} & = 1 + \sum_{\ell \neq j} \ind\{\hS_{n+\ell} \ge s\} = 
\ind\{\hS_i \ge s\} + \sum_{\ell \neq j} \ind\{\hS_{n+\ell} \ge s\} = B_s^\swap{i}{j}(y).
\$ 
As before, we can conclude that $\tau^{(j)}(1,1) \le \tau^\swap{i}{j}(y)$, 
and therefore $\hS_i \ge \tau^{(j)}(1,1)$. 
Conversely, by the definition of $\tau^\swap{i}{j}(y)$, there is
\$ 
\ind\{\tau^\swap{i}{j}(y) \le \tau^{(j)}(1,1)\}
= f_{\tau^{(j)}(1,1)}(\{A_s^\swap{i}{j}(y)\}_{s \le \tau^{(j)}(1,1)},
\{B_s^\swap{i}{j}(y)\}_{s \le \tau^{(j)}(1,1)}).
\$
For any $s \le \tau^{(j)}(1,1)$, 
\$
A_s^\swap{i}{j}(y) & = 1 + \sum_{k \neq i} \ind\{\hS_k \ge s, y_k\le c_k \}
+ \ind\{\hS_{n+j} \ge s, y \le c_{n+j}\}\\
& = 1+  \sum_{k \in [n]} \ind\{\hS_k \ge s, y_k\le c_k \}
+ \ind\{\hS_{n+j} \ge s\} = A^{(j)}(1,1),\\
B_s^\swap{i}{j}(y) & = \ind\{\hS_i \ge s\} + \sum_{\ell \neq j} \ind\{\hS_{n+\ell} \ge s\}
= 1 + \sum_{\ell \neq j} \ind\{\hS_{n+\ell} \ge s\} = B_s^{(j)}.
\$
Then $\tau^\swap{i}{j}(y) \le \tau^{(j)}(1,1)$, and therefore 
$\tau^\swap{i}{j}(y) = \tau^{(j)}(1,1)$.
Consequently, $\hS_i \ge \tau^\swap{i}{j}(y) = \tau^{(j)}(1,1)$ and 
\$ 
\{ \ell \in [m]: \hS_{n+\ell}^\swap{i}{j} \ge \tau^{(j)}(1,1)\}
= \{\ell \in [m]: \hS_{n+\ell}^\swap{i}{j} \ge \tau^\swap{i}{j}(y)\}
\in \mfS.
\$ 
\end{enumerate}
Combining (a) and (b), we have proved that $\hcR_{n+j}(y) \subseteq \hcR^\cp(y)$.\\

\paragraph{Showing $\hcR^\cp_{n+j}(y) \subseteq \hcR_{n+j}(y)$.}
Now suppose $i \in \hcR^\cp_{n+j}(y)$. We shall show that $i \in \hcR_{n+j}(y)$.

\begin{enumerate}
\item [(c)] Suppose$Y_i \le c_i$, and therefore $\hS_i \ge \tau^{(j)}(1,0)$. 
Again, by the definition of $\tau^\swap{i}{j}(y)$, we can write 
\$ 
\ind\{\tau^\swap{i}{j}(y)\le \tau^{(j)}(1,0)\} 
& = f_{\tau^{(j)}(1,0)}(\{A_s^\swap{i}{j}(y)\}_{s \le \tau^{(j)}(1,0)}, 
\{B_s^\swap{i}{j}(y)\}_{s \le \tau^{(j)}(1,0)}). 
\$
For any $s \le \tau^{(j)}(1,0)$, we have 
\$ 
A_s^\swap{i}{j}(y) & = 1 + \sum_{k \neq i} \ind\{\hS_k \ge s , Y_k \le c_k\} + 
\ind\{\hS_{n+j} \ge s, y \le c_{n+j}\}\\
& = \sum_{k \in [n]} \ind\{\hS_k \ge s , Y_k \le c_k\} + 
\ind\{\hS_{n+j} \ge s\} = A_s^{(j)}(1,0),\\ 
B_s^\swap{i}{j}(y) & = \ind\{\hS_i \ge s\} + \sum_{\ell \neq j} \ind\{\hS_{n+\ell} \ge s\}
= 1 + \sum_{\ell \neq j} \ind\{\hS_{n+\ell} \ge s\} = B_s^{(j)}.
\$
Then as in~\eqref{eq:swap_tau}, we have 
$\tau^\swap{i}{j}(y) \le \tau^{(j)}(1,0)$. 
We also have that $\hS_i \ge \tau^\swap{i}{j}(y)$. 
Conversely, using the definition of $\tau^{(j)}(1,0)$, we can write that 
\$ 
\ind\{\tau^{(j)}(1,0)\le \tau^\swap{i}{j}(y)\}
= f_{\swap{i}{j}(y)}(\{A_s^{(j)}(1,0)\}_{s \le \tau^\swap{i}{j}(y)}, 
\{B_s^{(j)}\}_{s \le \tau^\swap{i}{j}(y)}).
\$
For any $s \le \tau^\swap{i}{j}(y)$, we can check that 
\$ 
A_s^{(j)}(1,0) & = \ind\{\hS_{n+j} \ge s\} + \sum_{k \in [n]}\ind\{\hS_k \ge s, Y_k \le c_k\}\\ 
& = \ind\{\hS_{n+j} \ge s, y \le c_{n+j}\} + \sum_{k \neq i}\ind\{\hS_k \ge s, Y_k \le c_k\} + 1
= A_s^\swap{i}{j}(y),\\
B_s^{(j)} & = 1 + \sum_{\ell \neq j} \ind\{\hS_{n+\ell} \ge s\} = 
\ind\{\hS_i \ge s\} + \sum_{\ell \neq j} \ind\{\hS_{n+\ell} \ge s\} = B_s^\swap{i}{j}(y).
\$
Then $\ind\{\tau^{(j)}(1,0) \le \tau^\swap{i}{j}(y)\} = 
\ind\{\tau^\swap{i}{j}(y) \le \tau^\swap{i}{j}(y)\}=1$.
Collectively, we arrive at $\tau^{(j)}(1,0) = \tau^\swap{i}{j}(y)$, implying that 
$\hS_i \ge \tau^{(j)}(1,0) = \tau^\swap{i}{j}(y)$ and 
\$
\{\ell \in [m]: \hS_{n+\ell}^\swap{i}{j} \ge \tau^\swap{i}{j}(y)\}
= \{\ell \in [m]: \hS_{n+\ell}^\swap{i}{j} \ge \tau^{(j)}(1,0)\} \in \mfS,
\$
which implies that $i \in \hcR_{n+j}(y)$.

\item [(d)] Suppose $Y_i > c_i$. In this case, we have $\hS_i \ge \tau^{(j)}(1,1)$. 
The definition of $\tau^\swap{i}{j}(y)$ leads to 
\$ 
\ind\{\tau^\swap{i}{j}(y) \le \tau^{(j)}(1,1) \}
= f_{\tau^{(j)}(1,1)}(\{A_s^\swap{i}{j}(y)\}_{s \le \tau^{(j)}(1,1)},
\{B_s^\swap{i}{j}(y)\}_{s \le \tau^{(j)}(1,1)}).
\$
For any $s \le \tau^{(j)}(1,1)$, we have 
\$ 
A_s^\swap{i}{j}(y) & = 1 + \sum_{k \neq i} \ind\{\hS_k \ge s, Y_k \le c_k\} + \ind\{\hS_{n+j} \ge s, y\le c_{n+j}\}\\
& = 1 + \sum_{k \in [n]} \ind\{\hS_k \ge s, Y_k \le c_k\} + \ind\{\hS_{n+j} \ge s\} = A_s^{(j)}(1,1),\\
B_s^\swap{i}{j}(y) & = \ind\{\hS_i \ge s\} + \sum_{\ell \neq j} \ind\{\hS_{n+\ell} \ge s\}
= 1 + \sum_{\ell \neq j} \ind\{\hS_{n+\ell} \ge s\} = B_s^{(j)}. 
\$
Then we have $\ind\{\tau^\swap{i}{j}(y) \le \tau^{(j)}(1,1)\} = 1$, 
and therefore $\hS_i \ge \tau^\swap{i}{j}(y)$.
Conversely, by the definition of $\tau^{(j)}(1,1)$, we can write that 
\$ 
\ind\{\tau^{(j)}(1,1) \le \tau^\swap{i}{j}(y)\} 
= f_{\tau^\swap{i}{j}(y)}(\{A_s^{(j)}(1,1)\}_{s \le \tau^\swap{i}{j}(y)}, 
\{B_s^{(j)}\}_{s \le \tau^\swap{i}{j}(y)}).
\$
For any $s \le \tau^\swap{i}{j}(y)$, there is 
\$ 
A_s^{(j)}(1,1) & = 1 + \sum_{k \in [n]} \ind\{\hS_k \ge s, Y_k \le c_k\} + \ind\{\hS_{n+j} \ge s\}\\
& = 1 + \sum_{k \neq i} \ind\{\hS_k \ge s, Y_k \le c_k\} + \ind\{\hS_{n+j} \ge s, y\le c_{n+j}\} = A_s^\swap{i}{j}(y),\\
B_s^{(j)} & = 1 + \sum_{\ell \neq j} \ind\{\hS_{n+\ell} \ge s\} = 
\ind\{\hS_i \ge s\} + \sum_{\ell \neq j} \ind\{\hS_{n+\ell} \ge s\} = B_s^\swap{i}{j}(y).
\$
Then we have $\ind\{\tau^{(j)}(1,1) \le \tau^\swap{i}{j}(y)\} = 1$.
Combining the two results, we have that $\tau^{(j)}(1,1) = \tau^\swap{i}{j}(y)$, and
therefore $\hS_i \ge \tau^{(j)}(1,1) = \tau^\swap{i}{j}(y)$ and 
\$
\{\ell \in [m]: \hS_{n+\ell}^\swap{i}{j} \ge \tau^\swap{i}{j}(y)\}
= \{\ell \in [m]: \hS_{n+\ell}^\swap{i}{j} \ge \tau^{(j)}(1,1)\} \in \mfS,
\$
which implies that $i \in \hcR_{n+j}(y)$.
\end{enumerate}
Combining (c) and (d), we have proved that $\hcR_{n+j}^\cp(y) \subseteq \hcR_{n+j}(y)$, 
which finishes the proof for case 1.

\subsubsection{Case 2: $y > c_{n+j}$.}
As in case 1, we prove that $\hcR_{n+j}(y) = \hcR_{n+j}^\cp(y)$ by showing that
$\hcR_{n+j}(y) \subseteq \hcR_{n+j}^\cp(y)$ and $\hcR_{n+j}^\cp(y) \subseteq \hcR_{n+j}(y)$. 

\paragraph{Showing $\hcR_{n+j}(y) \subseteq \hcR_{n+j}^\cp(y)$.}
Suppose $i \in \hcR_{n+j}(y)$. We proceed to show that $i \in \hcR_{n+j}^\cp(y)$.

\begin{enumerate}
\item [(a)] Suppose $Y_i \le c_i$. 
Using the definition of $\tau^{(j)}(0,0)$, we write
\$ 
\ind\{\tau^{(j)}(0,0) \le \tau^\swap{i}{j}(y)\}
= f_{\tau^\swap{i}{j}(y)}(\{A_s^{(j)}(0,0)\}_{s \le \tau^\swap{i}{j}(y)},
\{B_s^{(j)}\}_{s \le \tau^\swap{i}{j}(y)}).
\$ 
We have for any $s \le \tau^\swap{i}{j}(y)$ that 
\$ 
A_s^{(j)}(0,0) & = \sum_{k \in [n]} \ind\{\hS_k \ge s, Y_k \le c_k\} \\
& = 1 +  \sum_{k\neq i} \ind\{\hS_k \ge s, Y_k \le c_k\} + \ind\{\hS_{n+j}\ge s, y \le c_{n+j}\}
= A_s^\swap{i}{j}(y),\\
B_s^{(j)} & = 1 + \sum_{\ell \neq j} \ind\{\hS_{n+\ell} \ge s\} =
\ind\{\hS_i \ge s\} + \sum_{\ell \neq j} \ind\{\hS_{n+\ell} \ge s\} = B_s^\swap{i}{j}(y).
\$
We then have $\ind\{\tau^{(j)}(0,0) \le \tau^\swap{i}{j}(y)\} = \ind\{\tau^\swap{i}{j}(y)
\le \tau^\swap{i}{j}(y)\} = 1$.
Similarly, we can use the definition of $\tau^\swap{i}{j}(y)$ and obtain that  
\$ 
\ind\{\tau^\swap{i}{j}(y) \le \tau^{(j)}(0,0)\} = 
f_{\tau^{(j)}(0,0)}(\{A_s^\swap{i}{j}(y)\}_{s \le \tau^{(j)}(0,0)},
\{B_s^\swap{i}{j}(y)\}_{s \le \tau^{(j)}(0,0)}).
\$
Again for any $s \le \tau^{(j)}(0,0)$, we have 
\$ 
A_s^\swap{i}{j}(y) & = 1 + \sum_{k \neq i} \ind\{\hS_k \ge s, Y_k \le c_k\} +
\ind\{\hS_{n+j} \ge s, y \le c_{n+j}\} \\ 
& =\sum_{k \in [m]} \ind\{\hS_k \ge s, Y_k \le c_k\}
= A_s^{(j)}(0,0),\\
B_s^\swap{i}{j}(y) & = \ind\{\hS_i \ge s\} + \sum_{\ell \neq j} \ind\{\hS_{n+\ell} \ge s\}
= 1 + \sum_{\ell \neq j} \ind\{\hS_{n+\ell} \ge s\} = B_s^{(j)}.
\$
Then we have $\ind\{\tau^\swap{i}{j}(y) \le \tau^{(j)}(0,0)\} = \ind\{\tau^{(j)}(0,0) 
\le \tau^{(j)}(0,0)\} = 1$.
Combining the two results, we have that $\tau^\swap{i}{j}(y) = \tau^{(j)}(0,0)$, and
therefore $\hS_i \ge \tau^\swap{i}{j}(y) = \tau^{(j)}(0,0)$ and
\$
\{\ell \in [m]: \hS_{n+\ell}^\swap{i}{j} \ge \tau^{(j)}(0,0)\}
= \{\ell \in [m]: \hS_{n+\ell}^\swap{i}{j} \ge \tau^\swap{i}{j}(y)\} \in \mfS,
\$
which implies that $i \in \hcR_{n+j}^\cp(y)$.
\item [(b)] Suppose $Y_i > c_i$. 
We use the definition of $\tau^{(j)}(0,1)$:
\$ 
\ind\{\tau^{(j)}(0,1) \le \tau^\swap{i}{j}(y)\}
= f_{\tau^\swap{i}{j}(y)}(\{A_s^{(j)}(0,1)\}_{s \le \tau^\swap{i}{j}(y)},
\{B_s^{(j)}\}_{s \le \tau^\swap{i}{j}(y)}).
\$ 
For any $s \le \tau^\swap{i}{j}(y)$, we have
\$ 
A_s^{(j)}(0,1) & =  1 + \sum_{k \in [n]} \ind\{\hS_k \ge s, Y_k \le c_k\} \\
& = 1 + \sum_{k \neq i} \ind\{\hS_k \ge s, Y_k \le c_k\} + \ind\{\hS_{n+j} \ge s, y \le c_{n+j}\}
= A_s^\swap{i}{j}(y),\\
B_s^{(j)} & = 1 + \sum_{\ell \neq j} \ind\{\hS_{n+\ell} \ge s\} =
\ind\{\hS_i \ge s\} + \sum_{\ell \neq j} \ind\{\hS_{n+\ell} \ge s\} = B_s^\swap{i}{j}(y).
\$
Then we have $\ind\{\tau^{(j)}(0,1) \le \tau^\swap{i}{j}(y)\} = 
\ind\{\tau^\swap{i}{j}(y) \le \tau^\swap{i}{j}(y)\} = 1$. 
Similarly, we can use the definition of $\tau^\swap{i}{j}(y)$ and obtain that 
\$
\ind\{\tau^\swap{i}{j}(y) \le \tau^{(j)}(0,1)\} =
f_{\tau^{(j)}(0,1)}(\{A_s^\swap{i}{j}(y)\}_{s \le \tau^{(j)}(0,1)},
\{B_s^\swap{i}{j}(y)\}_{s \le \tau^{(j)}(0,1)}).
\$
For any $s \le \tau^{(j)}(0,1)$, we have $\hS_i \ge s$ and 
\$
A_s^\swap{i}{j}(y) & = 1 + \sum_{k \neq i} \ind\{\hS_k \ge s, Y_k \le c_k\} +
\ind\{\hS_{n+j} \ge s, y \le c_{n+j}\} \\
& = 1 + \sum_{k \in [m]} \ind\{\hS_k \ge s, Y_k \le c_k\}
= A_s^{(j)}(0,1),\\
B_s^\swap{i}{j}(y) & = \ind\{\hS_i \ge s\} + \sum_{\ell \neq j} \ind\{\hS_{n+\ell} \ge s\}
= 1 + \sum_{\ell \neq j} \ind\{\hS_{n+\ell} \ge s\} = B_s^{(j)}.
\$
Then we have $\ind\{\tau^\swap{i}{j}(y) \le \tau^{(j)}(0,1)\} = 1$, 
and therefore $\tau^\swap{i}{j}(y) = \tau^{(j)}(0,1)$. 
This implies that $\hS_i \ge \tau^\swap{i}{j}(y) = \tau^{(j)}(0,1)$ and
\$ 
\{\ell \in [m]: \hS_{n+\ell}^\swap{i}{j} \ge \tau^{(j)}(0,1)\} 
= \{\ell \in [m]: \hS_{n+\ell}^\swap{i}{j} \ge \tau^\swap{i}{j}(y)\}
\in \mfS.
\$ 
Then we have $i \in \hcR_{n+j}^\cp(y)$.

\end{enumerate}
Combining (a) and (b), we have proved that $\hcR_{n+j}(y) \subseteq \hcR_{n+j}^\cp(y)$ 
when $y > c_{n+j}$. We proceed to show the reverse inclusion.

\paragraph{Showing $\hcR_{n+j}^\cp(y) \subseteq \hcR_{n+j}(y)$.}
Suppose that $i \in \hcR^\cp_{n+j}(y)$.
\begin{enumerate}
\item [(c)] When $Y_i \le c_i$, $\hS_i \ge \tau^{(j)}(0,0)$.
As before, we have 
\$ 
\ind\{\tau^\swap{i}{j}(y) \le \tau^{(j)}(0,0)\}
= f_{\tau^{(j)}(0,0)}(\{A_s^\swap{i}{j}(y)\}_{s \le \tau^{(j)}(0,0)},
\{B_s^\swap{i}{j}(y)\}_{s \le \tau^{(j)}(0,0)}).
\$
For any $s \le \tau^{(j)}(0,0)$, observe that 
\$ 
A_s^\swap{i}{j}(y) & = 1 + \sum_{k \neq i} \ind\{\hS_k \ \ge s, Y_k \le c_k\} + 
\ind\{\hS_{n+j} \ge s, y \le c_{n+j}\}\\
& = \sum_{k \in [m]} \ind\{\hS_k \ \ge s, Y_k \le c_k\}
= A_s^{(j)}(0,0),\\
B_s^\swap{i}{j}(y) & = \ind\{\hS_i \ge s\} + \sum_{\ell \neq j} \ind\{\hS_{n+\ell} \ge s\}
= 1 + \sum_{\ell \neq j} \ind\{\hS_{n+\ell} \ge s\} = B_s^{(j)},
\$
which leads to $\ind\{\tau^\swap{i}{j}(y) \le \tau^{(j)}(0,0)\} = 1$.
Conversely, we have 
\$
\ind\{\tau^{(j)}(0,0) \le \tau^\swap{i}{j}(y)\}
= f_{\tau^{\swap{i}{j}}(y)}(\{A_s^{(j)}(0,0)\}_{s \le \tau^\swap{i}{j}(y)},
\{B_s^{(j)}\}_{s \le \tau^\swap{i}{j}(y)}),
\$
and for any $s \le \tau^\swap{i}{j}(y)$, we have
\$
A_s^{(j)}(0,0) & = \sum_{k \in [m]} \ind\{\hS_k \ge s, Y_k \le c_k\}\\
& = 1 + \sum_{k \neq i} \ind\{\hS_k \ \ge s, Y_k \le c_k\} + \ind\{\hS_{n+j} \ge s, y \le c_{n+j}\}
= A_s^\swap{i}{j}(y),\\
B_s^{(j)} & = 1 + \sum_{\ell \neq j} \ind\{\hS_{n+\ell} \ge s\} =
\ind\{\hS_i \ge s\} + \sum_{\ell \neq j} \ind\{\hS_{n+\ell} \ge s\} = B_s^\swap{i}{j}(y).
\$
Then we have $\ind\{\tau^{(j)}(0,0) \le \tau^\swap{i}{j}(y)\} = 1$, 
which implies that $\tau^{(j)}(0,0) = \tau^\swap{i}{j}(y)$. We then have
$\hS_i \ge \tau^{(j)}(0,0) = \tau^\swap{i}{j}(y)$ and
\$
\{\ell \in [m]: \hS_{n+\ell}^\swap{i}{j} \ge \tau^\swap{i}{j}(y)\}
= \{\ell \in [m]: \hS_{n+\ell}^\swap{i}{j} \ge \tau^{(j)}(0,0)\}
\in \mfS,
\$
implying that $i \in \hcR_{n+j}(y)$.

\item [(d)] When $Y_i > c_i$, $\hS_i \ge \tau^{(j)}(0,1)$.
We have
\$ 
\ind\{\tau^\swap{i}{j}(y) \le \tau^{(j)}(0,1)\}
= f_{\tau^{(j)}(0,1)}(\{A_s^\swap{i}{j}(y)\}_{s \le \tau^{(j)}(0,1)},
\{B_s^\swap{i}{j}(y)\}_{s \le \tau^{(j)}(0,1)}).
\$ 
For any $s \le \tau^{(j)}(0,1)$, there is 
\$ 
A_s^\swap{i}{j}(y) & = 1 + \sum_{k \neq i} \ind\{\hS_k \ \ge s, Y_k \le c_k\} +
\ind\{\hS_{n+j} \ge s, y \le c_{n+j}\}\\
& = 1 + \sum_{k \in [m]} \ind\{\hS_k \ \ge s, Y_k \le c_k\} = A_s^{(j)}(0,1),\\
B_s^\swap{i}{j}(y) & = \ind\{\hS_i \ge s\} + \sum_{\ell \neq j} \ind\{\hS_{n+\ell} \ge s\} =
1 + \sum_{\ell \neq j} \ind\{\hS_{n+\ell} \ge s\} = B_s^{(j)}.
\$
Therefore, we have $\ind\{\tau^\swap{i}{j}(y) \le \tau^{(j)}(0,1)\} = 1$.
Conversely, we have 
\$
\ind\{\tau^{(j)}(0,1) \le \tau^\swap{i}{j}(y)\}
= f_{\tau^\swap{i}{j}(y)}(\{A_s^{(j)}(0,1)\}_{s \le \tau^\swap{i}{j}(y)},
\{B_s^{(j)}\}_{s \le \tau^\swap{i}{j}(y)}),
\$
where for any $s \le \tau^\swap{i}{j}(y)$, we have
\$
A_s^{(j)}(0,1) & = 1 + \sum_{k \in [m]} \ind\{\hS_k \ge s, Y_k \le c_k\}\\
& = 1 + \sum_{k \neq i} \ind\{\hS_k \ge s, Y_k \le c_k\} + \ind\{\hS_{n+j} \ge s, y \le c_{n+j}\}
= A_s^\swap{i}{j}(y),\\
B_s^{(j)} & = 1 + \sum_{\ell \neq j} \ind\{\hS_{n+\ell} \ge s\} =
\ind\{\hS_i \ge s\} + \sum_{\ell \neq j} \ind\{\hS_{n+\ell} \ge s\} = B_s^\swap{i}{j}(y).
\$
Then we have $\ind\{\tau^{(j)}(0,1) \le \tau^\swap{i}{j}(y)\} = 1$, 
which implies that $\tau^{(j)}(0,1) = \tau^\swap{i}{j}(y)$. 
Therefore, $\hS_i \ge \tau^{(j)}(0,1) = \tau^\swap{i}{j}(y)$ and
\$
\{\ell \in [m]: \hS_{n+\ell}^\swap{i}{j} \ge \tau^\swap{i}{j}(y)\}
= \{\ell \in [m]: \hS_{n+\ell}^\swap{i}{j} \ge \tau^{(j)}(0,1)\}
\in \mfS,
\$
implying that $i \in \hcR_{n+j}(y)$.
\end{enumerate}
Combining (c) and (d), we have proved that $\hcR_{n+j}^\cp(y) \subseteq \hcR_{n+j}(y)$ 
when $y > c_{n+j}$. The whole proof is now complete.

\subsection{Proof of Proposition~\ref{prop:conf_pred}}
\label{appd:proof_conf_pred}
Recall that the generic form of the JOMI prediction set is 
\$ 
\hC_{\alpha,n+j} = \Big\{y\in \cY: V(X_{n+j},y) \le  
\quant\big(\{V_i\}_{i \in \hcR_{n+j}(y)}\cup\{\infty\}, 1-\alpha \big)\Big\},
\$
with $\hcR_{n+j}(y)$ being the reference set.
We categorize $y$ into three cases according to the value of $S(X_{n+j},y)$:
(1) $S(X_{n+j},y) < \eta^-$, 
(2) $S(X_{n+j},y) > \eta^+$, and (3) $\eta^-\le S(X_{n+j},y) \le \eta^+$.
The case where $\eta^- \le S(X_{n+j},y) \le \eta^+$ is obvious, 
so we discuss the remaining two cases in the following.
In what follows, we let $\eta^\swap{i}{j}(y)$ denote the $K$-th smallest 
elements in $\{S^\swap{i}{j}_{\ell}(y)\}_{\ell \in \cic}$.

\paragraph{Case 1: $S(X_{n+j},y) < \eta^-$.}
Consider $y\in \cY$ such that $S(X_{n+j},y) < \eta^-$ and $y \in \hC_{\alpha,n+j}$. 
We claim that in this case 
\$
\hcR_{n+j}(y) =  \tR_{n+j}^{(1)}\,:=&\, \{i\in\cic: S_i \le \eta^-,\cL(X_i, \eta) = 1, 
\{\ell \in [m]: \cL(X^\swap{i}{j}_{n+\ell}, \eta)=1\} \in \mfS\} \\
& \cup \{i\in\cic: S_i > \eta^-, \cL(X_i, \eta^-) = 1, 
\{\ell \in [m]: \cL(X_{n+\ell}^\swap{i}{j},\eta^-) = 1\} \in \mfS\}.
\$
To see this, note that when $S_i \le \eta^-$, we have $\etas = \eta$. Then 
$\cL(X_i, \etas) = \cL(X_i,\eta)$ and 
\$ 
\{\ell \in  [m]: \cL(X_{n+\ell}^\swap{i}{j}, \eta) = 1\}
= 
\{\ell \in  [m]: \cL(X_{n+\ell}^\swap{i}{j}, \etas) = 1\}
= \hcS^\swap{i}{j}(y).
\$ 
When $S_i > \eta^-$,  $\etas = \eta^-$, $\cL(X_i,\eta^-) = \cL(X_i,\etas)$
and 
\$ 
\{\ell \in  [m]: \cL(X_{n+\ell}^\swap{i}{j}, \eta^-) = 1\}
= 
\{\ell \in  [m]: \cL(X_{n+\ell}^\swap{i}{j}, \etas) = 1\}
= \hcS^\swap{i}{j}(y).
\$ 
As a result, we have that $\hcR_{n+j}(y) = \tR_{n+j}^{(1)}$ and therefore 
\$ 
V(X_{n+j},y) \le \quant\big(1-\alpha;\{V_i: i \in \tR_{n+j}^{(1)}\}\cup\{\infty\}\big)
\$

\paragraph{Case 2: $S(X_{n+j},y) > \eta^+$.}
We now consider $y\in \cY$ such that $S(X_{n+j},y)> \eta^+$ and $y \in \hC_{\alpha,n+j}$.
In this case, we instead have
\$
\hcR_{n+j}(y) = \tR_{n+j}^{(2)}\,:=&\, \{i\in\cic: S_i \le \eta,\cL(X_i, \eta^+) = 1, 
\{\ell \in [m]: \cL(X_{n+\ell}^\swap{i}{j}, \eta^+) = 1\}\in \mfS\}\\ 
& \cup \{i\in\cic: S_i > \eta, \cL(X_i, \eta) = 1, 
\{\ell \in [m]: \cL(X_{n+\ell}^\swap{i}{j}, \eta) = 1\} \in \mfS\}.
\$
This is because when $S_i\le \eta$,  $\etas = \eta^+$ and therefore $\cL(X_i,\etas) = \cL(X_i,\eta^+)$,
as well as 
\$ 
\hcS^\swap{i}{j}(y) 
= \{\ell \in [m]: \cL(X_{n+\ell}^\swap{i}{j}, \etas) = 1\}
= \{\ell \in [m]: \cL(X_{n+\ell}^\swap{i}{j}, \eta^+) = 1\}.
\$
When $S_i > \eta$, then $\etas = \eta$. Therefore $\cL(X_i,\etas) = \cL(X_i,\eta)$, and
\$ 
\hcS^\swap{i}{j}(y) 
= \{\ell \in [m]: \cL(X_{n+\ell}^\swap{i}{j}, \etas) = 1\}
= \{\ell \in [m]: \cL(X_{n+\ell}^\swap{i}{j}, \eta) = 1\}.
\$
Consequently, 
\$ 
V(X_{n+j},y) \le \quant\big(1-\alpha;\{V_i:i \in \tR_{n+j}^{(2)}\}\cup\{\infty\}\big).
\$
With the above, we have completed the proof of Proposition~\ref{prop:conf_pred}.

\subsection{Proof of Proposition~\ref{prop:topK}}
\label{appd:prood_of_topk}
We start by defining an alternative stopping time that is determined solely by $\cD_j^c$: 
\$ 
\tT_\topk = \inf\Big\{t \in \cT: \sum_{\ell \in [m]\backslash\{j\}} 
\ind\{S_{n+\ell} \le t\} \ge m-K \Big\}.
\$
Abusing the notations a bit, we call $T^\swap{i}{j}_\topk$ and $\tT_\topk^\swap{i}{j}$ the 
corresponding stopping times defined with $(\cdc^\swap{i}{j}$, $\cdte^\swap{i}{j})$.
By definition, it is straightfoward to check that $\tT^\swap{i}{j}_\topk = \tT_\topk$.

We claim that  $S_{n+j} > T_\topk$ implies $T_\topk =  \tT_\topk$.
To see this, note first that for any $t \in \cT$,
\$ 
\sum_{\ell \in [m]} \ind\{S_{n+\ell} \le t\} \ge 
\sum_{\ell \in [m]\backslash \{j\} } \ind\{S_{n+\ell} \le t\}
~\Rightarrow ~ \tT_\topk \ge T_\topk. 
\$
On the other hand, one can check that 
\@ \label{eq:topk_ttj}
\sum_{\ell \in [m]\backslash\{j\}} \ind\{S_{n+\ell} \le T_\topk\} \stepa{=} 
\sum_{\ell \in [m]\backslash \{j\}} \ind\{S_{n+\ell} \le T_\topk\} + \ind\{S_{n+j} \le T_\topk\} 
\stepb{\ge}    m-K. 
\@
Above, step (a)  is because $S_{n+j} > T_\topk$ and step (b)
follows 
from the choice of $T_\topk$. With the definition of $\tT_\topk$, we can then conclude 
from~\eqref{eq:topk_ttj} that $T_\topk \ge \tT_\topk$, and  hence $T_\topk = \tT_\topk$. 

We can similarly show that $T^\swap{i}{j}_\topk = \tT_\topk$ when $S_i > T_\topk^\swap{i}{j}$.
By construction, we have $\tT_\topk \ge T_\topk^\swap{i}{j}$. Next, 
\$ 
\sum_{\ell \in [m]\backslash \{j\}} 
\ind\{S_{n+\ell} \le T_\topk^\swap{i}{j}\} 
=\sum_{\ell \in [m]\backslash\{j\}} 
\ind\{S_{n+\ell} \le T_\topk^\swap{i}{j}\} 
+ \ind\{S_i \le T_\topk^\swap{i}{j}\}
\ge m-K.
\$
By the definition of $\tT_\topk$, we then have $T_\topk^\swap{i}{j} \ge \tT_\topk$,
and therefore $T_\topk^\swap{i}{j} = \tT_\topk$.

We now proceed to show that $\hcR_\fixed = \hcR_{n+j}$.
For any $i \in \hcR_\topk$, 
it holds that  $S_i > T_\topk$. Then 
since $T_\topk = \tT_\topk \ge T^\swap{i}{j}_\topk$, we know that 
$S_i > T^\swap{i}{j}_\topk$. 
Therefore, $T^\swap{i}{j}_\topk = \tT_\topk$, and
\$
\hcS^\swap{i}{j} 
& = \big\{\ell \in[m]: S_{n+\ell}^{\swap{i}{j}} > T^\swap{i}{j}_\topk\big\}
 = \big\{\ell \in [m]: S_{n+\ell}^\swap{i}{j} > \tT_\topk \big\}\\
& \stepa{=}  \{j\} \cup\big\{\ell \in [m]\backslash \{j\}: S_{n+\ell} > \tT_\topk \big\} \\
& = \{j\} \cup\big\{\ell \in [m]\backslash \{j\}: S_{n+\ell} > T_\topk \big\} 
= \hcS, 
\$ 
wher step (a) follows from the fact that $S_{n+j}^\swap{i}{j} = S_i$ and $S_i > \tT_\topk$.
As a result, $j\in \hcS^\swap{i}{j}$ and $\hcS^\swap{i}{j} \in \mfS$, i.e., $i \in \hcR_{n+j}$.
On the other hand, for any $i \in \hcR_{n+j}$, it holds that $j \in \hcS^\swap{i}{j}$.
Then,
\@
S_i > T^\swap{i}{j}_\topk ~\Rightarrow~ T^\swap{i}{j}_\topk = \tT_\topk  = T_\topk 
~\Rightarrow ~S_i > T_\topk.
\@  
Combining all the above, we complete the proof.

\subsection{Proof of Proposition~\ref{prop:cq}}
\label{appd:proof_cq}
Fix $j \in \hcS_{\cq}$.
Define the proxy stopping time that is agnostic to the swapping of units $i$ and $n+j$:
\$ 
\tT_{\cq} = \inf\bigg\{t \in \cT: 
\sum_{i\in \cI_j} \ind\{S_i \le t\} \ge qn \bigg\}.
\$
By construction, $\tT_\cq \le T_\cq$. We claim that when $S_{n+j} > \tT_\cq$, $T_\cq = \tT_\cq$. 
To see this, note that 
\@ \label{eq:cq_ttj}
\sum_{i\in \cic} \ind\{S_i \le \tT_\cq\}
= \sum_{i\in \cic} \ind\{S_i \le \tT_\cq\} + \ind\{S_{n+j} \le \tT_\cq\} \ge qn,
\@
where the first step is because $S_{n+j} > \tT_\cq$ and
the second step follows from the definition of $\tT_\cq$. 
By the definition of $T_\cq$,~\eqref{eq:cq_ttj} implies that 
$\tT_\cq \ge T_\cq$ and therefore $T_\cq = \tT_\cq$.

Similarly, we claim that $T^\swap{i}{j}_\cq = \tT_\cq$ when $S_i > \tT_\cq$.
By construction, we have $\tT_\cq \le  T^\swap{i}{j}_\cq$.
When $S_i > \tT_\cq$, 
\$ 
\sum_{\ell \in \cic\backslash\{i\}} \ind\{S_\ell \le \tT_\cq\} + 
\ind\{S_{n+j} \le \tT_\cq\} = \sum_{\ell \in \cic} \ind\{S_\ell \le \tT_\cq\} + 
\ind\{S_{n+j}\le \tT_\cq\}\ge qn, 
\$
with the first step following from $S_i > \tT_\cq$  and 
the second  step from the definition of $\tT_\cq$.
We then have $\tT_\cq \ge T_\cq^\swap{i}{j}$ 
and therefore $T^\swap{i}{j}_\cq = \tT_\cq$. 

Next, we proceed to show that $\hcR_{n+j} = \hcR_\cq$.
For any $i\in \hcR_\cq$, there is $S_i > T_\cq= \tT_\cq$. 
Then, $T^\swap{i}{j}_\cq = \tT_\cq$, and
\$ 
\hcS^\swap{i}{j} & = \{\ell \in [m]: S_{n+\ell}^\swap{i}{j} > T^\swap{i}{j}_\cq\}
{=} \{\ell \in [m]: S_{n+\ell}^\swap{i}{j} > \tT_\cq\}\\
& \stepa{=} \{j\} \cup \{\ell \in [m]\backslash \{j\}: S_{n+\ell} > T_\cq\} = \hcS_\cq,
\$  
where step (a) follows from the fact that $S^\swap{i}{j}_{n+j} = S_i$
and $S_i > \tT_\cq$. The above implies that $j \in  \hcS^\swap{i}{j}$ and 
that 
$\hcS^\swap{i}{j} \in \mfS$. Conversely, for $i \in \cic$ such that $j \in \hcS^\swap{i}{j}$, 
\$ 
S_i > T^\swap{i}{j}_\cq ~\Rightarrow ~
T^\swap{i}{j}_\cq = \tT_\cq = T_\cq  ~\Rightarrow~ S_i > T_\cq,
\$ 
which completes the proof.

\section{Auxiliary results}
\subsection{Connection between selection-conditional coverage and FCR control}

Lemma~\ref{lem:scc_fcr} establishes 
the asympototic equivalence 
between the selection-conditional coverage and the false coverage rate (FCR) 
when the selection rule is asymptotically close to a threshold function. 

\begin{lemma}\label{lem:scc_fcr}
Suppose $\{Z_i\}_{i=1}^{m+n}$ are i.i.d., and there exists some fixed function 
$f: \cX \mapsto \RR$ and fixed constant $\tau \in \RR$ such that the selection set $\hat\cS = \cS(\cD_\calib,\cD_\test)$ 
obeys $ |\hat\cS\Delta \hat\cS^*|/|\hat\cS|  {\to} 0$ in probability as $n,m\rightarrow \infty$,  
where $\hat\cS^* = \{j\colon f(X_{n+j})\geq \tau\}$ and $\PP(f(X_{n+1})\ge \tau)>0$.
Assume that $\hat S$ is equivariant to the permutation of the units in $\cD_\test$ and that 
the prediction set takes the form $\hat{C}_{\alpha,n+j} = \{y\colon V(X_{n+j},y)\leq \hat{q}_{m,n}\}$, 
where $\hat{q}_{m,n} \stackrel{\textnormal{p}}{\to} q^*\in \RR$, as $m,n \to \infty$. 
Assume also that the function $G(t) := \PP(f(X_{n+1})\ge \tau, V_{n+1} > t)$ is continuous in $t$. 
Then,  
\$ 
\lim_{m,n \to \infty} \textnormal{FCR} 
& = \PP\big( V(X_{n+1},Y_{n+1} ) > q^* \given f(X_{n+1})\ge \tau\big)
= \lim_{m,n\to \infty} \PP(Y_{n+1}\notin \hat{C}_{\alpha,n+1}\given  1\in \hat\cS).
\$
That is, the asymptotic FCR coincides with the asymptotic selection-conditional miscoverage.
\end{lemma}
\begin{proof}
By the definition of FCR, we have
\@\label{eq:fcr_decomp}
\textnormal{FCR} & = \EE\bigg[\frac{\sum_{j=1}^m \ind\{j\in \hat \cS, V_{n+j}> \hat q_{m,n}\}}
{1 \vee |\hat \cS| }\bigg] \nonumber \\ 
& = \EE\bigg[\frac{\sum_{j=1}^m \ind\{j\in \hat \cS^*, V_{n+j}> \hat q_{m,n}\}}
{1 \vee |\hat \cS| }\bigg]  \nonumber\\ 
& \qquad  - \EE\bigg[\frac{\sum_{j=1}^m \ind\{j \in \hat \cS^* \setminus \hat \cS, V_{n+j}> \hat q_{m,n}\}}
{1 \vee |\hat \cS| }\bigg] 
+  \EE\bigg[\frac{\sum_{j=1}^m \ind\{j \in \hat \cS \setminus \hat \cS^*, V_{n+j}> \hat q_{m,n}\}}
{1 \vee |\hat \cS| }\bigg]. 
\@
Note that 
\$ 
& \Bigg|- \EE\bigg[\frac{\sum_{j=1}^m \ind\{j \in \hat \cS^* \setminus \hat \cS, V_{n+j}> \hat q_{m,n}\}}
{1 \vee |\hat \cS| }\Bigg] 
+  \EE\bigg[\frac{\sum_{j=1}^m \ind\{j \in \hat \cS \setminus \hat \cS^*, V_{n+j}> \hat q_{m,n}\}}
{1 \vee |\hat \cS| }\bigg]\bigg|\\
\le ~& \EE\bigg[\frac{\sum_{j=1}^m \ind\{j \in \hat \cS^* \setminus \hat \cS, V_{n+j}> \hat q_{m,n}\}}
{1 \vee |\hat \cS| }\bigg] 
+  \EE\bigg[\frac{\sum_{j=1}^m \ind\{j \in \hat \cS \setminus \hat \cS^*, V_{n+j}> \hat q_{m,n}\}}
{1 \vee |\hat \cS| }\bigg]\\
\le ~& \EE\Big[\frac{|\hcS^* \Delta \hcS|}{|\hcS|}\Big],
\$
which goes to zero by assumption and the dominated convergence theorem.
For the first term in Equation~\eqref{eq:fcr_decomp}, we further have
\$ 
& \EE\bigg[\frac{\sum_{j=1}^m \ind\{j\in \hat \cS^*, V_{n+j}> \hat q_{m,n}\}}
{1 \vee |\hat \cS| }\bigg] \\
= ~& \EE\bigg[\frac{\sum_{j=1}^m \ind\{j\in \hat \cS^*, V_{n+j}> \hat q_{m,n}\}}
{1 \vee |\hat \cS^*| } \frac{|\hat \cS^*|\vee 1}{|\hat \cS| \vee 1}\bigg] \\
= ~& \EE\bigg[\frac{\sum_{j=1}^m \ind\{j\in \hat \cS^*, V_{n+j}> \hat q_{m,n}\}}
{1 \vee |\hat \cS^*| }\bigg] + \EE\bigg[\frac{|\hat \cS^*| \vee 1 - |\hat \cS| \vee 1}
{1 \vee |\hat \cS| } \frac{\sum_{j=1}^m \ind\{j\in \hat \cS^*, V_{n+j}> \hat q_{m,n}\}}
{1 \vee |\hat \cS^*| }\bigg]. 
\$
The absolute value of the second term above is bounded by $\EE[|\hat \cS^* \Delta \hat \cS|/|\hat \cS|]$,
and therefore goes to zero by assumption. Next, by the strong law of large numbers,
\$ 
\frac{1 \vee |\hat \cS^*|}{m} = \frac{1}{m} \vee \frac{1}{m}\sum_{j=1}^m \ind\{f(X_{n+j}) \ge \tau\}
\to \PP(f(X_{n+1}) \ge \tau),\text{ a.s.}
\$
Meanwhile, 
\$ 
\bigg|\frac{1}{m}\sum_{j=1}^m \ind\{j\in \hat \cS^*, V_{n+j}> \hat q_{m,n}\}
- G(\hat q_{m,n})\bigg|
& \le 
\sup_{t\in \RR} ~\bigg|\frac{1}{m}\sum_{j=1}^m \ind\{j\in \hat \cS^*, V_{n+j}> t\}
- G(t)\bigg|\\ 
& = \sup_{t \in\RR} ~ \bigg|\frac{1}{m}\sum_{j=1}^m \ind\{f(X_{n+j}) \ge \tau, V_{n+j}> t\} - G(t)\bigg|
\to 0, \text{ a.s.}
\$
The last step follows from the uniform law of large numbers.
Since $G(t)$ is continuous in $t$ and that $\hat q_{m,n} \stackrel{p}{\to} q^*$, 
there is 
\@\label{eq:asymp_11}
\frac{1}{m}\sum_{j=1}^m \ind\{j\in \hat \cS^*, V_{n+j}> \hat q_{m,n}\}
\stackrel{\text{p}}{\to} G(q^*). 
\@
Then, applying the continuous mapping theorem and the dominated convergence theorem yields 
\$ 
 \EE\bigg[\frac{\sum_{j=1}^m \ind\{j\in \hat \cS^*, V_{n+j}> \hat q_{m,n}\}}
{1 \vee |\hat \cS^*| }\bigg]  \rightarrow \PP\big( V(X_{n+1},Y_{n+1} ) > q^* \given f(X_{n+1})\ge \tau\big).
\$
This establishes the expression of the asymptotic FCR. 
In addition, due to the exchangeability of $\{Z_{n+j}\}_{j=1}^m$
and that $\hat S$ is equivariant to the permutation of the test points, we have 
\$ 
\PP(Y_{n+1}\notin \hat{C}_{\alpha,n+1}\given  1\in \hat\cS)
&= \frac{\frac{1}{m}\sum_{j=1}^m \PP(Y_{n+j}\notin \hat{C}_{\alpha,n+j}, j\in \hat\cS)}{\frac{1}{m}\sum_{j=1}^m \PP( j\in \hat\cS)} \\ 
&= \frac{ \sum_{j=1}^m\EE[ \ind\{V(X_{n+j},Y_{n+j})> \hat{q}_{m,n}, j\in \hat\cS\}]}{ \sum_{j=1}^m \EE[\ind\{ j\in \hat\cS\}]}
\$
For the numerator, we have 
\$ 
& \Big|\frac{1}{m}\sum_{j=1}^m\EE\big[ \ind\{V_{n+j} > \hat{q}_{m,n}, j\in \hat\cS\}\big] 
- \frac{1}{m}\sum_{j=1}^m\EE \big[ \ind\{V_{n+j} > \hat{q}_{m,n}, j\in \hat\cS^*\}\big]\Big| \\
&\le \frac{1}{m} \EE\big[|\hcS \Delta \hcS^*|\big] 
= \EE\bigg[\frac{|\hcS \Delta \hcS^*|}{|\hcS^*|} \frac{|\hcS^*|}{m}\bigg]
\to 0 
\$
where the last step is because 
$|\hat\cS\Delta \hat\cS^*|/|\hat\cS|  {\to} 0$ in probability and the dominated convergence theorem.
Using~\eqref{eq:asymp_11}, we then have 
\$ 
\frac{1}{m}\sum_{j=1}^m\EE\big[ \ind\{V_{n+j} > \hat{q}_{m,n}, j\in \hat\cS\}\big] 
\to G(q^*).
\$ 
Similarly, for the denominator,  
\$  
\frac{1}{m}\Big| \sum_{j=1}^m \EE[\ind\{ j\in \hat\cS\}]  -\frac{1}{m}\sum_{j=1}^m \EE[\ind\{j \in \hcS^*\}]\Big| 
\le \frac{1}{m} \EE\bigg[\frac{|\hcS \Delta \hcS^*|}{|\hcS^*|} |\hcS^*|\bigg]
\to 0.
\$
Therefore, we obtain the same expression for the asymptotic selection-conditional coverage. 
\end{proof}

\subsection{The conformal p-values}
\label{subsec:conf_pval}
Recall that in Section~\ref{sec:conf_pval} of the main text, we introduce the conformal p-value for 
each $j \in [m]$ as 
\$
p_j = \frac{1+\sum_{i \in \cic} \ind\{Y_i\le c_i, \hS_i \ge \hS_{n+j}\}}{n+1},
\$ 
where $\hat{S}_i = S(X_i,c_i)$, and $S\colon \cX\times\RR\to \RR$ is 
the pre-specified score function. 
We can construct sharper p-values by introducing extra randomness. Letting 
$U_{n+1}, U_{n+2},\ldots,U_{n+m}$ be i.i.d.~uniform random variables on $[0,1]$ that 
are independent of everything else, we can construct the randomized conformal p-values as
\$ 
p_j^\rand = \frac{U_{n+j}(1+\sum_{i\in \cic}\ind\{\hS_i = \hS_{n+j},Y_i \le c_i\} 
+ \sum_{i \in \cic}\ind\{\hS_i > \hS_{n+j}, Y_i \le c_i\})}{n+1}. 
\$
The conformal p-values and 
their random counterparts defined in~\cite{jin2022selection} are 
\$ 
& \bar{p}_j = \frac{1+\sum_{i \in \cic}\ind\{S_i \ge \hS_{n+j}\}}{n+1}\text{ and }\\
& \bar{p}^\rand_j = \frac{U_{n+j}(1+\sum_{i\in\cic}\ind\{S_i = \hS_{n+j}\})+
\sum_{i \in \cic}\ind\{S_i > \hS_{n+j}\}}{n+1}.
\$

The validity of $p_j$ and $p_j^\rand$, as well as their relation to $\bar{p}_j$ and 
$\bar{p}_j^\rand$, is established in the following lemma.

\vspace{0.5em}
\begin{lemma}\label{lem:pvals_compare}
Under the condition of Theorem~\ref{thm:PI_cond_set}, 
for any $j\in[m]$, the conformal p-values $p_j$ and $p_j^\rand$ 
defined in (15) satisfy that 
\$ 
& \PP\big(p_j \le t, Y_{n+j} \le c_{n+j}\big) \le t, \quad \text{and} \quad 
\PP\big(p_j^\rand \le t, Y_{n+j} \le c_{n+j}\big) \le t, \quad \forall t\in[0,1].
\$ 

Furthermore, there is  $p_j \le \bar{p}_j$ and 
$p_j^\rand \le \bar{p}_j^\rand$ deterministically.
If the selection score $S(x,y) = \hat{\mu}(X) - \big(\sup_x 2|\hat{\mu}(x)|\big)\cdot y$, 
where $\hat{\mu}$ is regression function fitted on $\cdc$, then 
the equalities hold.
\end{lemma}

\begin{proof}[Proof of Lemma~\ref{lem:pvals_compare}]
For any $j\in[m]$ and $t \in (0,1)$, it can be checked that 
when $Y_{n+j} \le c_{n+j}$, 
\$
p_j\le t \iff \hcS_{n+j} \text{ is among the }k^*\text{-th largest element in }
\cN_j := \{\hS_i: i\in \cI_j, Y_i \le c_i\},
\$
where $k^* = \min(|\cN_j|, \lfloor t(n+1) \rfloor)$. 
Let $T_{k^*}$ denote the value of the $k^*$-th largest element in $\cN_j$. 
Apparently, both $k^*$ and $T_{k^*}$ are invariant 
to the permutation on $\cI_j$. Consequently,  
\$ 
\PP(p_j \le t, Y_{n+j} \le c_{n+j}) & = 
\EE\big[\ind\{Y_{n+j} \le c_{n+j}, \hS_{n+j} \ge T_{k^*} \}\big]
\stepa{=} \frac{1}{n+1}\sum_{i\in \cI_j} \EE\Big[\ind\{Y_i \le c_i, \hS_i \ge T_{k^*}\}\Big]\\
& = \frac{1}{n+1} \EE\Big[\sum_{i \in \cN_j} \ind\{\hS_i \ge T_{k^*}\}\Big] 
\stepb{\le} \frac{k^*}{n+1} \le t, 
\$
where step (a) is due to the exchangeability of $\cD_j$ and step (b) follows from 
the fact that the number of elements greater than or equal to $T_{k^*}$ in $\cN_j$ is at most $k^*$.

Next, we consider the randomized version. For any $i \in \cI_j$,  let $E_i$, $G_i$ and 
$G^+_i$ denote the number of elements in $\cN_j$ equal to, no less than, and greater than 
$\hS_i$, respectively, i.e.,
\$ 
E_i \,:=\, \sum_{\ell \in \cI_j} \ind\{\hS_i =\hS_\ell, Y_\ell \le c_\ell\},
~ G_i \,:=\, \sum_{\ell \in \cI_j} \ind \{\hS_i \le \hS_\ell, Y_\ell \le c_\ell\},
~ G_i^+ \,:=\, \sum_{\ell \in \cI_j} \ind \{\hS_i < \hS_\ell, Y_\ell \le c_\ell\}.
\$
Using these notations, we can write  
\@ \label{eq:pval_1}
& \PP\big(p_j^\rand \le t, Y_{n+j} \le c_{n+j}\big) = 
\PP\big(U_{n+j} E_{n+j} + G_{n+j}^+ \le (n+1)t, Y_{n+j}\le c_{n+j} \big) \notag\\
=~& \frac{1}{n+1}\sum_{i\in\cI_j} 
\PP\big(U_{n+j}E_i + G_i^+ \le (n+1)t, Y_i \le c_i\big) \notag \\
=~& \frac{1}{n+1}\EE\bigg[\sum_{i\in \cN_j} 
\ind\Big\{U_{n+j}E_i + G_i^+ \le (n+1)t \Big\}\bigg],
\@ 
where the first step follows from exchangeability.
Let $i^*$ denote the index of the largest element in $\{\hS_i:i \in \cN_j, G_i^+ < (n+1)t\}$ 
(with ties broken arbitrarily). Then
\$
\eqref{eq:pval_1}=~& \frac{1}{n+1}\EE\Bigg[\ind\big\{|\cN_j| \le (n+1)t \big\} \cdot |\cN_j| 
+ \ind\{|\cN_j| > (n+1)t\}\cdot  \bigg(G_{i^*}^+  + E_{i^*} \cdot \ind
\Big\{U_{n+j} \le \frac{(n+1)t- G_{i^*}^+}{E_{i^*}}\Big\}\bigg) \Bigg]\\
\le ~& \frac{1}{n+1}\EE\bigg[\ind\Big\{|\cN_j| \le (n+1)t \big\} (n+1)t
+ \ind\{|\cN_j| > (n+1)t\}\cdot (n+1)t \Bigg] 
=  t.
\$
We have therefore proved the validity of $p_j$'s and $p_j^\rand$'s.

Next, we have 
\$ 
p_j = \frac{1+\sum_{i\in \cic}\ind\{Y_i \le c_i, \hS_i \ge \hS_{n+j}\}}{n+1} 
& \stepa{\le} \frac{1+\sum_{i\in \cic}\ind\{Y_i \le c_i, S_{i} \ge \hS_{n+j}\}}{n+1} \\
& \stepb{\le} \frac{1+\sum_{i\in \cic}\ind\{S_i \ge \hS_{n+j}\}}{n+1}  = \bar{p}_j.
\$
Above, the step (a) is because when $Y_i \le c_i$, $S_i \ge \hS_i$; 
when $S(x,y) = \hat{\mu}(x) - (\sup_x 2|\hat{\mu}(x)|) \cdot\ind\{y>c\}$, 
the ``$\le$'' is replaced by ``$=$'' in step (a) and (b).
A similar argument leads to that results corresponding to $p_j^\rand$. 
\end{proof}

\subsection{Additional results on the conformal p-value-based procedures}
\label{app:equiv}
Below, Lemma~\ref{lem:fixed_pval} and Lemma~\ref{lem:bf_ref} show 
that that fixed-threshold and BH-threshold selection rules are 
special instances of the general selection rule described in Section~\ref{sec:conf_pval} of the main text.
\label{subsec:conf_pval_procedures}
\begin{lemma}\label{lem:fixed_pval}
Let $\hcS_\fixed = \{j\in[m]: p_j\le q\}$. Then 
$\hcS_\fixed$ is equivalent to $\tS_\fixed = \{j\in [m]: \hcS_{n+j} \ge T\}$, where 
\$
T = \inf\bigg\{t \in \cT: 1 + \sum_{i\in \cic}\ind\{\hS_i \ge t,Y_i \le c_i\}
\ge q(n+1) \bigg\}.
\$ 
\end{lemma}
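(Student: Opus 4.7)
The plan is to translate the p-value threshold event into a threshold event on $\hS_{n+j}$ and identify the resulting cutoff with $T$. Introduce the counting function
\$
M(t) \,:=\, 1 + \sum_{i\in \cic}\ind\{\hS_i \ge t,\, Y_i \le c_i\},
\$
which depends only on the calibration data, is non-increasing in $t$, and is a step function whose breakpoints lie in $\cT$. By the definition~\eqref{eq:iid_conf_pval} of the conformal p-value, for every $j\in[m]$ the equivalence $p_j \le q \Leftrightarrow M(\hS_{n+j}) \le q(n+1)$ is immediate after multiplying through by $n+1$.

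The next step is to turn this into a threshold statement. Because $M$ is non-increasing, the level set $\{t : M(t) \le q(n+1)\}$ is an upward ray of the real line; intersecting with $\cT$ yields an up-set of a finite set, so its infimum is attained whenever the set is non-empty. I will read the threshold $T$ in the lemma as this infimum (the natural reading consistent with~\eqref{eq:fixed_pval_T} at $(k,\ell)=(0,1)$). The forward direction $\hS_{n+j} \ge T \Rightarrow p_j \le q$ then follows from monotonicity, since $M(\hS_{n+j}) \le M(T) \le q(n+1)$. For the converse, $\hS_{n+j}\in\cT$, so if $M(\hS_{n+j}) \le q(n+1)$ then $\hS_{n+j}$ belongs to the set defining $T$, whence $\hS_{n+j}\ge T$. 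Quantifying over $j\in[m]$ concludes the equality of the two selection sets.

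I do not foresee a substantive obstacle: the argument is routine monotone-threshold bookkeeping. The only points requiring care are (i) tracking the direction of the inequality --- since $M$ decreases in $t$, small p-values correspond to large $\hS_{n+j}$, which is why the equivalent selection rule reads $\hS_{n+j}\ge T$ rather than $\hS_{n+j}\le T$; and (ii) degeneracies such as ties among the scores, or the set $\{t\in\cT : M(t)\le q(n+1)\}$ being empty. The first is handled automatically by restricting the infimum to the finite set $\cT$, so that $T\in\cT$ and the comparison $\hS_{n+j}\ge T$ is unambiguous; the second corresponds exactly to $\hcS_\fixed = \varnothing$, in which case $\tS_\fixed$ is also empty under the usual $\inf\varnothing = +\infty$ convention.
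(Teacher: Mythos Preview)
Your proposal is correct and follows essentially the same argument as the paper: both directions amount to the monotone-threshold observation that $p_j \le q \iff M(\hS_{n+j}) \le q(n+1)$, and then identifying the cutoff with the infimum over $\cT$. Your explicit introduction of $M(t)$ and the handling of the empty case via $\inf\varnothing = +\infty$ are minor presentational improvements, and you are right to read the inequality in the definition of $T$ as $\le$ (consistent with~\eqref{eq:fixed_pval_T}) rather than the $\ge$ appearing in the lemma statement, which is a typo.
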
 

\begin{proof}[Proof of Lemma~\ref{lem:fixed_pval}]
Suppose $j \in \hcS_\fixed$, then $p_j \le q$. 
By the definition of $p_j$, 
\$ 
\frac{1}{n+1}
\Big(1 + \sum_{i\in \cic}\ind\{\hS_i \ge \hS_{n+j},Y_i \le c_i\}\Big) = p_j \le q.
\$
By the definition of $T$, $\hS_{n+j} \ge T$, and therefore $j \in \tS_\fixed$. 
Conversely, if $\hS_{n+j} \ge T$, 
then 
\$ 
p_j = \frac{1}{n+1}\Big(1 + \sum_{i\in \cic}\ind\{\hS_i \ge \hS_{n+j},Y_i \le c_i\}\Big) 
\le \Big(1 + \sum_{i\in \cic}\ind\{\hS_i \ge T,Y_i \le c_i\}\Big) \le q,
\$
where the last step is by the definition of $T$. As a result, $j \in \hcS_\fixed$.   
We then have $\hcS_\fixed = \tS_\fixed$. 
\end{proof}

\begin{lemma}\label{lem:bf_ref}
Let $\hcS_\bh$ denote the selection set obtained by the BH procedure applied to the 
conformal p-values. 
Then $\hcS_\bh$ is equivalent to $\tS_\bh := \{j\in [m]: \hS_{n+j} 
\ge T^\bh\}$, where 
\$
T^\bh = \inf\bigg\{t \in \cT: \frac{1 + \sum_{i \in \cic}\{\hS_i \ge t, Y_i \le c_i\}}
{(\sum_{j\in [m]} \ind\{\hS_{n+j} \ge t\})\vee 1}\frac{m}{1+n} \le q\bigg\}.
\$
\end{lemma}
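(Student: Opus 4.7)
The plan is to identify both $\hcS_\bh$ and $\tS_\bh$ as level sets of a single bookkeeping function. Define $U(t) := 1 + \sum_{i\in\cic} \ind\{\hS_i \ge t,\, Y_i \le c_i\}$, $R(t) := \sum_{j\in[m]} \ind\{\hS_{n+j} \ge t\}$, and $\phi(t) := \frac{U(t)}{R(t)\vee 1}\cdot\frac{m}{n+1}$, so that the definition of $T^\bh$ reads $T^\bh = \inf\{t\in\cT : \phi(t)\le q\}$. The bridge between the two formulations is the identity $U(\hS_{n+j}) = (n+1)\, p_j$, which is a restatement of~\eqref{eq:iid_conf_pval}, together with $R(\hS_{n+j}) = r_j$, where $r_j$ denotes the rank of $\hS_{n+j}$ from the top among the test scores. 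Consequently, $\phi(\hS_{n+j}) = m p_j / r_j$, so the values of $\phi$ at test scores are precisely the quantities that the BH step-up procedure compares against $q$.

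Given this bridge, I would prove the equivalence by mutual inclusion. For $\hcS_\bh\subseteq \tS_\bh$: if $\hcS_\bh = \varnothing$ the claim is trivial, so assume $k^*\ge 1$ and fix $j\in\hcS_\bh$. Let $t^\star := \min_{\ell\in\hcS_\bh}\hS_{n+\ell}$. Because BH selects the units with the $k^*$ smallest p-values---which coincide with the $k^*$ largest scores---one has $R(t^\star)=k^*$ and $U(t^\star)/(n+1)=p_{(k^*)}$, and the defining property $p_{(k^*)}\le k^*q/m$ yields $\phi(t^\star)\le q$. Hence $T^\bh\le t^\star$ and $\hS_{n+j}\ge t^\star\ge T^\bh$. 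For $\tS_\bh\subseteq\hcS_\bh$: suppose $\hS_{n+j}\ge T^\bh$, so $T^\bh<+\infty$ and the infimum is attained at some $t_\star\in\cT$ with $\phi(t_\star)\le q$; moreover $R(t_\star)\ge 1$ because $\hS_{n+j}$ itself is a test score $\ge t_\star$. Writing $k := R(t_\star)$, every one of the $k$ units $\ell$ with $\hS_{n+\ell}\ge t_\star$ satisfies $p_\ell\le U(t_\star)/(n+1)\le kq/m$, so $p_{(k)}\le kq/m$ and hence $k^*\ge k$. Consequently $p_j\le U(t_\star)/(n+1)\le kq/m\le k^*q/m$, so $j\in\hcS_\bh$.

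The main obstacle is handling ties among the scores $\{\hS_i\}_{i\in[n+m]}$, which complicate the identification of the rank $r_j$ and the matching of the ``$k^*$ largest test scores'' with the ``$k^*$ smallest p-values.'' This can be resolved by fixing a consistent tie-breaking rule (e.g., by index) applied uniformly to both procedures, or by rewriting the inclusion arguments with $\ge$-type comparisons that remain valid regardless of how ties are broken. A second mild subtlety is the degenerate case where $T^\bh=+\infty$ or $T^\bh$ is attained at some $t_\star$ with $R(t_\star)=0$: both correspond to $\tS_\bh=\varnothing$, and a contradiction with the forward-direction argument above (which would produce a test score witness with $R\ge 1$) forces $k^*=0$, so the equivalence degenerates consistently on both sides.
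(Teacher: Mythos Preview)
Your proposal is correct and follows essentially the same route as the paper. The paper phrases the argument as showing $|\tS_\bh|=|\hcS_\bh|$ (writing $j^*$ for the former and $k^*$ for the latter), while you phrase it as mutual inclusion; since both sets are upper-level sets of the same score vector, these two framings are equivalent, and the underlying arithmetic---identifying $\phi(\hS_{n+(k)})=mp_{(k)}/k$ and exploiting monotonicity of $U$ and $R$---is identical in both. Your treatment of the degenerate case $k^*=0$ via the forward inclusion is slightly more economical than the paper's separate case analysis, and your worry about ties is in fact a non-issue: the maximality of $k^*$ forces $p_{(k^*)}<p_{(k^*+1)}$, so $R(t^\star)=k^*$ exactly, and your backward direction never invokes ranks and is therefore automatically tie-robust.
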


\begin{proof}[Proof of Lemma~\ref{lem:bf_ref}]
Let $j^* = \sum^m_{j=1}\ind\{\hS_{n+j} \ge T^\bh\}$ and
$\hS_{n+(1)},\hS_{n+(2)},\ldots,\hS_{n+(m)}$ denote the order statistics of
$\{\hS_{n+j}\}_{j\in [m]}$ in a descending order. It then suffices to 
show that $j^* = k^*$.
When $k^*\ge 1$, 
\@ \label{eq:bh_equivalence}
\frac{1 + \sum_{i \in \cic}\{\hS_i \ge \hS_{n+(k^*)}, Y_i \le c_i\}}
{(\sum_{j\in [m]} \ind\{\hS_{n+j} \ge \hS_{n+(k^*)}\})\vee 1}\frac{m}{1+n}
= p_{(k^*)} \frac{m}{k^*} \le q,
\@
where the last inequality follows from the definition of $k^*$.
From the definition of $T^\bh$, we can conclude that
$\hS_{n+(k^*)} \ge T^\bh$; since $\hS_{n+(j^*)}$ is the smallest 
element in $\{\hS_{n+j}\}_{j\in[m]}$ that is no less than $T^\bh$, 
we also have that $\hS_{n+(k^*)} \ge 
\hS_{n+(j^*)}\ge T^\bh$. Conversely, 
\$ 
p_{(j^*)} \frac{m}{j^*} & \stepa{=} 
\frac{1+ \sum_{i\in \cic} \ind\{\hS_i \ge \hS_{n+(j^*)}, Y_i \le c_i\}}{n+1} 
\cdot \frac{m}{\sum^m_{j=1}\ind\{\hS_{n+j} \ge T^\bh\}} \\
& \stepb{\le} 
\frac{1+ \sum_{i\in \cic} \ind\{\hS_i \ge T^\bh, Y_i \le c_i\}}
{\sum^m_{j=1}\ind\{\hS_{n+j} \ge T^\bh\}}
\cdot \frac{m}{n+1} \stepc{\le} q.
\$
Above, step (a) uses the definition of $j^*$, step (b)
is because $\hS_{n+(j^*)} \ge T^\bh$  and step (c)
is by the definition of $T^\bh$. From the form of 
the BH procedure, we have that $j^* \ge k^*$, and therefore $j^* = k^*$.

When $k^* = 0$, for any $t\in \cT$, we let $j(t) = \sum^m_{j=1}\ind\{\hS_{n+j}\ge t\}$.
If $j(t)>0$, then
\$ 
& \frac{1+ \sum_{i\in \cic} \ind\{\hS_i \ge t,Y_i \le c_i\}}{(\sum^m_{j=1}\ind\{\hS_{n+j} \ge t\}) \vee 1}
\frac{m}{n+1} 
 =   \frac{1+ \sum_{i\in \cic} \ind\{\hS_i \ge t, Y_i \le c_i\}}{j(t)} \frac{m}{n+1}\\
\ge~& \frac{1+ \sum_{i\in \cic} \ind\{\hS_i \ge \hS_{n+(j(t))},Y_i \le c_i\}}{j(t)} \frac{m}{n+1}
= p_{(j(t))} \cdot \frac{m}{j(t)}> q,
\$ 
where the last step is because $k^* = 0$. The above implies that 
$T^\bh > \hS_{n+(1)} \Rightarrow j^* = 0$.
The equivalence is therefore established.
\end{proof}

\subsection{Prediction set construction by further sample splitting} 
\label{app:splitting}
In this section, we discuss the possibility of simplifying the 
prediction set construction by further splitting the calibration 
set. Since we are primarily interested in the case where 
we do not have the flexibiility to change the selection rule, 
it turns out that---in most of the cases---the sample-splitting 
idea does not help with finding the exchangeable reference set  
without substantially changing the nature of selection. 

There is one class of selection rules, however, that can 
benefit from the sample-splitting idea with only minor modifications 
to the original selection rule.
To be concrete, consider the selection rule such that 
$\ind\{j \in \hat \cS\} = f(\cD_\calib, X_{n+j})$ for 
some function $f$, i.e., whether $j$ is selected depends solely on 
the calibration set $\cD_\calib$ and the feature $X_{n+j}$. 
Examples of such selection rules include thresholding conformal 
p-values with a fixed threshold, and selecting units based on properties of the 
preliminary prediction set.

For such selection rules, we could split the calibration set 
into two parts, $\cD_\calib = \cD_\calib^{(1)} \cup \cD_\calib^{(2)}$.
If we allow ourselves to determine the selection set via 
$\ind\{j \in \hat \cS\} = f(\cD_\calib^{(1)}, X_{n+j})$, 
then we can construct the prediction set as 
\@\label{eq:splitting_ps}
\begin{split}
& \hC_{\alpha,n+j} = \Big\{y\in \cY: V(X_{n+j},y) \le
\quant\big(\{V_i\}_{i \in \hcR}\cup\{\infty\}, 1-\alpha \big)\Big\},\\
& \text{where }\hcR := \{Z_i \in \cD_\calib^{(2)}: f(\cD_\calib^{(1)}, X_i) = 1\}.
\end{split}
\@
The following proposition shows that the prediction set constructed 
in~\eqref{eq:splitting_ps} is a valid $(1-\alpha)$ selection-conditional 
prediction set.
\begin{proposition}
Under the condition of Theorem~\ref{thm:PI_cond_set}, and suppose that for $\forall j \in[m]$, 
the selection rule is such that $\ind\{\ell \in \hat \cS\} = f(\cD_\calib, X_{n+\ell})$
for some function $f$. Then the prediction set $\hC_{\alpha,n+j}$ 
constructed in~\eqref{eq:splitting_ps} is a valid $(1-\alpha)$ prediction set.
\end{proposition}
\begin{proof}

Treating $\cD_\calib^{(2)}$ as the calibration set in Theorem~\ref{thm:PI_cond_set},
it is straight forward to check that $\hcR_{n+j}(y) = \hcR$, and therefore 
by Theorem~\ref{thm:PI_cond_set}, the prediction set $\hC_{\alpha,n+j}$ 
is a valid $(1-\alpha)$ prediction set.
\end{proof}

\subsection{Analysis of calibration sample size}
A practical concern of our proposed method is that accounting for the 
selection event may reduce the number of calibration samples, which 
has a direct impact on the size and stability of the prediction set.
The degree of effect is high dependent on the selection rule, which 
determines the fundamental difficulty of the problem: for example, 
if the selection rule picks out only the test unit with the largest 
predicted value, then conditional inference essentially becomes 
inference on the tail of a distribution; it is not surprising to 
see that only a limited number of calibration samples can be obtained. 

In this section, we provide an asymptotic analysis of 
a class of selection rules that are relatively ``mild''. To be 
concrete, we consider the selection rule yields a reference set 
in the form of $\{i\in [n]: h(X_i) \ge \hat q\}$, where $h$ is a function
of the features and $\hat q$ is a threshold that potentially dependent 
on $(\cD_\calib,\cD_\test)$ but converges to a constant as $n,m$ goes to infinity.
The following proposition characterizes the size of such reference set 
asymptotically.

\begin{proposition}
Suppose the reference set $\hcR = \{i \in [n]: h(X_i) \ge \hat q\}$, 
where $h$ is a fixed function of the features and $\hat q$ is a data-dependent 
threshold such that $\lim_{n,m \rightarrow} \hat q = q^*$ a.s. For $\forall t \in \RR$, let 
$G(t) = \PP(h(X) \ge t)$. Assume that $G(t)$ is continuous in $t$. 
Then the size of the reference set satisfies that 
\$ 
\lim_{n,m \rightarrow \infty} \frac{1}{n}|\hcR| = G(q^*) \text{ a.s.}
\$
\end{proposition}

\begin{proof}
By definition, we can write $|\hcR| = \sum_{i=1}^n \ind\{h(X_i) \ge \hat q\}$. 
By the strong law of large numbers, we have that 
\$
\Big|\frac{1}{n} \sum^n_{i=1} \ind\{h(X_i) \ge \hat q\} - G(\hat q)\Big| 
\le \sup_{t \in \RR}
\Big|\frac{1}{n} \sum^n_{i=1} \ind\{h(X_i) \ge t\} - G(t)\Big| 
\rightarrow 0 \text{ a.s.}
\$
Since $\lim_{n,m \rightarrow \infty} \hat q = q^*$,
and $G(t)$ is continuous in $t$, we have by the continuous mapping theorem that 
$\lim_{n,m \rightarrow \infty} |G(\hat q) - G(q^*)| = 0$.
By triangular inequality, we complete the proof. 
\end{proof}

Many selection rules we have considered in this paper fall into the class.
For example, for the quantile based selection rule, the reference set 
is $\{i\in [n]: S(X_i) \ge \hat q\}$, where $\hat q$ is either the 
joint quantile or the calibration quantile, which converges to the 
popuplation quantile as $n,m$ goes to infinity.  
The class of conformal p-value-based selection rules considered 
in Section~\ref{sec:conf_pval} can also be characterized 
by the proposition. To see this, consider the fixed threshold selection rule 
or the BH-threshold selection rule.  For simplicity 
suppose that $y \le c_{n+j}$ and $\mfS = 2^{[m]}$, 
then 
\$
\hcR_{n+j}(y) = \big\{i\in [n]: Y_i \le c_i, \hS_i \ge \tau^{(j)}(1,0) \big\}
\cup  \big\{i\in [n]: Y_i > c_i, \hS_i \ge \tau^{(j)}(1,1) \big\}.
\$ 
Above, $\tau^{(j)}(1,0)$ and $\tau^{(j)}(1,1)$ converge to the same constant 
for both variants (c.f.~\citet{jin2022selection}).

\end{document}